\documentclass{amsart}
\usepackage[dvips]{epsfig}
\usepackage{graphicx}
\usepackage{latexsym}
\usepackage{amsmath}
\usepackage{amsthm}
\usepackage{amssymb}
\RequirePackage[colorlinks=true,citecolor=blue,urlcolor=blue]{hyperref}

\usepackage{xcolor}


 \usepackage[applemac]{inputenc}

\setlength{\oddsidemargin}{.5cm} 
\setlength{\evensidemargin}{.5cm}
\setlength{\textwidth}{15cm} 
\setlength{\textheight}{20cm}
\setlength{\topmargin}{1cm}
\newtheorem{thm}{Theorem}
\newtheorem{assumption}[thm]{Assumption}
\newtheorem{lem}[thm]{Lemma}

\newtheorem{defi}[thm]{Definition}

\newtheorem{prop}[thm]{Proposition}


\newcommand{\vip}{\vskip.2cm}
\newcommand{\R}{{\mathbb{R}}}

\newcommand{\E}{\mathbb{E}}
\newcommand{\COMMENTAIRE}[1]{}
\newcommand{\PP}{{\mathbb{P}}}


\begin{document}

\title[Nonparametric estimation in age dependent branching processes]{Nonparametric estimation of the division rate of an age dependent branching process}

\author{Marc Hoffmann and Ad\'ela\"ide Olivier}

\address{Marc Hoffmann, CEREMADE, CNRS-UMR 7534,
Universit\'e Paris-Dauphine, Place du mar\'echal De Lattre de Tassigny
75775 Paris Cedex 16, France.}

\email{hoffmann@ceremade.dauphine.fr}

\address{Ad\'ela\"ide Olivier, CEREMADE, CNRS-UMR 7534, Universit\'e Paris-Dauphine, Place du mar\'echal De Lattre de Tassigny
75775 Paris Cedex 16, France.}

\email{olivier@ceremade.dauphine.fr}

\begin{abstract} We study the nonparametric estimation of the branching rate $B(x)$ of a supercritical Bellman-Harris population: a particle with age $x$ has a random lifetime governed by $B(x)$; at its death time, it gives rise to $k \geq 2$ children with lifetimes governed by the same division rate and so on. We observe in continuous time the process over $[0,T]$. Asymptotics are taken as $T \rightarrow \infty$; the data are stochastically dependent and one has to face simultaneously censoring, bias selection and non-ancillarity of the number of observations.
In this setting, under appropriate ergodicity properties, we construct a kernel-based estimator of $B(x)$ that achieves the rate of convergence $\exp(-\lambda_B \frac{\beta}{2\beta+1}T)$, where $\lambda_B$ is the Malthus parameter and $\beta >0$ is the smoothness of the function $B(x)$ in a vicinity of $x$. We prove that this rate is optimal in a minimax sense and we relate it explicitly to classical nonparametric models such as density estimation observed on an appropriate (parameter dependent) scale. We also shed some light on the fact that estimation with  kernel estimators based on data alive at time $T$ only is not sufficient to obtain optimal rates of convergence, a phenomenon which is specific to nonparametric estimation and that has been observed in other related growth-fragmentation models.  
\end{abstract}

\maketitle

\textbf{Mathematics Subject Classification (2010)}: 35A05, 35B40, 45C05, 45K05, 82D60, 92D25, 62G05, 62G20.

\textbf{Keywords}: Growth-fragmentation, cell division, nonparametric estimation,  bias selection, minimax rates of convergence, Bellman-Harris processes.

\section{\textsc{Introduction}}

\subsection{Motivation}

Structured models have been paid particular attention over the last few years, both from a probabilistic and an applied analysis angle, in particular with a view toward a better understanding of population evolution in mathematical biology (see for instance the textbook by Perthame~\cite{Perthame} and the references therein). In this context, a more specific focus and need for statistical methods has emerged recently ({\it e.g.} Doumic {\it et al.}~\cite{DPZ, DHRR, DHKR1} and the references therein) and this is the topic of the present paper. If $x$ denotes a so-called structuring variable -- for instance age, size, any measure of variability or DNA content of a cell or bacteria,  and if $n(t,x)$ denotes the number or density of cells at time $t$ of a population starting from a single ancestor at time $t=0$, a sound mathematical model can be obtained by specifying an evolution equation for $n(t,x)$. 

\vip

Consider for instance the paradigmatic problem of age-dependent cell division, where the evolution of $n(t,x)$ is given by the simplest transport-fragmentation equation 
\begin{equation} \label{det description}
\left\{
\begin{array}{l}
\displaystyle \frac{\partial}{\partial t}n(t,x) + \frac{\partial }{\partial x} n(t,x) + B(x)n(t,x) = 0 \\ \\
n(t,0) = m\int_0^\infty B(y)n(t,y)dy,\;\;t>0,\;\;n(0,x) = \delta_0,
\end{array}
\right.
\end{equation}
where $\delta_0$ denotes the Dirac mass at point $0$. In this model, each cell dies according to a division rate $x \leadsto B(x)$ that depends on its age $x$ only (a living cell of age $x$ has probability $B(x)dx$ of dying in the interval $[x, x+dx]$) and, at its time of death, it gives rise to $m \geq 2$ children at its time of death. The parameters $(m,B)$ specify the so-called age-dependent model. 

\vip

In this seemingly simple context, we wish to draw statistical inference on the division rate function $x \leadsto B(x)$ and on $m$ in the most rigorous way, when we observe the evolution of the population through time and when the shape of the function $B$ can be arbitrary, to within a prescribed smoothness class, {\it i.e.} in a nonparametric setting. In order to do so, we transfer the deterministic description \eqref{det description} into a probabilist model that consists of a system of (non-interacting) particles specified by a probability distribution $p$ on the integers  (the offspring distribution) and  a probability density $f$ on $[0,\infty)$.
A particle has a random lifetime drawn according to $f(x)dx$; at the time of its death, it gives rise to $k$ children with probability $p_k$ (with $p_0=p_1=0$), each child having independent lifetimes distributed as $f(x)dx$, and so on. The resulting process  is a classical supercritical Bellman-Harris, see for instance the textbooks of Harris~\cite{Harris} or Athreya and Ney~\cite{AthreyaNey}. It is described by a piecewise deterministic Markov process 
\begin{equation} \label{eq:def X}
X(t)=\big(X_1(t), X_2(t),\ldots\big), \quad t \geq 0,
\end{equation}
with values in $\bigcup_{k \geq 1}[0,\infty)^k$, where the $X_i(t)$'s denote the (ordered) ages of the living particles at time $t$. The formal link between $X(t)$ and $n(t,x)$ is obtained via $n(t,x) = \E\big[\sum_{i = 1}^\infty \delta_{X_i(t)=x}\big]$ which has to be understood in a weak (measure) sense, {\it i.e.} the empirical measure (in expectation) of the particle system and solves Equation \eqref{det description}, we refer to \cite{Oelschlager}.\\

The correspondence between $(m,B)$ and $(f,p)$ is given
by 
\begin{equation} \label{hazard formula}
B(x)=\frac{f(x)}{1-\int_0^x f(s)ds},\;x\in [0,\infty),\;\;\text{and}\;\;m = \sum_{k \geq 2}kp_k,
\end{equation}
provided everything is well defined. 
Under fairly reasonable assumptions described below, it is one-to-one between $B$ and $f$, but not between $m$ and $p$.
We are interested in the nonparametric estimation of $x\leadsto B(x)$, which is nothing but the hazard rate function of the lifetime density $f$ of each particle, and also in the mean offspring $m$, the whole distribution $p$ being considered as a nuisance parameter. 

\subsection{Objectives and results}

\begin{figure}[h!]
\centering
\includegraphics[width=6cm]{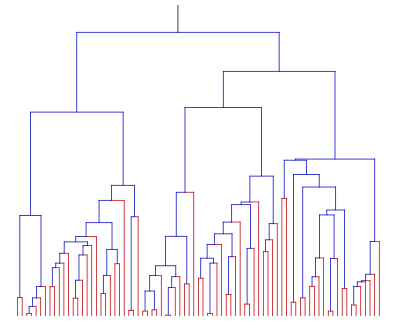}
\includegraphics[width=6cm]{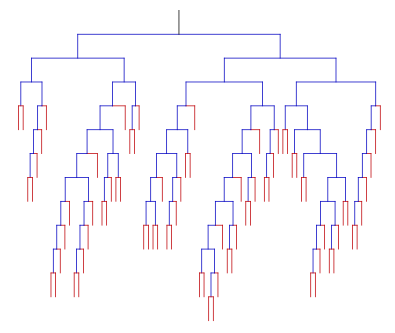}
\caption{{\it The effect of bias selection. Simulation of a binary ($p_2=1$ so $m=2$) age-dependent tree with $B$ given in Section~\ref{section: simus}, up to time $T=8$ ($|\mathcal T_T| = 145$). Left: the size of each segment represents the lifetime of an individual. Individuals alive at time $T$ are represented in red. Right: genealogical representation of the same realisation of the tree.} \label{Fig 1}}
\end{figure}

\subsubsection*{Observation schemes}

We assume we observe the whole trajectory $(X(t), t \in [0,T])$, where $T>0$ is a fixed (large) terminal time. Asymptotics are taken as $T \rightarrow \infty$. If we denote by 
$\mathcal T_T$ the population of individuals that are born before $T$ and observed up to time $T$ and if $(\zeta_u^T, u \in \mathcal T_T)$ denotes the values of the ages of the different individuals of $\mathcal T_T$  (at their time of death or at time $T$), we wish to draw inference on $B(x)$ based on 
\begin{equation*} 
\big\{X(t), t \in [0,T]\big\} = \big\{\zeta_u^T, u \in \mathcal T_T\big\}.
\end{equation*}

Although the lifetimes of the individuals are independent (and identically distributed)  with common density $f$, this is no longer the case for the population $(\zeta_u^T, u \in \mathcal T_T)$ considered as a whole: the tree structure plays a crucial role and we have to face several non-trivial difficulties:
\begin{itemize}
\item[{\bf 1)}] {\it Bias selection:} particles with small lifetimes are more often observed than particles with large lifetimes since the observation of the process is stopped along all the branches at the fixed time $T$, as illustrated in Figure~\ref{Fig 1}.
\item[{\bf 2)}] {\it Censoring:} if $\partial \mathcal T_T \subset \mathcal T_T$ denotes the population of individuals alive at time $T$ (in red in Figure~\ref{Fig 1}), they are censored in our observation scheme (we observe their lifetime only up to time $T$) but contribute to the whole estimation process at the same level as the population $\mathring{\mathcal T}_T \subset \mathcal T_T$ of individuals born and dead before $T$: due to the supercriticality of the process ($m >1$) we have $|\mathcal T_T| \approx |\mathring {\mathcal T}_T| \approx |\partial \mathcal T_T|$ as $T$ grows to infinity, and this affects the statistical analysis, see  Section~\ref{first estimates} below.
\item[{\bf 3)}] {\it Non-ancillarity:} the number of observations $|\mathcal T_T|$ that governs the amount of statistical information is random and its distribution depends on $B$: we essentially have less observations if $B$ is small (particles split at a slow rate) than if $B$ is large (particles split at a fast rate). This means that $|\mathcal T_T|$ is not ancillary in the terminology of Fisher: it is not possible to ignore its randomness (by conditioning upon its value for instance) without losing some statistical information. We refer to the Encyclopedia of Statistics \cite{Fisherancillary} for more details.   
\end{itemize}
\subsubsection*{Main results}
We first study in Section~\ref{resultat proba} the behaviour of empirical measures of the form
$$\mathcal E^T(\mathcal V,g) = 
 |\mathcal V|^{-1} \sum_{u \in \mathcal V}g(\zeta_u^{T}),\;\;\text{with}\;\;\mathcal V = \mathring{\mathcal T_T}\;\;\text{or}\;\;\partial \mathcal T_T$$
for suitable test functions $g$. From the classical study of critical branching processes, it is known that $|\mathring{\mathcal T_T}| \approx |\partial \mathcal T_T| \approx e^{\lambda_BT}$, where $\lambda_B>0$ is the Malthus parameter associated to the model (Harris~\cite{Harris} and \eqref{def malthus} below). Both $\mathcal E^T(\mathring{\mathcal T_T},g)$ and $\mathcal E^T(\partial \mathcal T_T,g)$ converge to their respective limits with rate  $\exp(-\lambda_BT/2)$, with some uniformity  in $B$ and $g$ as shown in Theorem~\ref{rate en T} and~\ref{rate avant T} below. For the proof, we heavily rely on the recent studies of Cloez~\cite{cloez} and Bansaye {\it et al.}~\cite{BDMT}, two key references for this paper, adjusting the tools developed in~\cite{BDMT} to the non-Markovian case: the essential ingredient is the use of many-to-one formulae that reduce the problem to studying the evolution of a particle picked at random along the genealogical tree (Propositions~\ref{prop: Mt1} and~\ref{prop: Mt1 forks}). The rate of convergence to equilibrium of this tagged particle, which governs the rates of convergence for statistical estimators, is obtained by a simple coupling argument (Proposition~\ref{prop: convergence semigroupe}). 

\vip

These preliminary results enable us to address the main issue of the paper: we construct in Section~\ref{sec: stat estimation} a nonparametric estimator $\widehat B_T(x)$ of $B(x)$ that achieves the rate of convergence $\exp(-\lambda_B\frac{\beta}{2\beta+1}T)$ for pointwise error and uniformly over  functions $B$ with local smoothness of order $\beta>0$ (Theorem~\ref{thm: upper rate}). We show that this rate is optimal in a minimax sense in Theorem~\ref{thm: borne inf}, thanks to statistical tools developed in L\"ocherbach~\cite{eva1}.  This result is obtained under the restriction that convergence to equilibrium of a tagged particle is faster than the growth of the tree. Otherwise, we still have a rate of convergence, but we do not have (nor believe in) its optimality. We bypass the aforementioned bias selection difficulty 1) by weighting a kernel estimator  by a de-biasing factor that depends on preliminary estimators of $\lambda_B$ and $m$. These estimators (essentially) converge with rate $\exp(-\lambda_BT/2)$  as shown in Proposition~\ref{prop: conv lambda}. As for the censoring part 2), we base our nonparametric kernel estimator on $\mathcal E^T(\mathring{\mathcal T_T},g)$ and {\it not on}  $\mathcal E^T(\partial \mathcal T_T,g)$, since that latter quantity would lead to a subobtimal rate of convergence as discussed in Section~\ref{discussion}. Finally, the non-ancillarity issue 3) is solved by specifying a random bandwidth for the kernel that also depends on the preliminary estimation of $\lambda_B$. This last point requires extra efforts in order to show a form of stability that is detailed in Proposition~\ref{lem: kolmo}.

\vip

The statistical study of branching processes goes back to Athreya and Keiding~\cite{AthreyaKeiding} for deriving maximum likelihood theory in the case of a parametric (constant) division rate, relying on the fact that the number of living cells is then a Markov process, a property we lose here for a non-constant division rate $x \leadsto B(x)$. The textbook of Guttorp~\cite{guttorp} gives an account of existing parametric methods in the 1990's. In the early 2000's the regularity in the sense of the LAN and LAMN property was established in the comprehensive study of L\"ocherbach~\cite{eva1, eva2}, see also Hyrien~\cite{Hyrien} for statistical computational methods and Johnson {\it et al.}\;\cite{JSvR} for Bayesian analysis, and Delmas and Marsalle~\cite{DM} in discrete time. In nonparametric estimation, only few results exist; we mention the case when dynamics between jumps is driven by a diffusion in H\"opfner {\it et al.}~\cite{HHL}. To the best or our knowledge, our study provides with the first fully nonparametric approach in continuous time in supercritical branching processes which are piecewise deterministic. Admittedly, the Bellman-Harris model is a toy model for the study of population dynamics, but we believe that the present contribution sheds some light in the intrinsic difficulties that need to be solved in more elaborate models like cell equation for which only simplified statistical models have been considered so far (in discrete time or under additional deterministic or stochastic noise like in {\it e.g.}~\cite{DPZ, DHRR, DHKR1}). 
Concerning bias selection, density estimation when observing a biased sample has been studied at length framework by Efromovich \cite{Efr}.

\subsubsection*{Organisation of the paper}

In Section~\ref{resultat proba}, we define our rigorous statistical framework by means of continuous time rooted trees (Section~\ref{cont time rooted trees}) and study the convergence properties of the biased empirical measures $\mathcal E^T(\mathring{\mathcal T_T},g)$ and  $\mathcal E^T(\partial \mathcal T_T,g)$ in Section~\ref{conv emp measures}.  We start by deriving heuristically the respective limits of the empirical measures in Section~\ref{first estimates} (that can also be found in Cloez~\cite{cloez} and Bansaye {\it et al.}\;\cite{BDMT}) in order to shed some light on the specific methods of proof in the subsequent study of rate of convergence. We construct in Section~\ref{sec: stat estimation}  the estimators of $m$, $\lambda_B$ and $B(x)$ and state our statistical results together with a discussion on the extensions and limitations of our findings. Section~\ref{section: simus} tackles the problem of numerical implementation on simulated data, advocating for a reasonably use of our estimators in practice. Section~\ref{sec: proofs} is devoted to the proofs. An appendix (Section~\ref{sec: appendix}) contains auxiliary useful results.

\section{\textsc{Rate of convergence for biased empirical measures}} \label{resultat proba}

\subsection{Continuous time rooted trees} \label{cont time rooted trees}

It will prove more convenient to work with a representation of $(X(t))_{t \geq 0}$ in terms of a continuous time rooted tree. We need some notation and closely follow Bansaye {\it et al.}~\cite{BDMT}.
Let
$$\mathcal U=\bigcup_{k \geq 0}(\mathbb N^\star)^k$$
with $\mathbb N^\star = \{1, 2, \ldots\}$ and $(\mathbb N^\star)^0=\{\varnothing\}$ denote the infinite genealogical tree. We use throughout the following standard notation: for $u=(u_1, u_2, \ldots, u_m)$ and $v=(v_1,\ldots, v_n)$ in $\mathcal U$, we write $uv=(u_1,\ldots, u_m,v_1,\ldots, v_n)$ for the concatenation, we identify $\varnothing u$, $u \varnothing $ and $u$, we write $u \preceq v$ if there exists $w$ such that $uw=v$ and $u \prec v$ if $u \preceq v$ and $w \neq \varnothing$. For $u=(u_1, u_2, \ldots, u_m)$, we also write $|u| = m$.

Given a family $(\nu_u, u \in \mathcal U)$ of integers 
representing the number of children of the individuals $u \in \mathcal U$, we construct an ordered rooted tree  $\mathcal T \subset \mathcal U$ as follows:
\begin{itemize}
\item[i)] $\varnothing \in \mathcal T$,
\item[ii)] If $v \in \mathcal T$, $u \preceq v$ implies $u \in \mathcal T$,
\item[iii)] For every $u \in \mathcal T$, we have $uj \in \mathcal T$ if and only if $1 \leq j \leq \nu_u$.
\end{itemize}
For a family $(\zeta_u, u \in \mathcal U)$ of nonnegative numbers representing the lifetimes of the individuals $u \in \mathcal U$, we set
\begin{equation} \label{eq:birth and death}
b_u = \sum_{v \prec u}\zeta_v\;\;\text{and}\;\;d_u=b_u+\zeta_u
\end{equation}
for the times of birth and death of the individual $u \in \mathcal U$. 
Let  $\mathbb U = \mathcal U \times [0,\infty).$
A continuous time rooted tree is then a subset $\mathbb T$ of $\mathbb U$ such that
\begin{itemize}
\item[(i)] $(\varnothing, 0) \in \mathbb T$,
\item[(ii)] The projection $\mathcal T$ of $\mathbb T$ on $\mathcal U$ is an ordered rooted tree,
\item[(iii)] There exists a family $(\zeta_u, u \in \mathcal U)$ of nonnegative numbers such that $(u,s) \in \mathbb T$ if and only if $b_u \leq s < d_u$, where $(b_u,d_u)$ are defined by \eqref{eq:birth and death}.
\end{itemize}
We now work on some probability space $(\Omega, \mathcal F, \mathbb P)$. In this setting, we have the following
\begin{defi}[The Bellman-Harris model] \label{BellmanHarris}
A random continuous time rooted tree is a Bellman-Harris model with offspring distribution $p=(p_k)_{k \geq 1}$ and division rate $B:[0,\infty)\rightarrow [0,\infty)$ if
\begin{enumerate}
\item[(i)] The family of the number of children $(\nu_u, u \in \mathcal U)$ are independent random variables with common distribution $p$. 
\item[(ii)] The family of lifetimes $(\zeta_u, u \in \mathcal U)$ are independent random variables such that
\begin{equation} \label{def B}
\mathbb P\big(\zeta_u\geq x\big)=\exp\big(-\int_0^xB(y)dy\big),\;\;x \geq 0,
\end{equation}
with $$\int^\infty B(x)dx=\infty,$$
\item[(iii)] The families of random variables $(\nu_u, u \in \mathcal U)$ and $(\zeta_u, u \in \mathcal U)$ are independent.
\end{enumerate}
\end{defi}
Going back to the process $(X(t))_{t \geq 0}$ defined in \eqref{eq:def X},we have an identity between point measures on $(0,\infty)$ that reads
$$\sum_{i \geq 1}{\bf 1}_{\{X_i(t)>0\}}\delta_{X_i(t)} = \sum_{u \in \mathcal T}{\bf 1}_{\{t \in [b_u,d_u)\}}\delta_{t-b_u}.$$

The following assumption will be in force in the paper:
\begin{assumption} \label{hyp: offspring}
The offspring distribution $p = (p_k)_{k \geq 0}$ satisfies
$$p_0=p_1=0,\;\;2 \leq m=\sum_{k \geq 2}kp_k<\infty, \quad \sum_{k \geq 2}k^2p_k<\infty\;\;\text{and}\;\; \bar{m} = \sum_{i\neq j}\sum_{k \geq i \vee j}p_k <\infty.$$
\end{assumption}

The technical condition $\bar{m}  < \infty$ is needed for the so-called many-to-one formulae, see Proposition~\ref{prop: Mt1 forks} below.

\subsection{The limiting objects} \label{first estimates}
In order to extract information about $x \leadsto B(x)$, we consider the empirical distribution function over the lifetimes indexed by some $\mathcal V_T \subset \mathcal T_T$ for a test function $g$, that is 
$$\mathcal E^T(\mathcal V_T,g) = |\mathcal V_T|^{-1}\sum_{u \in \mathcal V_T}g(\zeta_u^{T}),$$
and expect a law of large number as $T \rightarrow \infty$. Without much of a surprise, it turns out that depending whether $\zeta_u^{T} = \zeta_u$ or not, {\it i.e.} if the data are still alive at time $T$, therefore censored or not, we have a different limit. More precisely, define 
$$\mathring {\mathcal T}_T = \{u \in \mathcal T,b_u<T\;\text{and}\;d_u\leq T\}\;\;\text{and}\;\;\partial\mathcal T_T = \{u \in \mathcal T,b_u \leq T < d_u\},$$
{\it i.e.} the set of particles that are born and that die before $T$, and the set of particles alive at time $T$, so that 
$\mathcal T_T = \mathring {\mathcal T}_T \cup \partial\mathcal T_T.$
We need some notation. Introduce the {\it Malthus parameter} $\lambda_B >0$ defined as the (necessarily unique) solution to 
\begin{equation} \label{def malthus}
\int_0^\infty B(x)e^{-\lambda_{B}x-\int_0^xB(y)dy}dx=\frac{1}{m}.
\end{equation}
To a division rate function $x \leadsto B(x)$ satisfying the properties of Definition \ref{BellmanHarris}, we associate its {\it density lifetime}
$$f_B(x) = B(x)\exp\big(-\int_0^x B(y)dy\big),\;x \geq 0$$
and its biased density lifetime
$$f_{H_B}(x) = me^{-\lambda_B x}f_B(x),\;x\geq 0,$$
which in turns uniquely defines a biased division rate
\begin{equation} \label{biased division rate}
H_B(x)=\frac{me^{-\lambda_Bx}f_B(x)}{1-m\int_0^xe^{-\lambda_By}f_B(y)ds}.
\end{equation}
Finally, we define the limiting measures
\begin{equation} \label{def limite bord}
 \partial \mathcal E_B(g) = \lambda_B\frac{m}{m-1} \int_0^\infty g(x)e^{-\lambda_Bx} e^{-\int_0^x B(y)dy} dx 
\end{equation}
and
\begin{equation} \label{def limite int}
\mathring{\mathcal E}_B\big(g\big)  =  m\int_0^\infty g(x)e^{-\lambda_B x}f_{B}(x)dx = \int_0^\infty g(x)f_{H_B}(x)dx.
\end{equation}
It is known that $\mathcal E^T(\partial \mathcal T_T,g) \rightarrow \partial \mathcal E_B(g)$ and $\mathcal E^T(\mathring {\mathcal T}_T,g) \rightarrow \mathring{\mathcal E}_B\big(g\big)$ in probability as $T \rightarrow \infty$, see Appendix \ref{heuristic} for heuristics and references.  We establish in Theorems \ref{rate en T} and \ref{rate avant T} in the next Section \ref{conv emp measures} a rate of convergence
with some uniformity in $B$. The rate is linked to $\lambda_B$ and the geometric ergodicity of an auxiliary one-dimensional Markov process with infinitesimal generator 
\begin{equation} \label{def biased generator}
\mathcal A_{H_B} g(x) = g'(x)+H_B(x)\big(g(0)-g(x)\big)
\end{equation}
densely defined on continuous functions vanishing at infinity and that represents the value of a branch along the tree picked uniformly at random at each branching event. 
\subsection{Convergence results for biased empirical measures} \label{conv emp measures}
\subsubsection*{Notation.} 
For constants $b,C>0$, introduce the sets
$$\mathcal L_C=\Big\{g:[0,\infty)\rightarrow \R,\; \sup_x|g(x)|\leq C\Big\}$$
and
$$\mathcal B_{b,C} = \Big\{B:[0,\infty) \rightarrow [0,\infty),\forall x\geq 0:\;b\leq B(x) \leq b\max\{C,1\} \Big\}.$$
%
For a family $\Gamma_T = \big(\Gamma_T(\gamma)\big)_{T \geq 0}$ of real-valued random variables, with distribution depending on some parameter $\gamma \in \mathcal G$ we say that $\Gamma_T$ is $\mathcal G$-tight for the parameter $\gamma$ if
$$\sup_{T >0, \gamma \in \mathcal G}\PP\big(|\Gamma_T(\gamma)| \geq K\big) \rightarrow 0\;\;\text{as}\;\;K\rightarrow \infty.$$
\subsubsection*{Results} We have a trade-off between the growth rate $\lambda_B$ of the tree $\E[|\mathcal T_T|] \approx e^{\lambda_B T}$ and the convergence to equilibrium of the Markov process with infinitesimal generator $\mathcal A_{H_B}$ defined in \eqref{def biased generator} above. More, precisely, we show in Proposition~\ref{prop: convergence semigroupe} below the estimate
$$
\Big|P_{H_B}^tg(x)-\int_0^\infty g(y)\mu_B(y)dy\Big| \leq 2 \sup_y|g(y)| e^{-\rho_B t}\;\;\text{for every}\;x \in (0,\infty).
$$
Here, $(P_{H_B}^t)_{t \geq 0}$ denotes the semigroup associated to $\mathcal A_{H_B}$ and $\mu_B$ its unique invariant probability, and
$$\rho_B = \inf_x H_B(x)$$
where $H_B(x)$ is the biased division rate defined in \eqref{biased division rate} above. 
The rate of convergence of the biased empirical measures  $\mathcal E^T(\mathring{\mathcal T_T},g)$ and  $\mathcal E^T(\partial \mathcal T_T,g)$ to their limits $\partial {\mathcal E}_B(g)$ and $\mathring{\mathcal E}_B(g)$ respectively defined by \eqref{def limite bord} and \eqref{def limite int} are goverened by $\lambda_B$ and $\rho_B$: define
\begin{equation} \label{def rate base}
v_T(B) =
\left\{
\begin{array}{lll}
e^{-\min\{\rho_B,\lambda_B/2\} T} & \text{if} & \lambda_B \neq 2\rho_B, \\ \\
T^{1/2} e^{-\lambda_BT/2} &  \text{if} & \lambda_B = 2\rho_B.
\end{array}
\right.
\end{equation}
We have:
\begin{thm}[Rate of convergence for particles living at time $T$] \label{rate en T}
Work under Assumption~\ref{hyp: offspring}. For every $b, C, C' >0$, 
$$v_T(B)^{-1}\big(\mathcal E^T\big(\partial{\mathcal T}_T,g\big) - \partial {\mathcal E}_B(g)\big)$$
is $\mathbb B_{b,C} \times \mathcal L_{C'}$-tight for the parameter $(B,g)$.
\end{thm}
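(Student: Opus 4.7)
I would split the argument into a ratio reduction, a bias control and a variance control. Start by writing
\[
\mathcal E^T\bigl(\partial \mathcal T_T, g\bigr) - \partial \mathcal E_B(g) \;=\; \frac{N_T(B,g)}{D_T(B)},
\]
with $N_T(B,g) = e^{-\lambda_B T}\sum_{u \in \partial \mathcal T_T}\bigl(g(\zeta_u^T) - \partial \mathcal E_B(g)\bigr)$ and $D_T(B) = e^{-\lambda_B T}|\partial \mathcal T_T|$. Because $p_0=p_1=0$, the classical Bellman--Harris martingale yields an almost surely strictly positive Malthus limit of $D_T(B)$; together with $\sum_k k^2 p_k<\infty$ and the uniform bounds on $\lambda_B$ over $\mathbb B_{b,C}$, this gives $L^2$-convergence of $D_T(B)$ to a limit whose distribution is tight and bounded below uniformly in $B \in \mathbb B_{b,C}$. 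Hence it suffices to prove that $v_T(B)^{-1}N_T(B,g)$ is $\mathbb B_{b,C}\times \mathcal L_{C'}$-tight.

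For the bias, Proposition~\ref{prop: Mt1} rewrites $\mathbb E[N_T(B,g)]$ as an integral against the semigroup $(P^t_{H_B})_{t \geq 0}$ of the tagged-particle process generated by $\mathcal A_{H_B}$, the limiting quantity $\partial \mathcal E_B(g)$ in \eqref{def limite bord} being designed precisely so that this integral vanishes at equilibrium. Combined with the geometric ergodicity provided by Proposition~\ref{prop: convergence semigroupe}, one obtains
\[
\bigl|\mathbb E\bigl[N_T(B,g)\bigr]\bigr| \;\leq\; \kappa(b,C,C')\,\|g\|_\infty\, e^{-\rho_B T},
\]
which is of order $v_T(B)$ and therefore negligible.

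The technical heart is the variance, which I would control through the many-to-one formula for forks (Proposition~\ref{prop: Mt1 forks}): it decomposes $\mathbb E[N_T^2]$ into a diagonal part and a fork part indexed by the birth time $s \in [0,T]$ of the most recent common ancestor of the pair $(u,v)$. The diagonal part is of order $e^{-\lambda_B T}$ and harmless. For the fork part, conditionally on a branching at time $s$ the two daughter sub-trees evolve independently; each contributes, through $P^{T-s}_{H_B}$ applied to the centered function $g-\partial \mathcal E_B(g)$, a factor bounded by $\|g\|_\infty\, e^{-\rho_B(T-s)}$ by Proposition~\ref{prop: convergence semigroupe}. Combining the growth factor $e^{\lambda_B s+2\lambda_B(T-s)}$ for the number of pairs with the two ergodic factors and dividing by $e^{2\lambda_B T}$, the fork contribution to $\mathrm{Var}(N_T(B,g))$ is dominated, uniformly on $\mathbb B_{b,C}\times \mathcal L_{C'}$, by a constant times
\[
e^{-2\rho_B T}\int_0^T e^{(2\rho_B-\lambda_B)s}\,ds.
\]
Evaluating this integral in the three regimes $2\rho_B>\lambda_B$, $2\rho_B<\lambda_B$ and $2\rho_B=\lambda_B$ yields exactly $v_T(B)^2$, and Chebyshev's inequality closes the argument.

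The main obstacles are twofold. First, every implicit constant in the many-to-one and ergodicity bounds must be uniform in $B \in \mathbb B_{b,C}$, which is feasible because $\lambda_B$, $\rho_B$ and $m$ are bounded above and below by quantities depending only on $b$ and $C$, but it requires care with the $L^2$-control of the Malthus martingale and with constants in Proposition~\ref{prop: convergence semigroupe}. Second, the resonant case $\lambda_B = 2\rho_B$ is delicate: the above integral evaluates to $T\,e^{-\lambda_B T}$ rather than a purely exponential quantity, which is exactly why definition \eqref{def rate base} of $v_T(B)$ carries the additional $T^{1/2}$ factor there.
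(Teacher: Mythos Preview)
Your plan is correct and mirrors the paper's proof almost exactly: the paper also factors out $e^{\lambda_B T}/|\partial\mathcal T_T|$ via Lemma~\ref{estimate moment un}, reduces (after centering $g$ so that $\partial\mathcal E_B(g)=0$) to a second-moment bound on the numerator, and splits that into a diagonal part controlled by \eqref{Mt1 bord} and an off-diagonal part controlled by \eqref{cloez forks} together with Proposition~\ref{prop: convergence semigroupe}, arriving at the same integral $\int_0^T e^{-\lambda_B s}e^{-2\rho_B(T-s)}ds$ you obtain. The only point to tighten is that the many-to-one formulae insert the weight $H_B/B$, so the function to which Proposition~\ref{prop: convergence semigroupe} is applied is $(g-\partial\mathcal E_B(g))\,H_B/B$ rather than $g-\partial\mathcal E_B(g)$; one checks (as the paper notes) that this weighted function does have vanishing $\mu_B$-integral, so your ergodic bound goes through.
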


\begin{thm}[Rate of convergence for particles dying before $T$] \label{rate avant T} In the same setting as Theorem~\ref{rate en T},
$$v_T(B)^{-1}\big(\mathcal E^T\big(\mathring {\mathcal T}_T,g\big) - \mathring{\mathcal E}_B(g)\big)$$
is $\mathbb B_{b,C} \times \mathcal L_{C'}$-tight for the parameter $(B,g)$.
\end{thm}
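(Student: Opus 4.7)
The plan is to run the proof of Theorem~\ref{rate en T} again, with the set of living particles $\partial\mathcal T_T$ replaced by the set $\mathring{\mathcal T}_T$ of particles whose full lifetime is observed. Setting $N_T(g) = \sum_{u \in \mathring{\mathcal T}_T} g(\zeta_u)$, I would write
$$\mathcal E^T\big(\mathring{\mathcal T}_T,g\big) - \mathring{\mathcal E}_B(g) = \frac{N_T(g) - |\mathring{\mathcal T}_T|\,\mathring{\mathcal E}_B(g)}{|\mathring{\mathcal T}_T|}$$
and first show, by specialising the subsequent analysis to $g\equiv 1$, that $e^{-\lambda_B T}|\mathring{\mathcal T}_T|$ converges to some $c_B >0$, uniformly bounded away from $0$ on $\mathbb B_{b,C}$, at rate $v_T(B)$. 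A Slutsky argument then reduces the claim to proving $e^{-\lambda_B T}\bigl(N_T(g) - |\mathring{\mathcal T}_T|\,\mathring{\mathcal E}_B(g)\bigr) = O_\PP(v_T(B))$ uniformly in $(B,g) \in \mathbb B_{b,C} \times \mathcal L_{C'}$.

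For the expectation, the many-to-one formula of Proposition~\ref{prop: Mt1}, applied by decomposing $u \in \mathring{\mathcal T}_T$ according to its death time $t \in [0,T]$, expresses $\E[N_T(g)]$ as an integral of the tagged semigroup $P_{H_B}^t$ acting on a functional built from $g$, weighted by $m\, e^{\lambda_B t}$. Substituting the ergodic decomposition $P_{H_B}^t h = \int h\,\mu_B + R_t(h)$ with $\|R_t(h)\|_\infty \leq 2\|h\|_\infty e^{-\rho_B t}$ from Proposition~\ref{prop: convergence semigroupe} extracts a leading term of size $c_B\, e^{\lambda_B T}\mathring{\mathcal E}_B(g)$ plus a remainder dominated by $e^{\lambda_B T}\bigl(e^{-\rho_B T}+e^{-\lambda_B T}\bigr)$, which after normalisation is absorbed by $v_T(B)$.

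The main step, and the main obstacle, is the variance estimate through the many-to-one formula for forks (Proposition~\ref{prop: Mt1 forks}). The sum $\E\bigl[\sum_{u\neq v} g(\zeta_u) g(\zeta_v)\bigr]$ over ordered pairs of distinct particles of $\mathring{\mathcal T}_T$ splits, according to the birth time $s \in [0,T]$ of their most recent common ancestor, into an integral of $\bigl(P^{T-s}_{H_B}\widetilde g\bigr)\otimes\bigl(P^{T-s}_{H_B}\widetilde g\bigr)$ weighted by $\bar m\, m\, e^{\lambda_B s}$ for the MRCA birth density and by $e^{2\lambda_B(T-s)}$ for the growth of the two sub-trees. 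Centring each branch by means of Proposition~\ref{prop: convergence semigroupe} replaces each factor $P^{T-s}_{H_B}\widetilde g - \int \widetilde g\,\mu_B$ by $O\bigl(e^{-\rho_B(T-s)}\bigr)$ uniformly, so that after subtracting $\E[N_T(g)]^2$ one obtains
$$\mathrm{Var}\bigl(N_T(g)\bigr) \lesssim \|g\|_\infty^2\Big(e^{\lambda_B T} + \int_0^T e^{\lambda_B s}\,e^{2\lambda_B(T-s)}\,e^{-2\rho_B(T-s)}\,ds\Big),$$
with the first term coming from the diagonal $u=v$. A direct computation evaluates the integral as $O(e^{\lambda_B T})$ when $2\rho_B > \lambda_B$, as $O(e^{2(\lambda_B-\rho_B)T})$ when $2\rho_B < \lambda_B$, and as $O(T\,e^{\lambda_B T})$ at equality; dividing by $|\mathring{\mathcal T}_T|^2 \asymp e^{2\lambda_B T}$ and invoking Chebyshev's inequality yields exactly $v_T(B)^2$ in all three regimes. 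Uniformity in $(B,g)$ follows from the uniform control of $\lambda_B$, $\rho_B = \inf_x H_B(x)$, $\bar m$ and $\|g\|_\infty$ on $\mathbb B_{b,C} \times \mathcal L_{C'}$. The genuine difference with Theorem~\ref{rate en T} is that the `test' for a fork in $\mathring{\mathcal T}_T$ is the full lifetime $\zeta_u$, observed at a \emph{random} death time, rather than the truncated age $T-b_u$ observed at the deterministic horizon $T$; this only forces the many-to-one formula to carry an extra integration against the jump density $f_{H_B}$, without altering the order of magnitude of the bounds.
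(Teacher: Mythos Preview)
Your overall architecture mirrors the paper's, but there is a genuine gap in the second–moment step that is specific to $\mathring{\mathcal T}_T$ and absent from $\partial\mathcal T_T$. When you write that ``$\sum_{u\neq v}$ splits according to the birth time of the most recent common ancestor'' and then apply the fork formula, you are implicitly assuming that every pair $u\neq v$ in $\mathring{\mathcal T}_T$ is a fork. This is true in $\partial\mathcal T_T$ (two particles simultaneously alive can never be ancestor and descendant), but it fails in $\mathring{\mathcal T}_T$: if $u\prec v$ and both die before $T$, then $(u,v)$ is a lineage pair, \emph{not} an element of $\mathcal{FT}$, and the fork formula~\eqref{bansaye forks} does not cover it. The paper therefore uses a three–term decomposition
\[
\E\Big[\Big(\sum_{u\in\mathring{\mathcal T}_T} g(\zeta_u)\Big)^2\Big]=I+II+III,
\]
with $I$ the diagonal, $II$ the forks, and $III=2\E\big[\sum_{u\prec v,\,u,v\in\mathring{\mathcal T}_T} g(\zeta_u)g(\zeta_v)\big]$ handled via the separate identity~\eqref{lineage}. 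The good news is that $|III|\lesssim e^{\lambda_B T}$, so it is dominated by $II$ in every regime and your final rate survives; but your proof as written is incomplete until you isolate and bound this extra term.

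Two smaller remarks. First, specialising your analysis to $g\equiv1$ does not yield the convergence of $e^{-\lambda_B T}|\mathring{\mathcal T}_T|$: since $\mathring{\mathcal E}_B(1)=1$, the centred sum vanishes identically, so you learn nothing. The paper instead invokes a classical renewal–theoretic result (Lemma~\ref{estimate moment un avant T}) to get $|\mathring{\mathcal T}_T|/\E[|\mathring{\mathcal T}_T|]\to\widetilde W_B'$ in $L^2$, where $\widetilde W_B'$ is a genuinely \emph{random} limit, not a constant $c_B$. Second, for the fork term in $\mathring{\mathcal T}_T$ the many-to-one formula~\eqref{bansaye forks} carries an \emph{inner} time integral $\int_0^{T-s} e^{\lambda_B t}P^t_{H_B}(gH_B)(0)\,dt$ in each branch (reflecting the random death times), not a single $P^{T-s}_{H_B}$ as in the $\partial\mathcal T_T$ case; you acknowledge this at the end, but the displayed bound you wrote is the $\partial\mathcal T_T$ version. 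The orders of magnitude coincide after integration, so this is cosmetic, but it should be written correctly.
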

Several comments are in order: 


\subsubsection*{About the rate of convergence and the class $\mathcal B_{b,C}$:} the restriction  $B \in \mathcal B_{b,C}$ enables us to obtain uniform convergence results. This is important for the subsequent statistical analysis. However, this can be relaxed if only $\mathcal L_{C'}$-tightness is sought, provided $B$ complies to the conditions of Definition \ref{BellmanHarris} and Assumption~\ref{hyp: offspring} and $\rho_B>0$. In the same direction, the rate $v_T(B)$ can be improved replacing $\rho_B = \inf_x H_B(x)$ in \eqref{def rate base} by
\begin{equation} \label{optimal geo}\rho_B^\star = \sup\Big\{\rho,\;\forall x, t>0: |P_{H_B}^tg(x)-\int_0^\infty g(y)\mu_B(y)dy| \leq 2 \sup_y|g(y)| e^{-\rho t}\Big\}, 
\end{equation}
and we have in particular $\rho_B^\star \geq \rho_B$.
\subsubsection*{About the tightness:} what we need in order to handle the random normalisation in $\mathcal E^T\big(\mathring {\mathcal T}_T,g\big)$  is actually the convergence of $e^{\lambda_B T}|\mathring {\mathcal T}_T|^{-1}$.  This convergence still holds in probability but not necessarily in $L^2(\PP)$, so we only have tightness in Theorems~\ref{rate en T} (and~\ref{rate avant T} for the same reason). However, if we replace $\mathcal E^T(\mathring {\mathcal T}_T, g)$ by
$$\frac{1}{\E[|\mathring {\mathcal T}_T|]} \sum_{u \in \mathring {\mathcal T}_T}g(\zeta_u^{T}) ,$$
then we have a bound in $L^2(\PP)$ together with a control on $g$, see Proposition~\ref{lem: vanishing test} below. Such a finer control is mandatory for the subsequent statistical analysis, since we need to pick a function $g$ that depends on $T$ and that mimics the behaviour of the Dirac mass $\delta_x$, see Section~\ref{sec: stat estimation} below.

\COMMENTAIRE{insuffisant pour obtenir le vrai résultat statistique car il faut en plus de l'uniformité sur la "dérivée" de $g$. Fait plus tard (le dire)}

\section{\textsc{Statistical estimation}} \label{sec: stat estimation}

\subsection{Construction of an estimation procedure}
\subsubsection*{Estimation of $m$ and $\lambda_B$} 
To a particle sitting at node $u \in \mathring{\mathcal T}_T$, we associate its number of children $\nu_u$ (see Definition~\ref{BellmanHarris}). Note that the knowledge of ${\mathcal T}_T$ enables us to reconstruct $\nu_u$ for every $u \in \mathring{\mathcal T}_T$. This enables us to define an estimator for $m$ by setting
\begin{equation} \label{est m}
\widehat m_T =|\mathring{\mathcal T}_T|^{-1}\sum_{u \in \mathring{\mathcal T}_T}\nu_u
\end{equation}
on the set $ |\mathring{\mathcal T}_T| \neq 0 $ and $2$ otherwise.
In order to estimate $\lambda_B$, we first observe that for $\mathrm{Id}(x)=x$, we can write 
\begin{align*}
\mathring{\mathcal E}_B(\mathrm{Id}) & = m\int_0^\infty x\big(B(x)+\lambda_B\big)e^{-\int_0^x (B(y)+\lambda_B)dy}dx - m\lambda_B \int_0^\infty xe^{-\lambda_Bx}e^{-\int_0^xB(y)dy}dx \\
& = m \int_0^\infty e^{-\int_0^x (B(y)+\lambda_B)dy} dx - m \lambda_B \tfrac{m-1}{m\lambda_B} \partial {\mathcal E}_B(\mathrm{Id}) 
 = m\tfrac{m-1}{m\lambda_B} - (m-1)\partial {\mathcal E}_B(\mathrm{Id}), 
\end{align*}
the last equality being obtained integrating by parts. So we obtain the following representation
$$\lambda_B = \Big(\tfrac{1}{m-1}\mathring{\mathcal E}_B(\mathrm{Id})+\partial {\mathcal E}_B(\mathrm{Id}) \Big)^{-1}$$
and this yields the estimator
\begin{equation} \label{est lambda}
\widehat \lambda_T = \Big(\tfrac{1}{\widehat m_T-1}|\mathring{\mathcal T}_T|^{-1}\sum_{u \in \mathring {\mathcal T}_T}\zeta_u+|\partial {\mathcal T}_T|^{-1}\sum_{u \in \partial{\mathcal T}_T}\zeta_u^T\Big)^{-1}.
\end{equation}
The following convergence result for $\widehat \lambda_T$ is then a consequence of Theorems~\ref{rate en T} and~\ref{rate avant T}.
\begin{prop} \label{prop: conv lambda} In the same setting as Theorem~\ref{rate en T} with $v_T(B)$ given in \eqref{def rate base} above, we have that
$$e^{\lambda_B T/2}\big(\widehat m_T - m\big)\;\;\text{and}\;\; T^{-1}v_T(B)^{-1}\big(\widehat \lambda_T - \lambda_B\big)$$
are $\mathcal B_{b,C}$-tight for the parameter $B$.
\end{prop}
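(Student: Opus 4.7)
The plan is to handle the two estimators separately: for $\widehat m_T$ I will run a direct second-moment computation that exploits the tree's branching independence, while for $\widehat\lambda_T$ I will reduce the convergence to the already-established rates for biased empirical measures through a delta-method argument.

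For $\widehat m_T - m = |\mathring{\mathcal T}_T|^{-1} \sum_{u \in \mathring{\mathcal T}_T}(\nu_u - m)$, the tree structure of Definition~\ref{BellmanHarris} makes the event $\{u \in \mathring{\mathcal T}_T\}$ measurable with respect to $\sigma(\zeta_v : v \preceq u) \vee \sigma(\nu_v : v \prec u)$, hence independent of $\nu_u$. Expanding the square, for two distinct indices $u \neq v$ at least one, say $v$, is not an ancestor of the other, so $\nu_v$ is independent of the triple $(\mathbf 1_{\{u \in \mathring{\mathcal T}_T\}}, \mathbf 1_{\{v \in \mathring{\mathcal T}_T\}}, \nu_u)$ and the cross term vanishes. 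Only the diagonal survives, yielding
$$\E\Big[\Big(\sum_{u \in \mathring{\mathcal T}_T}(\nu_u - m)\Big)^2\Big] = \mathrm{Var}(\nu_\varnothing)\,\E\bigl[|\mathring{\mathcal T}_T|\bigr] \lesssim e^{\lambda_B T},$$
with the final bound uniform in $B \in \mathcal B_{b,C}$ through the many-to-one formula of Proposition~\ref{prop: Mt1}. Combined with a uniform lower bound $\PP(|\mathring{\mathcal T}_T| \geq c(\eta)\,e^{\lambda_B T}) \geq 1 - \eta$, which follows from classical Kesten--Stigum asymptotics for supercritical Bellman--Harris processes under Assumption~\ref{hyp: offspring} (or equivalently from Theorem~\ref{rate avant T} applied to $g \equiv 1$), Markov's inequality delivers the tightness of $e^{\lambda_B T/2}(\widehat m_T - m)$.

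For $\widehat\lambda_T$, the starting point is the representation $\widehat\lambda_T = F(\widehat m_T, A_T, C_T)$ with $F(\mu, a, c) = \bigl(a/(\mu - 1) + c\bigr)^{-1}$, $A_T = |\mathring{\mathcal T}_T|^{-1}\sum_{u \in \mathring{\mathcal T}_T}\zeta_u$ and $C_T = |\partial\mathcal T_T|^{-1}\sum_{u \in \partial \mathcal T_T}\zeta_u^T$, so that by construction $\lambda_B = F(m, \mathring{\mathcal E}_B(\mathrm{Id}), \partial \mathcal E_B(\mathrm{Id}))$. The test function $\mathrm{Id}$ is not in $\mathcal L_{C'}$, but every observed age satisfies $\zeta_u^T \leq T$, so the truncation $g_T(x) = x \wedge T$ lies in $\mathcal L_{C'}$ with $C' = T$ and coincides with $\mathrm{Id}$ on the data: $A_T = \mathcal E^T(\mathring{\mathcal T}_T, g_T)$ and $C_T = \mathcal E^T(\partial \mathcal T_T, g_T)$. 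Invoking the $L^2$-refinement of Theorems~\ref{rate en T}--\ref{rate avant T} signalled in Proposition~\ref{lem: vanishing test}, whose variance bound scales as $\|g\|_\infty^2 v_T(B)^2$, the choice $\|g_T\|_\infty = T$ produces
$$A_T - \mathring{\mathcal E}_B(g_T) = O_{\PP}\bigl(T v_T(B)\bigr), \qquad C_T - \partial \mathcal E_B(g_T) = O_{\PP}\bigl(T v_T(B)\bigr),$$
uniformly in $B \in \mathcal B_{b,C}$. The truncation bias $\mathring{\mathcal E}_B(\mathrm{Id}) - \mathring{\mathcal E}_B(g_T) = \int_T^\infty (x - T) f_{H_B}(x)\,dx$ is $O(e^{-cT})$ uniformly on $\mathcal B_{b,C}$ since $f_{H_B}$ has exponential tails controlled by $\lambda_B + \inf B \geq \lambda_B + b$, hence negligible against $T v_T(B)$.

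It remains to transfer these rates to $\widehat\lambda_T$ via a delta expansion. Uniformly in $B \in \mathcal B_{b,C}$ the quantities $\lambda_B$, $\mathring{\mathcal E}_B(\mathrm{Id})$, $\partial \mathcal E_B(\mathrm{Id})$ stay in a compact subset of $(0, \infty)$ and $m - 1 \geq 1$, so $F$ is $C^1$ with partials uniformly bounded in a fixed neighbourhood of the target $(m, \mathring{\mathcal E}_B(\mathrm{Id}), \partial \mathcal E_B(\mathrm{Id}))$. Localising on the high-probability event that $(\widehat m_T, A_T, C_T)$ sits in this neighbourhood, Taylor expansion gives
$$\widehat\lambda_T - \lambda_B = O_{\PP}\bigl(|\widehat m_T - m| + |A_T - \mathring{\mathcal E}_B(\mathrm{Id})| + |C_T - \partial \mathcal E_B(\mathrm{Id})|\bigr) = O_{\PP}\bigl(T v_T(B)\bigr),$$
where we use $e^{-\lambda_B T/2} \leq v_T(B)$ by the very definition~\eqref{def rate base}. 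The main obstacle in the whole programme is the second step: the proof requires not just the qualitative rate of Theorems~\ref{rate en T}--\ref{rate avant T} but an $L^2$ version with an explicit linear dependence of the constant on $\|g\|_\infty$, so that plugging the truncation $g_T$ with $\|g_T\|_\infty = T$ produces exactly the $T$ factor appearing in the final rate. This sharpness in $g$ is precisely the refinement asserted by Proposition~\ref{lem: vanishing test}, and securing it is where the real work lies.
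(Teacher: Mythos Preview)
Your proof is correct and follows essentially the same route as the paper: a second-moment bound for $\widehat m_T$ exploiting independence of the $\nu_u$ from the event $\{u\in\mathring{\mathcal T}_T\}$, and a truncation-plus-delta-method argument for $\widehat\lambda_T$.

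One point deserves comment. Your appeal to Proposition~\ref{lem: vanishing test} for the truncated identity $g_T(x)=x\wedge T$ is misplaced: that proposition bounds the variance through $\Phi_T(B,g)$, which involves $|g|_1$ and $|g|_2$, both infinite for $g_T$. What you actually need is much cheaper. Theorems~\ref{rate en T} and~\ref{rate avant T} assert tightness \emph{uniformly} over $(B,g)\in\mathcal B_{b,C}\times\mathcal L_{C'}$; since $g_T/T\in\mathcal L_1$ uniformly in $T$, apply the theorems directly to $g_T/T$ and multiply back by $T$. This is exactly the paper's device (it uses $g_T(x)=\min\{x,2T\}/(2T)$, but $T$ works equally well since every observed age is at most $T$). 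The linear dependence on $\|g\|_\infty$ is therefore a trivial homogeneity, not ``where the real work lies'' --- your final paragraph overstates the difficulty, and Proposition~\ref{lem: vanishing test} (designed for compactly supported kernels where one must track $|g|_1,|g|_2$ separately) is not the right tool here.
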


\subsubsection*{Reconstruction formula for $B(x)$}
An {\it estimator} $\widehat B_T:[0,\infty)\rightarrow \R$ of $B$ is a random function 
$$\widehat B_T(x)= \widehat B_T\big(x, (X(t))_{t \in [0,T]}), \;\;x\in [0,\infty)$$
that is measurable as a function of $(X(t))_{t \in [0,T]}$ but also as a function of $x$. By \eqref{hazard formula}, we have 
$$
B(x) = \frac{f_B(x)}{1-\int_0^x f_B(y)dy} $$ and from the definition
$\mathring{\mathcal E}_B\big(g\big) = m\int_0^\infty g(x) e^{-\lambda_Bx}f_B(x)dx$ we obtain the formal reconstruction formula
\begin{equation} \label{formal reconstruction}
B(x) =  \frac{\mathring{\mathcal E}_B\big(m^{-1}e^{\lambda_B\cdot}\delta_x(\cdot)\big)}{1- \mathring{\mathcal E}_B\big(m^{-1}e^{\lambda_B\cdot}{\bf 1}_{\{\cdot \leq x\}}\big)}
\end{equation}
where $\delta_x(\cdot)$ denotes the Dirac function at $x$. Therefore, substituting $m$ and $\lambda_B$ by the estimators defined in \eqref{est m} and \eqref{est lambda} and taking $g$ as a weak approximation of  $\delta_x$, we obtain a strategy for estimating $B(x)$ replacing furthermore $\mathring{\mathcal E}_B(\cdot)$ by its empirical version ${\mathcal E}^T(\mathring{\mathcal T}_T, \cdot)$.

\subsubsection*{Construction of a kernel estimator and function spaces}
Let $K: \R \rightarrow \R$ be a kernel function. For $h>0$, set $K_h(x)=h^{-1}K(h^{-1}x)$. In view of \eqref{formal reconstruction}, we define the estimator
\begin{align*}
\widehat B_T(x) & = \frac{{\mathcal E}^T\big(\mathring{\mathcal T}_T, \widehat m_T^{-1}e^{\widehat \lambda_T \cdot }K_h(x-\cdot)\big)} 
{1-{\mathcal E}^T\big(\mathring{\mathcal T}_T, \widehat m_T^{-1}e^{\widehat \lambda_T \cdot} {\bf 1}_{\{\cdot \leq x\}}\big)}
\end{align*}
on the set ${\mathcal E}^T\big(\mathring{\mathcal T}_T, \widehat m_T^{-1}e^{\widehat \lambda_T \cdot} {\bf 1}_{\{\cdot \leq x\}}\big) \neq 1$ and $0$ otherwise. Thus $\widehat B_T(x)$ is specified by the choice of the kernel  $K$ and the bandwidth $h>0$. Note that the observations $(\zeta_u, u \in \partial \mathcal T_T)$ only occur in the estimator $\widehat \lambda_T$ of $\lambda_B$. \\ 

We need the following property on $K$:

\begin{assumption} \label{kernel} The kernel $K:\R \rightarrow \R$ is differentiable with compact support and for some integer $n_0 \geq 1$, we have $\int_{-\infty}^\infty x^kK(x)dx= {\bf 1}_{\{k=0\}}$ for $k=1,\ldots, n_0$.
\end{assumption} 

Assumption~\ref{kernel} will enable us to have nice approximation results over smooth functions $B$, described in the following way: for a compact interval $\mathcal D \subset (0,\infty)$ and $\beta>0$, with $\beta=\lfloor \beta\rfloor + \{\beta\}$, $0< \{\beta\} \leq 1$ and $\lfloor \beta\rfloor$ an integer, let $\mathcal H_\mathcal D^\beta$ denote the H\" older space of functions $g:{\mathcal D}\rightarrow \R$ possessing a derivative of order $\lfloor \beta \rfloor$ that satisfies
\begin{equation} \label{def sob}
|g^{\lfloor \beta \rfloor}(y)-g^{\lfloor \beta \rfloor}(x)| \leq c(g)|x-y|^{\{\beta\}}.
\end{equation}
The minimal constant $c(g)$ such that \eqref{def sob} holds defines a semi-norm $|g|_{{\mathcal H}_\mathcal D^\beta}$. We equip the space ${\mathcal H}^\beta_\mathcal D$ with the norm 
$\|g\|_{{\mathcal H}^\beta_\mathcal D} = \sup_{x}|g(x)|+ |g|_{{\mathcal H}_\mathcal D^\beta}$ and the balls
$${\mathcal H}_\mathcal D^\beta(L) = \{g:\mathcal D\rightarrow \R,\;\|g\|_{{\mathcal H}_\mathcal D^\beta} \leq L\},\;L>0.$$

\subsection{Convergence results for $\widehat B_T(x)$} \label{convergence statistique}
We are ready to give our main result, namely
a rate of convergence of $\widehat B_T(x)$ for $x$ restricted to a compact interval $\mathcal D$, uniformly over H\" older balls ${\mathcal H}_\mathcal D^\beta(L)$ of (known) smoothness $\beta$ intersected with $\mathcal B_{b,C}$. Define
\begin{equation} \label{def rate esti}
w_T(B) =
T^{{\bf 1}_{\{\lambda_B=2\rho_B\}}} \exp\big(-\min\{\lambda_B, 2\rho_B\}\frac{\beta-(\lambda_B/\rho_B-1)_+/2}{2\beta+1}T\big)
\end{equation}
and note that when $\rho_B \geq \lambda_B$, we have $w_T(B) = e^{-\lambda_B \frac{\beta}{2\beta +1}T} \approx \E[|\mathcal T_T|]^{-\beta/(2\beta+1)}$.

\begin{thm}[Upper rate of convergence] \label{thm: upper rate}
Specify $\widehat B_T$ with a kernel satisfying Assumption~\ref{kernel} for some $n_0>1$ and
\begin{equation} \label{def bandwidth}
h=\widehat h_T = \exp\big(- \widehat \lambda_T \tfrac{1}{2\beta+1} T \big)
\end{equation}
for some $\beta \in [1/2,n_0)$. For every $b,C >0, L>0$, every compact interval $\mathcal D$ in $(0,\infty)$ (with non-empty interior) and every $x \in \mathcal D$,
$$w_T(B)^{-1}\big(\widehat B_T(x)-B(x)\big)$$
is $\mathcal B_{b,C} \cap \mathcal H^\beta_\mathcal D(L)$-tight for the parameter $B$.
\end{thm}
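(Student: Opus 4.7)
The plan is to treat $\widehat B_T(x)$ as a plug-in estimator derived from the reconstruction formula \eqref{formal reconstruction}. First I would decompose the error as an error of a ratio, then expand the numerator error into a bias term (kernel smoothing), a stochastic fluctuation (empirical measure), and two plug-in errors (replacing $(m,\lambda_B)$ and the oracle bandwidth by their data-driven counterparts).

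Write $\widehat N_T(x)$, $\widehat D_T(x)$ for the numerator and the subtracted part of the denominator of $\widehat B_T(x)$, and set $N_B(x) = f_B(x)$, $D_B(x) = \int_0^x f_B(y)dy$, so that $B(x) = N_B(x)/(1 - D_B(x))$. Starting from
\begin{equation*}
\widehat B_T(x) - B(x) = \frac{\widehat N_T(x) - N_B(x)}{1 - \widehat D_T(x)} + B(x)\,\frac{\widehat D_T(x) - D_B(x)}{1 - \widehat D_T(x)},
\end{equation*}
the denominator is bounded below because $1 - D_B(x) = \exp(-\int_0^x B)$ is uniformly at least $\exp(-b\max\{C,1\}\sup\mathcal D) > 0$ on $\mathcal B_{b,C}$; the analysis below will show that $\widehat D_T - D_B$ tends to $0$ in probability, so the random denominator is also bounded below with high probability. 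The problem thus reduces to establishing $\widehat N_T(x) - N_B(x) = O_\PP(w_T(B))$, the term $\widehat D_T - D_B$ being treated analogously but more easily, since the test function $\mathbf 1_{\{\cdot \leq x\}}$ is bounded and no bandwidth needs to be tuned.

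Now introduce the oracle bandwidth $h_T^\star = \exp(-\lambda_B T/(2\beta+1))$ and the oracle test function $g_T^\star = m^{-1} e^{\lambda_B \cdot} K_{h_T^\star}(x - \cdot)$, which satisfies $\mathring{\mathcal E}_B(g_T^\star) = \int K_{h_T^\star}(x-y) f_B(y) dy$. I would decompose
\begin{equation*}
\widehat N_T(x) - f_B(x) = \bigl[\widehat N_T(x) - \mathcal E^T(\mathring{\mathcal T}_T, g_T^\star)\bigr] + \bigl[(\mathcal E^T - \mathring{\mathcal E}_B)(g_T^\star)\bigr] + \bigl[\mathring{\mathcal E}_B(g_T^\star) - f_B(x)\bigr].
\end{equation*}
The bias (third bracket) is $O(h_T^\star{}^\beta) = O(e^{-\lambda_B \beta T/(2\beta+1)})$ by standard kernel theory, using Assumption~\ref{kernel} on the vanishing moments of $K$ and the H\"older regularity $\beta$ of $f_B = B\exp(-\int_0^\cdot B)$ inherited from $B \in \mathcal H^\beta_\mathcal D(L)$. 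The fluctuation (second bracket) cannot be bounded via the $L^\infty$-based Theorem~\ref{rate avant T}, which would produce a wasteful $v_T(B)/h_T^\star$ error; instead the plan is to invoke the finer $L^2(\PP)$ control of Proposition~\ref{lem: vanishing test}, yielding an order $v_T(B)\,\|g_T^\star\|_{L^2} \asymp v_T(B)/\sqrt{h_T^\star}$. Balancing bias $h^\beta$ against variance $v_T(B)/\sqrt{h}$ produces $h \asymp v_T(B)^{2/(2\beta+1)}$ and a rate $v_T(B)^{2\beta/(2\beta+1)}$, which specialises to $e^{-\lambda_B \beta T/(2\beta+1)}$ when $\rho_B \geq \lambda_B/2$; the extra $(\lambda_B/\rho_B - 1)_+/2$ factor in \eqref{def rate esti} is then precisely the suboptimality when $\rho_B < \lambda_B/2$, where the oracle bandwidth is tuned to the ``wrong'' variance exponent.

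It remains to handle the first bracket, which measures the cost of substituting $(\widehat m_T, \widehat \lambda_T, \widehat h_T)$ for $(m, \lambda_B, h_T^\star)$. Replacing $m$ by $\widehat m_T$ is benign, costing only a multiplicative factor $1 + O_\PP(e^{-\lambda_B T/2})$ thanks to Proposition~\ref{prop: conv lambda}. Replacing $\lambda_B$ by $\widehat \lambda_T$ inside the exponential $e^{\widehat \lambda_T y}$ is also benign: $y$ stays in a bounded neighbourhood of $\mathcal D$ and $|\widehat \lambda_T - \lambda_B| = O_\PP(T v_T(B))$, so the multiplicative distortion is of negligible order. The genuinely delicate step---and the main obstacle of the proof---is the substitution $\widehat h_T \leftrightarrow h_T^\star$: their ratio equals $\exp(-(\widehat \lambda_T - \lambda_B) T/(2\beta+1))$, and the factor $T$ multiplying the estimation error is precisely what would prevent a naive bound from closing. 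Proposition~\ref{lem: kolmo} is the stability result (a Kolmogorov-type modulus of continuity for the kernel estimator viewed as a process indexed by the bandwidth) that transfers the rate from the oracle $h_T^\star$ to the data-driven $\widehat h_T$. Assembling the four pieces yields the announced $\mathcal B_{b,C} \cap \mathcal H^\beta_\mathcal D(L)$-tightness of $w_T(B)^{-1}(\widehat B_T(x) - B(x))$.
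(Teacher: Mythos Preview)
Your approach is correct and is essentially the same as the paper's: the same ratio decomposition, the same bias bound via kernel approximation on $\mathcal H^\beta_\mathcal D$, the fluctuation controlled through the refined $L^2(\PP)$ estimate of Proposition~\ref{lem: vanishing test}, and the crucial bandwidth stability handled by the Kolmogorov-type Proposition~\ref{lem: kolmo}. The only organizational difference is that you anchor the bias--variance analysis at the deterministic oracle $h_T^\star$ and then correct to $\widehat h_T$, whereas the paper first strips off $(\widehat m_T,\widehat\lambda_T)$ via Lemma~\ref{lem: degager est preli} (keeping the random $\widehat h_T$) and then bounds both bias and fluctuation uniformly over the small interval $[h_T(\beta)(1\pm\kappa T^2 v_T(B))]$ using Proposition~\ref{lem: kolmo} directly.
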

We have a partial optimality result in a minimax sense. Define
$$\mathcal B^{+}_{b,C} = \big\{B \in \mathcal B_{b,C},\; \lambda_B \leq \rho_B\big\}\;\;\text{and}\;\;\mathcal B^{-}_{b,C} =  \big\{B \in \mathcal B_{b,C},\; \rho_B \leq \lambda_B\big\}$$
so that $\mathcal B_{b,c} = \mathcal B^{+}_{b,C} \cup \mathcal B^{-}_{b,C}$
We then have the following

\begin{thm}[Lower rate of convergence over $\mathcal B_{b,C}^+$] \label{thm: borne inf}
Let $\mathcal D$ be a compact interval in $(0,\infty)$. For every $x \in \mathcal D$ and every positive $b, C, \beta, L$, there exists $C'>0$ such that 
$$\liminf_{T \rightarrow \infty}\inf_{\widehat B_T}\sup_{B }\PP\big(e^{\lambda_{B}\frac{\beta}{2\beta+1}T}\big|\widehat B_T(x)-B(x)\big| \geq C'\big) > 0,$$
where  the supremum is taken among all $B\in \mathcal B^{+}_{b,C} \cap \mathcal H^\beta_\mathcal D(L)$ and the infimum is taken among all estimators.
\end{thm}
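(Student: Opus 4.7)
The natural route is Le Cam's two-point method (see e.g.\ Tsybakov), combined with the likelihood machinery for Bellman--Harris processes developed in L\"ocherbach~\cite{eva1}. Pick any sufficiently regular reference $B_0 \in \mathcal B^{+}_{b,C} \cap \mathcal H_{\mathcal D}^{\beta}(L/2)$ for which all constraints hold strictly (for example $B_0 \equiv b_0$ constant with $b<b_0< b\max\{C,1\}/2$, which by \eqref{def malthus} and the definition of $\rho_B$ satisfies $\lambda_{B_0}<\rho_{B_0}$). Fix a smooth, compactly supported bump $\varphi:\R \to \R$ with $\mathrm{supp}(\varphi) \subset [-1,1]$, $\varphi(0)=1$, and define, for $h>0$ small enough so that $x+[-h,h] \subset \mathcal D$,
\begin{equation*}
B_1(y) = B_0(y) + \varepsilon\, \varphi\bigl((y-x)/h\bigr), \qquad \varepsilon = c_0\, h^{\beta},
\end{equation*}
with $c_0$ chosen so that both $B_0, B_1 \in \mathcal B^{+}_{b,C} \cap \mathcal H_{\mathcal D}^{\beta}(L)$ uniformly in small $h$. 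The latter holds since the H\"older seminorm of the perturbation is $O(c_0)$, the sup-norm is $O(c_0 h^\beta) \to 0$, and by continuity of $B \mapsto \lambda_B - \rho_B$ the inequality $\lambda_{B_1} \leq \rho_{B_1}$ is preserved for $h$ small. By construction $|B_1(x)-B_0(x)|=\varepsilon$.

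The core step is to show $\mathrm{KL}(\PP_{B_0}^T \,\|\, \PP_{B_1}^T) = O(1)$ when we calibrate $\varepsilon$ optimally. The Bellman--Harris model is a marked point process whose intensity at time $t$ is $\Lambda^i_t = \sum_{u \in \mathcal T_t} B_i(\mathrm{age}_u(t))$ (the offspring distribution cancels as it is the same under both hypotheses). The standard entropy formula for point processes yields
\begin{equation*}
\mathrm{KL}\bigl(\PP_{B_0}^T \,\|\, \PP_{B_1}^T\bigr) = \E_{B_0}\!\left[\int_0^T \!\!\sum_{u\in\mathcal T_t} \bigl(\Phi\circ (B_0,B_1)\bigr)\!\bigl(\mathrm{age}_u(t)\bigr)\, dt\right],
\end{equation*}
where $\Phi(a,b) = a\log(a/b) - a + b \leq C (a-b)^2/a$ for $a$ bounded away from $0$. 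Applying the many-to-one formula (Proposition~\ref{prop: Mt1}) and the uniform control of the tagged particle semigroup $P_{H_{B_0}}^t$ from Proposition~\ref{prop: convergence semigroupe}, I would get
\begin{equation*}
\mathrm{KL}\bigl(\PP_{B_0}^T \,\|\, \PP_{B_1}^T\bigr) \;\lesssim\; \E_{B_0}[|\mathcal T_T|]\int_0^\infty (B_1-B_0)^2(y)\, \mu_{B_0}(y)\, dy \;\lesssim\; e^{\lambda_{B_0}T}\,\varepsilon^2 h.
\end{equation*}
The regime $B_0 \in \mathcal B^{+}_{b,C}$ (equivalently $\rho_{B_0}\geq \lambda_{B_0}$) is precisely what ensures that this bound is not polluted by the slow ergodicity of the tagged chain, which is why the minimax statement is restricted to $\mathcal B^+_{b,C}$.

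With $\varepsilon = c_0 h^\beta$, balancing $e^{\lambda_{B_0}T}\varepsilon^2 h \asymp 1$ forces $h = h_T = e^{-\lambda_{B_0}T/(2\beta+1)}$ and hence $\varepsilon = \varepsilon_T \asymp e^{-\lambda_{B_0}\frac{\beta}{2\beta+1}T}$. Le Cam's inequality then gives, for any estimator $\widehat B_T$,
\begin{equation*}
\max_{i\in\{0,1\}} \PP_{B_i}\!\bigl(|\widehat B_T(x)-B_i(x)| \geq \varepsilon_T/2\bigr) \;\geq\; \tfrac{1}{2}\bigl(1 - \tfrac{1}{\sqrt 2}\sqrt{\mathrm{KL}(\PP_{B_0}^T\|\PP_{B_1}^T)}\bigr) \;\geq\; c > 0,
\end{equation*}
yielding the claimed lower bound with $C' = c_0 \varphi(0)/2$.

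The main obstacle is the KL computation: writing the likelihood of a supercritical tree observed on $[0,T]$ (with censored branches) and then controlling its expectation uniformly in $B$ on $\mathcal B^{+}_{b,C}$. The Girsanov formula for point processes gives the right expression, but passing from the integral against the random intensity $\sum_{u\in\mathcal T_t}(\cdot)(\mathrm{age}_u(t))$ to an integral against $e^{\lambda_{B_0}t}\mu_{B_0}(\cdot)\,dt$ with an error that does not blow up requires invoking the many-to-one formula together with the geometric ergodicity bound from Proposition~\ref{prop: convergence semigroupe}; this is where the constraint $\lambda_{B_0}\leq \rho_{B_0}$ is crucial. Everything else (verifying that the perturbed $B_1$ remains in the H\"older ball and in $\mathcal B^{+}_{b,C}$ for $h$ small, and optimizing the bandwidth) is routine.
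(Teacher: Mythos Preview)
Your approach is essentially the paper's: a two-point Le Cam argument with a localized bump of height $\asymp h^\beta$ and width $h$, the L\"ocherbach likelihood for the point process, and the many-to-one formula to reduce the KL to $e^{\lambda_{B_0}T}\varepsilon^2 h$. The calibration $h=e^{-\lambda_{B_0}T/(2\beta+1)}$ and the conclusion via Pinsker are identical.

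There is one point you gloss over that the paper handles explicitly: the normalization in the theorem is $e^{\lambda_B\frac{\beta}{2\beta+1}T}$, which is \emph{parameter-dependent}. Your Le Cam inequality gives $\max_i \PP_{B_i}(|\widehat B_T(x)-B_i(x)|\geq \varepsilon_T/2)\geq c$, but the target events are $\{|\widehat B_T(x)-B_i(x)|\geq C' e^{-\lambda_{B_i}\frac{\beta}{2\beta+1}T}\}$, with two different $\lambda_{B_i}$. If the maximum is attained at $i=1$ you need $\varepsilon_T \gtrsim e^{-\lambda_{B_1}\frac{\beta}{2\beta+1}T}$, whereas you only built $\varepsilon_T\asymp e^{-\lambda_{B_0}\frac{\beta}{2\beta+1}T}$. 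The paper deals with this by taking the perturbation kernel $K\geq 0$, so that $B_0\leq B_T$ pointwise and hence $\lambda_{B_0}\leq \lambda_{B_T}$; then $\min\{\lambda_{B_0},\lambda_{B_T}\}=\lambda_{B_0}$ and the triangle inequality $e^{\lambda_{B_0}\cdots}|\widehat B-B_0|+e^{\lambda_{B_T}\cdots}|\widehat B-B_T|\geq e^{\lambda_{B_0}\cdots}|B_0-B_T|=a_{K,\delta}$ closes the argument cleanly. You can either adopt this sign choice, or argue that $|\lambda_{B_1}-\lambda_{B_0}|\lesssim \|B_1-B_0\|_{L^1}\asymp \varepsilon h=c_0 h^{\beta+1}$, so that $(\lambda_{B_1}-\lambda_{B_0})T\to 0$ and the two normalizations are asymptotically equivalent; either fix is short, but it should be said.

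A second, smaller point: the paper does not rely on a continuity argument for ``$\lambda_{B_1}\leq \rho_{B_1}$ is preserved for $h$ small''; instead it takes $B_0$ in the structural class $\mathbb B_{b,m}=\{B:b\leq B\leq \tfrac{m}{m-1}b,\ B'-B^2\leq 0\}\subset \mathcal B^+_{b,m/(m-1)}$ and checks the (nonnegative) perturbation keeps $B_T$ there. Your continuity claim is correct for the constant $B_0$ you chose (since then $\rho_{B_0}-\lambda_{B_0}=b_0>0$ strictly), but for a general $B_0\in\mathcal B^+_{b,C}$ one may sit on the boundary $\rho_{B_0}=\lambda_{B_0}$, so some care is needed.
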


We observe a conflict between the rate growth of the tree $\lambda_B$ and its convergence rate to equilibrium $\rho_B$. On $\mathcal B^{+}_{b,C}$ we retrieve the expected usual optimal rate of convergence $\exp(-{\lambda_B\tfrac{\beta}{2\beta+1}T}) \approx \E[|\mathcal T_T|]^{-\beta/(2\beta+1)}$ whereas if $\rho_B \leq \lambda_B$, we obtain the deteriorated rate $\exp\big(-\min\{\lambda_B, 2\rho_B\}(\beta-\tfrac{1}{2}(\tfrac{\lambda_B}{\rho_B}-1))/(2\beta+1)T\big)$ and this rate is presumably not optimal, as discussed at length in Section \ref{discussion} below.

\subsection{Discussion of the results} \label{discussion}


\subsubsection*{Rates of convergence} 
The ``parametric case" for a constant division rate $B(x)=b$ with $b>0$ has a statistical simpler structure, but also a nice probabilistic feature since the process $t \leadsto |\partial \mathcal T_t|$, {\it i.e.} the number of cells alive at time $t$ is Markov. In that setting, explicit (asymptotic) information bounds are available (Athreya and Keiding~\cite{AthreyaKeiding}). In particular, the model is regular with asymptotic Fisher information of order $e^{\lambda_BT}$, thus the best-achievable (normalised) rate of convergence is $e^{-\lambda_BT/2}$. This is consistent with the  
minimax rate $\exp(-{\lambda_B\tfrac{\beta}{2\beta+1}T})$ that we obtain for the class $\mathcal H^\beta_\mathcal D(L) \cap \mathcal B_{b,C}^+$, and we retrieve the parametric rate by formally setting $\beta=\infty$ in the previous formula.\\ 

However, this rate is strongly parameter dependent in the sense that it also depends on $B$ via $\lambda_B$. This dependence is severe, since it appears at the same level as the smoothness exponent $\beta/(2\beta+1)$ in the rate exponent $\frac{\beta}{2\beta+1}\lambda_B$. For instance, in the simplest case of a constant function $B(x)=b$ for every $x\geq 0$, we have $\lambda_B = (m-1)b$, and we see that $B$ ($b$ here) plays at the same level as $\beta/(2\beta+1)$. This also has a non-trivial technical cost in establishing rates of convergence for the estimator $\widehat B_T(x)$: in order to minimise the bias-variance tradeoff, the (log)-bandwidth has to be chosen as $-\lambda_B\frac{1}{2\beta+1}T\big(1+o(1)\big)$ exactly, and this is achieved by the plug-in rule $-\widehat \lambda_T\frac{1}{2\beta+1}T$ thanks to Proposition~\ref{lem: kolmo}. We then have to carefully check that our estimator is not too sensitive to this further approximation, and this requires the analysis of the smoothness of the process $h \leadsto \widehat B_{T,h}(x)$ where $h$ is the bandwidth of $\widehat B_T(x)$, as shown in Proposition~\ref{lem: kolmo}.

\subsubsection*{Fast convergence to equilibrium in $\mathcal B_{b,C}^+$ versus slow convergence in $\mathcal B_{b,C}^-$} While we have an optimal rate of convergence over $\mathcal B_{b,B}^+$, the situation is unclear over $\mathcal B_{b,C}^-$. First, the convergence rate to equilibrium $\rho_B$ should be replaced by an estimator and that would lead to extraneous difficulties. Even if we knew $\rho_B$, optimising the bias-variance trade-off in the proof of Theorem~\ref{thm: upper rate} would not lead to the expected rate $\exp(-\min\{\lambda_B, 2\rho_B\}\tfrac{\beta}{2\beta+1}T)$ but to an intermediate rate that reads
\begin{equation} \label{intermediate vitesse}
\exp\big(-\min\{\lambda_B, 2\rho_B\}\frac{\min\{\max\{\rho_B/\lambda_B, 1/2\}, 1\}\beta}{2\min\{\max\{\rho_B/\lambda_B, 1/2\}, 1\}\beta+1}T\big), 
\end{equation}
and that continuously deteriorates as $\rho_B$ separates $\lambda_B$ from below. Let us also mention that the classes $\mathcal B_{b,C}^+$ and $\mathcal B_{b,C}^-$ are never trivial. To that end, define
\begin{equation} \label{def top smooth}
\mathbb B_{b,m} = \big\{B \in \mathcal B_{b,m/(m-1)},\;\forall x\geq 0:B'(x)-B(x)^2 \leq 0\big\}
\end{equation}
where $m=\sum_{k \geq 2}kp_k$ is the mean number of children at each branching event. 
\begin{prop} \label{classe B plus}  For any $b > 0$, we have $\mathbb B_{b,m} \subset \mathcal B^+_{b,m/(m-1)}$. For every $C >2m(m+2)b/(m-1), \beta >0$ and any compact interval $\mathcal D \subset (0,\infty)$, there exists $B \in \mathcal H_{\mathcal D}^\beta$ such that $B \in \mathcal B_{b,C}^-$ and $B  \notin \mathcal B_{b,C}^+$.
\end{prop}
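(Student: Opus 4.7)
To prove $\mathbb B_{b,m} \subset \mathcal B^+_{b,m/(m-1)}$, I first observe that differentiating $f_B(x) = B(x)e^{-\int_0^x B(y)dy}$ yields $f_B'(x) = (B'(x)-B(x)^2)e^{-\int_0^x B(y)dy}$, so the hypothesis $B' - B^2 \le 0$ is equivalent to $f_B$ being nonincreasing on $[0,\infty)$. The Malthus equation \eqref{def malthus} rewrites as $\int_0^\infty f_{H_B}(y)dy = 1$, hence $\bar F_{H_B}(x) := 1-\int_0^x f_{H_B}(y)dy = m\int_x^\infty e^{-\lambda_B y}f_B(y)dy$, and an integration by parts gives
$$\lambda_B\,\bar F_{H_B}(x) - f_{H_B}(x) \;=\; m\int_x^\infty e^{-\lambda_B y}f_B'(y)\,dy \;\le\; 0.$$
Thus $H_B(x)=f_{H_B}(x)/\bar F_{H_B}(x) \ge \lambda_B$ for every $x>0$, whence $\rho_B \ge \lambda_B$ and $B \in \mathcal B^+_{b,m/(m-1)}$.

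\textbf{Construction.} Let $\mathcal D = [d_1,d_2]\subset(0,\infty)$, pick $0<a<a+\epsilon<d_1$, and set $M=b(C-1)$. Define the piecewise constant
$$B(x) \;=\; b\,{\bf 1}_{[0,a)\cup[a+\epsilon,\infty)}(x) + (b+M)\,{\bf 1}_{[a,a+\epsilon]}(x),$$
so that $b \le B \le bC$ and $B\equiv b$ on $\mathcal D$, hence $B\in\mathcal H^\beta_\mathcal D$ trivially for every $\beta>0$ (a $C^\infty$ smoothing on vanishingly small neighborhoods of $a$ and $a+\epsilon$ can be added without affecting the estimates, if smoothness on all of $[0,\infty)$ is desired). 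It remains to establish $\rho_B<\lambda_B$ for suitable $a$ small and $M\epsilon$ large; this gives simultaneously $B\in\mathcal B_{b,C}^-$ and $B\notin \mathcal B_{b,C}^+$.

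\textbf{Key estimates.} On the piece $[0,a)$, $B\equiv b$ gives $f_B(y)=be^{-by}$, hence $f_{H_B}(a^-)=mbe^{-(\lambda_B+b)a}$; a direct computation over the three pieces yields
$$\bar F_{H_B}(a) \;=\; e^{-(\lambda_B+b)a}\left[\tfrac{m(b+M)}{\lambda_B+b+M}\bigl(1-e^{-(\lambda_B+b+M)\epsilon}\bigr) + \tfrac{mb}{\lambda_B+b}\,e^{-(\lambda_B+b+M)\epsilon}\right].$$
In the regime $M\epsilon\to\infty$ this gives $H_B(a^-)\to b + b\lambda_B/(b+M)$. Similarly, expanding the Malthus equation $\int_0^\infty f_B(y)e^{-\lambda_B y}dy = 1/m$ over the three pieces yields
$$\lambda_B - (m-1)b \;=\; \frac{m\lambda_B M\,e^{-(\lambda_B+b)a}\bigl(1-e^{-(\lambda_B+b+M)\epsilon}\bigr)}{\lambda_B+b+M},$$
from which (together with the elementary comparison $\int_0^\infty f_B(y)e^{-\lambda y}dy \ge b/(\lambda+b)$ obtained by the change of variable $F(x)=\int_0^x B$ in the Laplace transform of $f_B$) I extract the quantitative lower bound $\lambda_B > b + b^2/M$ under the hypothesis $C>2m(m+2)b/(m-1)$. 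Since $H_B(a^-)<\lambda_B$ is equivalent precisely to this inequality, one concludes $\rho_B\le H_B(a^-)<\lambda_B$.

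\textbf{Main difficulty.} The delicate point is the quantitative lower bound $\lambda_B > b+b^2/M$, especially in the borderline case $m=2$ where the crude monotonicity estimate $\lambda_B \ge (m-1)b = b$ offers no margin. One must extract a quantitative improvement from the peak contribution to the Malthus balance displayed above and carefully track the error terms as $\epsilon \to 0$ with $M\epsilon$ large and $a \to 0$, so that the threshold precisely matches the stated $C > 2m(m+2)b/(m-1)$.
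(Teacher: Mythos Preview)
Your argument for the first inclusion is correct and in fact cleaner than the paper's: both compute the same quantity $G_B(x)=f_{H_B}(x)-\lambda_B\bar F_{H_B}(x)$, but your integration-by-parts identity gives directly $G_B(x)=-m\int_x^\infty e^{-\lambda_B y}f_B'(y)\,dy\ge 0$, whereas the paper differentiates $G_B$, checks $G_B'\le 0$, and then separately establishes $G_B(0)\ge 0$ via a monotonicity argument on $\lambda_B$.

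For the second part your construction differs from the paper's and carries a real inconsistency. The paper uses a \emph{step}: $B=b$ on $[0,x_0]$, $B=c$ on $[x_1,\infty)$; sending $x_0,x_1\to 0$ forces $\lambda_B\to(m-1)c$, while $H_B(0)=mB(0)=mb$ stays fixed, so $\rho_B\le mb<(m-1)c\approx\lambda_B$ as soon as $c>mb/(m-1)$, with no constraint from $\mathcal D$ since $x_0,x_1$ may be taken arbitrarily small. Your \emph{bump} construction instead fixes $M=b(C-1)$ and then requires ``$\epsilon\to 0$ with $M\epsilon$ large'', which is contradictory; what you actually need is $(\lambda_B+b+M)\epsilon$ large, i.e.\ $\epsilon$ bounded below by roughly $1/(bC)$, and this conflicts with your placement constraint $a+\epsilon<d_1$ whenever $\mathcal D$ lies close to the origin. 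Moreover the detour through $H_B(a^-)$ is unnecessary: since $B(0)=b$ in your construction too, $H_B(0)=mb$ already gives $\rho_B\le mb$, and in the regime where the bump dominates one has $\lambda_B\approx(m-1)(b+M)$ and $H_B(a^-)\approx mb$ anyway.

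Your ``Main difficulty'' paragraph is thus misdiagnosed: the delicate point is not squeezing out $\lambda_B>b+b^2/M$ via the displayed Malthus balance, but rather that your parameter regime is internally inconsistent and depends on $d_1$ in a way the paper's step construction does not. Both proofs omit the arithmetic behind the exact threshold $C>2m(m+2)b/(m-1)$ (which in the paper's account corresponds to the stronger conclusion $\rho_B<\lambda_B/2$), but the paper's scheme is transparently workable for arbitrary $\mathcal D$, whereas yours is not as written.
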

In the proof of Proposition \ref{classe B plus} below we show a versatility in the choice of functions $B$ that yield either fast or slow rate of convergence to equilibrium. Finally, one could (at least formally) replace $\rho_B$ by $\rho_B^\star$, the optimal geometric rate of convergence to equilibrium defined in\eqref{optimal geo} above,  but that would only improve on the rate of convergence \eqref{intermediate vitesse} replacing $\rho_B$ by $\rho_B^\star$ which we do not know how to estimate, neither analytically nor statistically and the obtained result would still presumably not be optimal. This suggests a totally different estimation strategy -- that we do not have at the moment -- whenever convergence to equilibrium is slow. 
\subsubsection*{Other loss functions} If $\mathcal K \subset \mathring{\mathcal D}$ is a closed interval ($\mathring {\mathcal D}$ denotes the interior of $\mathcal D$), then Theorem~\ref{thm: upper rate} also holds uniformly in $x \in \mathcal K$. So we also have that 
$$w_T(B)^{-2}\int_{\mathcal K}(\widehat B_T(x)-B(x)\big)^2dx$$
is $\mathcal B_{b,C} \cap \mathcal H^\beta_\mathcal D(L)$-tight for the parameter $B$. For integrated squared error-loss, we could weaken the smoothness constraint $B \in {\mathcal H}_\mathcal D^\beta(L)$ to Sobolev smoothness (see {\it e.g.}~\cite{Tsybakov}) when the smoothness is measured in $L^2$-norm. An extension of Theorem~\ref{thm: borne inf} can be obtained likewise.

\subsubsection*{Smoothness adaptation} Our estimator $\widehat B_T(x)$ {\it is not} $\beta$-adaptive, in the sense that the choice of the $\mathcal B_{b,C}^+$-optimal (log) bandwidth $-\widehat \lambda_T\frac{1}{2\beta+1}T$ still depends on $\beta$, which is unknown in principle. In the numerical implementation Section~\ref{section: simus} below, we address this issue from a practical point of view. However, a theoretical result is still needed. The classical analysis of adaptive (or other) kernel methods \`a la Lepski for instance shows that this boils down to proving concentration inequalities of the type
\begin{equation} \label{concentration}
\PP\big(\big|\mathcal E^T\big(\mathring{\mathcal T}_T,g_h\big)- \mathring{\mathcal E}_B(g_h)\big| \geq e^{\lambda_BT/2} c(q,T)\big) \leq e^{-q\lambda_BT},\;\;q>0,
\end{equation}
where, for $0 < h^{-1} \leq e^{\lambda_BT}$, the test function  $g_h$ has the form $g_h(y) = h^{-1/2}g\big(h^{-1}(x-y)\big)$ with $x \in \mathcal D$ and $g \in \mathcal L_C$. The threshold $c(q,T)$ should be of order $q \lambda_B T$ and would inflate the risk by a slow term  (of order T). By a suitable choice of $q$, it would then be possible to obtain adaptation for $\beta$ in compact intervals. Concentration inequalities like \eqref{concentration} have been explored in~\cite{valere} in discrete time. To the best of our knowledge, such inequalities are not yet available in continuous time and lie beyond the scope of the paper.

\subsubsection*{Information from $\mathring{\mathcal T}_T$ versus $\partial{\mathcal T}_T$}
In the regime $B \in \mathcal B_{b,C}^+$, having
$$\partial \mathcal E_B(g) = \lambda_B\frac{m}{m-1} \int_0^\infty g(x)e^{-\lambda_Bx}\exp\big(-\int_0^x B(y)dy\big)dx $$
and ignoring the fact that the constants $m$ and $\lambda_B$ are unknown (or rather knowing that they can be estimated at the superoptimal rate $e^{\lambda_BT/2}$), we can anticipate that by picking a suitable test function $g$ mimicking a delta function $g(x) \approx \delta_x$, the information about $B(x)$ can only be inferred through $\exp(-\int_0^x B(y)dy)$, which imposes to further take a derivative hence some ill-posedness.\\

We can briefly make all these arguments more precise (still in the regime $B \in \mathcal B_{b,C}^+$) : we assume that we have estimators of $\widehat m_T$ of $m$ and $\widehat \lambda_T$ of $\lambda_B$  (using the ones defined in \eqref{est m} and \eqref{est lambda} or by any other means) that converge with rate $T^{-1}e^{\lambda_BT/2}$ as in Proposition~\ref{prop: conv lambda}. Consider the quantity
$$\widehat f_{h,T}(x) = - \mathcal E^T\Big(\partial \mathcal T_T, \frac{\widehat m_T-1}{\widehat \lambda_T\widehat m_T}\big(K_{h}\big)'(x-\cdot)\Big)$$
for a kernel satisfying Assumption~\ref{kernel}.
By Theorem~\ref{rate en T} and integrating by part, we readily see that 
\begin{equation} \label{conv ill posed}
\widehat f_{h,T} \rightarrow -\partial \mathcal E_B\Big(\frac{m-1}{\lambda_B m}\big(K_{h}\big)'(x-\cdot)\Big)=\int_0^\infty K_h(x-y)f_{B+\lambda_B}(y)dy
\end{equation}
in probability as $T \rightarrow \infty$, where $f_{B+\lambda_B}$ is the density associate to the division rate $x \leadsto B(x)+\lambda_B$. On the one hand, it is not difficult to show that Proposition~\ref{lem: vanishing test} (used in the proof of Theorem~\ref{thm: upper rate} below) is valid when substituting $\mathring{\mathcal T}_T$ by $\partial \mathcal T_T$, so we expect (altough not formally established) the rate of convergence in \eqref{conv ill posed} to be of order $h^{-3/2}e^{\lambda_BT/2}$ since we take the derivative of the kernel $K_h$. 
 On the other hand, the limit  $\int_0^\infty K_h(x-y)f_{B+\lambda_B}(y)dy$ approximates $f_{B+\lambda_B}(x)$ with an error of order $h^\beta$ if $B \in \mathcal H^\beta_\mathcal D.$ Balancing the two error terms in $h$, we see that we can estimate $f_{B+\lambda_B}(x)$ with an error of (presumably optimal) order $\exp(-\lambda_B\tfrac{\beta}{2\beta+3}T)$. Due to the fact that the denominator in representation \eqref{hazard formula} can be estimated with parametric error rate $\exp(-\lambda_BT/2)$  (possibly up to polynomially slow terms in $T$), we end up with the rate of estimation $\exp(-\lambda_B\frac{\beta}{2\beta+3}T)$ for $B(x)$ as well, and that can be related to an ill-posed problem of order 1 (see for instance~\cite{Tsybakov}).\\ 

This phenomenon, namely the structure of an ill-posed problem of order 1 in restriction to data alive at time $T$, has already been observed in other settings: for the estimation of a size-division rate from living cells at a given large time in Doumic {\it et al.}~\cite{DPZ, DHRR} or for the estimation of the dislocation measure for a homogeneous fragmentation in Hoffmann and Krell~\cite{HK}. 
Note also that this phenomenon does not appear in parametric estimation, since the number of data in $\mathring {\mathcal T}_T$ and $\partial \mathcal T_T$ are of the same order of magnitude (or put differently, the rates in Theorems~\ref{rate en T} and~\ref{rate avant T} are the same and govern the rate of estimation of a one dimensional parameter).

\section{\textsc{Numerical implementation}} \label{section: simus}

We assume that each cell $u \in \mathcal{U}$ has exactly two children at each division ($p_2 = 1$). This can model the evolution of a population of cells reproducing by binary divisions, as described deterministically by \eqref{det description}. We pick a trial division rate $B$ defined analytically by
$$
B(x) = 
\left\{
\begin{array}{lll}
\tfrac{1}{3}x^3-\tfrac{7}{8}x^2+\tfrac{5}{8}x+\tfrac{4}{10} & \text{if} & 0\leq x\leq\tfrac{3}{2} \\ \\
\tfrac{119}{160}-\tfrac{1}{4}\exp\big(-(x-\tfrac{3}{2})\big) & \text{if} & x > \tfrac{3}{2}
\end{array}
\right.
$$
and represented in Figure~\ref{fig: IC} (bold red line). We have $b\leq B(x) \leq \tfrac{m}{m-1}b$ for any $x\geq 0$ for $b = 0.4$ and $m=2$ and the lifetime density $f_B$ is non increasing (except in a vicinity of zero). Given $T> 0$ we simulate the lifetime of the rooted cell $\zeta_{\emptyset}$ with probability density $f_{B}$ and set $d_{\emptyset} = \zeta_{\emptyset}$. For $u  \in \mathcal{U}$ such that $d_{u} > T$, we do not simulate the lifetimes of its descendants since they are not in the observation scheme~$\mathring{\mathcal{T}}_T\cup\partial \mathcal{T}_T$.  For $u  \in \mathcal{U}$ such that $d_u \leq T$ we simulate $\zeta_{u0}$ and $\zeta_{u1}$ independently with probability density $f_{B}$; we set $d_{u0} : = d_u + \zeta_{u0}$ and $d_{u1} : = d_u + \zeta_{u1}$. Using R software, we generate $M = 100$ trees up to time $T =  23$, so that the mean number of observations $|\mathring{\mathcal{T}}_T|$ is sufficiently large. (Note that for a binary tree, we always have the identity $|\partial \mathcal{T}_T| = |\mathring{\mathcal{T}}_T|+1$.) Figure~\ref{Fig 1} represents a typical observation scheme with continuous or discrete representation. The (random) number of observations fluctuates a lot as shown in Table~\ref{tab : stat desc N_T} where some elementary statistics are given.

\begin{table}[h!]
\centering
\begin{tabular}{c c c c c c c}\hline \hline
 \textbf{ Min.}  &  \textbf{1st Qu.}  &  \textbf{Med.} &  \textbf{Mean}  &  \textbf{3rd Qu.}  &  \textbf{Max.} &  \textbf{Std.} \\
 3~726 & 43~930  & 96~480 &  115~760 &  144~100  & 561~200 &  102~408\\
\hline \hline
\end{tabular}
\caption{{\it Fluctuations of the number of observations $|\mathring{\mathcal{T}}_T|$ for $M = 100$ Monte-Carlo continuous trees observed up to time $T = 23$.
\label{tab : stat desc N_T}}}
\end{table}
We take a Gaussian kernel $K(x)=(2\pi)^{-1/2}\exp(-x^2/2)$ and the bandwidth $\widehat h_T$ is chosen here according to the rule-of-thumb  $1.06 \hat{\sigma}|\mathring{\mathcal{T}_T}|^{-1/5}$ where $\hat{\sigma}$ is the empirical standard deviation of $(\zeta_u,u\in\mathring{\mathcal{T}_T})$. We also implemented standard cross-validation with less success. We evaluate $\widehat{B}_T$ on a regular grid of $\mathcal{D} = [0.25,0.5]$ with mesh $\Delta x = 0.01$. For each sample we compute the empirical error 
\begin{equation*} \label{error def}
e_i = \frac{\|\widehat{B}^{(i)}_T - B\|_{\Delta x}}{\|B\|_{\Delta x}},\quad i=1,\ldots, M,
\end{equation*}
where $\|\cdot\|_{\Delta x}$ denotes the discrete norm over the numerical sampling. Table~\ref{tab : mean empirical error} displays the mean-empirical error $\overline{e}=M^{-1}\sum_{i=1}^M e_i$ together with the empirical standard deviation $\big( M^{-1}\sum_{i=1}^M (e_i-\overline{e})^2 \big)^{1/2}$.
\begin{table}[h!]
\begin{center}
\begin{tabular}{cccccccc}
\hline \hline
$\boldsymbol{T}$  &  $13$ & $15$ & $17$ & $19$ & $21$ & $23$
\\
\textbf{Mean} $\boldsymbol{|\mathring{\mathcal{T}}_T|}$  & 652 & 1~847 & 5~202 & 14~634 & 41~151 & 115~760  \\
$\boldsymbol{\overline{e}}$ & 0.1624 & 0.1046 & 0.0735 & 0.0448 & 0.0307 & 0.0178
\\
\textbf{Std. dev.} & \small{0.1052} & \small{0.0764} & \small{0.0599} & \small{0.0260} & \small{0.0197} & \small{0.0092} \\
\hline \hline
\end{tabular}
\end{center}
\caption{{\it Mean empirical relative error $\overline{e}$ and its standard deviation, with respect to $T$, for the division rate $B$ reconstructed over the interval $\mathcal{D} = [0.25,2.5]$ by the estimator $\widehat{B}_T$.} \label{tab : mean empirical error}
}
\end{table}
The comparison of the density of interest $f_B$ and the biased density $f_{H_B}$ on Figure~\ref{fig: IC} highlights the bias selection since $f_{H_B}$ gives more weight to small lifetimes than $f_B$. The error deteriorates as $x$ grows since the biased density $f_{H_B}$ (bold blue line - we approximate the Malthus parameter using \eqref{def malthus} and we find $\lambda_B \approx 0.5173$) decreases, see Figure~\ref{fig: IC}. 
The larger $T$, the better the reconstruction at a visual level, as shown on Figure~\ref{fig: IC} where $95\%$-level confidence  bands are built so that for each point $x$, the lower and upper bounds include $95\%$ of the estimators $(\widehat{B}_T^{(i)}(x), i = 1\ldots M)$. 
Close to $0$, $B(x)$ does not lie in the confidence band: our estimator exhibits a large bias there, and this is presumably due to a boundary effect. 
The error is close to $\exp(-2 \lambda_B T/5)$ as expected: indeed, for a kernel of order $n_0$, the bias term in density estimation is of order $h^{\beta\wedge(n_0+1)}$. Given that $B$ is smooth in our example, we rather expect $\exp(-\lambda_B \frac{(n_0+1)}{2(n_0+1)+1} T) = \exp(-2\lambda_BT/5) $ for the Gaussian kernel with $n_0=1$ that we use here, and this is consistent with what we observe in Figure~\ref{fig : erreur}.

\begin{figure}[h!]
\centering
\includegraphics[width=7cm]{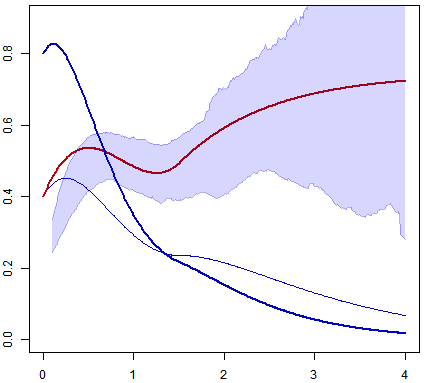}
\includegraphics[width=7cm]{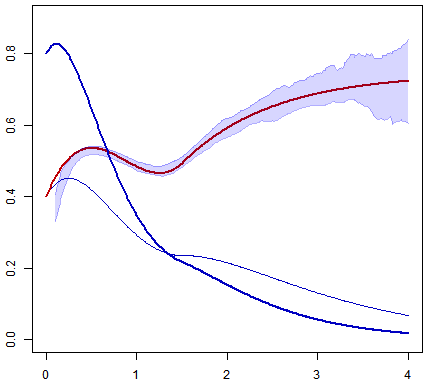}
\caption{{\it Reconstruction of $B$ over $\mathcal{D}= [0.1,4]$ with $95\%$-level confidence bands constructed over $M = 100$ Monte-Carlo continuous trees. In bold red line: $x \leadsto B(x)$; in bold blue line: $f_{H_B}$; in blue line: $f_B$ (on the same $y$-axis scale). Left: $T = 15$. Right: $T = 23$.} \label{fig: IC}}
\end{figure}

\begin{figure}[h!]
\centering
\includegraphics[width=6cm]{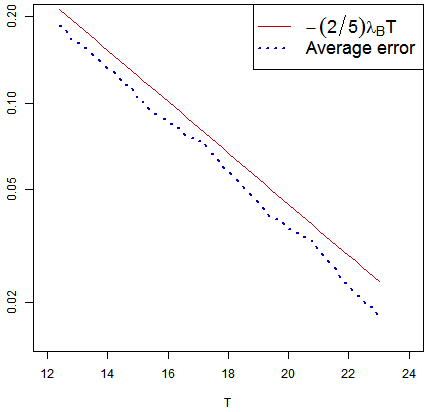}
\caption{{\it The log-average relative empirical error over $M = 100$ Monte-Carlo continuous trees vs. $T$ (i.e. the log-rate) for $x \leadsto B(x)$ reconstructed over $\mathcal{D} = [0.25,2.5]$ with $x \leadsto \widehat{B}_T(x)$ (dashed blue line) compared to the expected log-rate (solid red line). \label{fig : erreur}}}
\end{figure}


\section{\textsc{Proofs}} \label{sec: proofs}
For a locally integrable $B: [0,\infty) \rightarrow [0,\infty)$ such that $\int^\infty B(y)dy=\infty$, recall that we set 
$$f_B(x) = B(x)e^{-\int_0^x B(y)dy},\;\;x\geq 0.$$ 
Recall that $H_B$ is characterised by 
$$f_{H_B}(x) = me^{-\lambda_B x}f_{B}(x),\;\;x\geq 0.$$
\subsection{Preliminaries}
\subsubsection*{Many-to-one formulae}
For $u\in \mathcal U$, we write $\zeta_u^t$ for the age of the cell $u$ at time $t \in I_u=[b_u,d_u)$,  {\it i.e.} $\zeta_u^t = (t-b_u){\bf 1}_{\{t \in I_u\}}$. 
We extend $\zeta_u^t$ over $[0,b_u)$ by setting $\zeta_u^t = \zeta_{u(t)}^t$, where $u(t)$ is the ancestor of $u$ living at time $t$, defined by $u(t)=v$ if $v \preceq u$ and $(v,t) \in \mathbb T$. For $t\geq d_u$ we set $\zeta_u^t = \zeta_u$. Note that $\zeta_u^T=\zeta_u$ on the event $u \in \mathring{\mathcal{T}}_T$. 


\vip

Let $(\chi_t)_{t \geq 0}$ and $(\widetilde \chi_t)_{t \geq 0}$ denote the one-dimensional Markov processes with infinitesimal generators (densely defined on continuous functions vanishing at infinity) $\mathcal A_B$ and $\mathcal A_{H_B}$ respectively, where
$$\mathcal A_B g(x) = g'(x)+B(x)\big(g(0)-g(x)\big)$$
and such that $\PP(\chi_0=0)=\PP(\widetilde \chi_0=0)=1$. We also denote by $(P_{H_B}^t)_{t \geq 0}$ the Markov semigroup associated to $\mathcal A_{H_B}$.

\begin{prop}[Many-to-one formulae] \label{prop: Mt1}
For any $g \in \mathcal L_C$, we have 
\begin{equation} \label{Mt1 bord}
\E\Big[\sum_{u \in \partial {\mathcal T}_T}g(\zeta_u^T)\Big]   = 
\frac{e^{\lambda_BT}}{m}\E\Big[g(\widetilde \chi_T)B(\widetilde \chi_T)^{-1}H_B(\widetilde \chi_T)\Big],
\end{equation}
and
\begin{equation} \label{Mt1 int}
\E\Big[\sum_{u \in \mathring {\mathcal T}_T}g(\zeta_u^T)\Big]  = \E\Big[\sum_{u \in \mathring {\mathcal T}_T}g(\zeta_u)\Big]  = \frac{1}{m}\int_0^Te^{\lambda_B s}\E\Big[g(\widetilde \chi_s)H_B(\widetilde \chi_s)\Big]ds.
\end{equation}
\end{prop}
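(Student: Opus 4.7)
The plan is to establish both identities by decomposing the sums along the tree according to generation, using the branching structure to factor out powers of the mean offspring $m$, and then recognising the resulting i.i.d.\ expressions as functionals of the tagged particle $\widetilde\chi$ via an explicit density identity. The first equality in \eqref{Mt1 int} is trivial since $\zeta_u^T = \zeta_u$ on $\{u \in \mathring{\mathcal T}_T\}$ by the very definition of $\zeta_u^T$.

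First I would observe that, by the branching property and Assumption~\ref{hyp: offspring}, the expected size of generation $n$ equals $m^n$, while the ancestral lifetimes along any fixed path from the root form an i.i.d.\ sequence $(\zeta_i)_{i\geq 1}$ with density $f_B$, independent of the tree topology. For $u$ at generation $n$, $b_u$ is distributed as $S_n := \zeta_1+\cdots+\zeta_n$; on $\{u \in \partial \mathcal T_T\}$ one has $\zeta_u^T = T-b_u$ together with the survival factor $\bar F_B(T-b_u) = e^{-\int_0^{T-b_u} B(y)dy}$, while on $\{u \in \mathring{\mathcal T}_T\}$ the lifetime $\zeta_u$ appears as an extra independent copy with density $f_B$. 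This yields the two generation expansions
\begin{equation*}
\E\Big[\sum_{u \in \partial \mathcal T_T} g(\zeta_u^T)\Big] = \sum_{n\geq 0} m^n\,\E\big[g(T-S_n)\bar F_B(T-S_n){\bf 1}_{\{S_n \leq T\}}\big],
\end{equation*}
\begin{equation*}
\E\Big[\sum_{u \in \mathring{\mathcal T}_T} g(\zeta_u)\Big] = \sum_{n\geq 0} m^n\,\E\big[g(\zeta_{n+1}){\bf 1}_{\{S_{n+1}\leq T\}}\big].
\end{equation*}

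The algebraic bridge to $\widetilde\chi$ is the key identity
\begin{equation*}
f_{\widetilde S_n}(s) = m^n e^{-\lambda_B s} f_{S_n}(s), \quad s \geq 0,
\end{equation*}
where $\widetilde S_n$ denotes the $n$-th jump time of $\widetilde\chi$ (whose inter-arrival law has density $f_{H_B}$), obtained by a one-line induction from $f_{H_B}(x) = m e^{-\lambda_B x} f_B(x)$ and the convolution formula. Combined with the hazard-rate identity $H_B \bar F_{H_B} = f_{H_B}$ and its immediate consequence $B^{-1}H_B\bar F_{H_B}(x) = m e^{-\lambda_B x}\bar F_B(x)$, this allows one to expand each right-hand side by conditioning on the number of jumps of $\widetilde\chi$ before time $T$: the $n$-th term in the RHS of \eqref{Mt1 bord} collapses to $m^{n+1}e^{-\lambda_B T}\int_0^T g(T-s)\bar F_B(T-s)f_{S_n}(s)ds$, and after multiplying by $e^{\lambda_B T}/m$ one recovers the generation expansion for $\partial\mathcal T_T$. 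For \eqref{Mt1 int}, a Fubini swap between the $ds$-integral and the convolution variable, combined with $\int_x^T e^{\lambda_B s}f_{\widetilde S_n}(s-x)ds = m^n e^{\lambda_B x}\PP(S_n \leq T-x)$ and $e^{\lambda_B x}f_{H_B}(x) = m f_B(x)$, collapses the RHS to the generation expansion for $\mathring{\mathcal T}_T$.

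The main obstacle is not analytic but combinatorial bookkeeping: one must consistently track whether generation $n$ contributes $n$ or $n+1$ lifetimes depending on whether the individual's own lifetime is counted, handle the $n=0$ boundary term (the ancestor $\varnothing$, with $S_0=0$ and $f_{S_0}=\delta_0$) explicitly, and verify that the exponentials produced by passing from $f_B$ to $f_{H_B}$ via the Malthus identity \eqref{def malthus} cancel exactly. Fubini and the interchange of sum and expectation are justified by $g \in \mathcal L_{C}$ together with the first-moment bound $\E[|\mathcal T_T|]=O(e^{\lambda_B T})$ guaranteed by Assumption~\ref{hyp: offspring}.
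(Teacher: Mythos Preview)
Your proof is correct. Both arguments ultimately rest on the same algebraic identity $f_{H_B}(x)=me^{-\lambda_Bx}f_B(x)$ iterated into $f_{\widetilde S_n}=m^{n} e^{-\lambda_B\cdot}f_{S_n}$, but the organisation differs. The paper introduces an intermediate \emph{spine} construction: it picks a tagged branch $\vartheta$ uniformly at each split, defines the unbiased tagged process $(\chi_t)_{t\ge 0}$ with generator $\mathcal A_B$ together with the weight $e^{\Lambda_T}=\prod_{v\prec\vartheta_{N_T}}\nu_v$, and first establishes $\E\big[\sum_{u\in\partial\mathcal T_T}g(\zeta_u^T)\big]=\E\big[e^{\Lambda_T}g(\chi_T)\big]$ by conditioning on the spine choice; only then does it perform the change of measure from $\chi$ to $\widetilde\chi$. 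For \eqref{Mt1 int} the paper does not compute both sides directly but instead reduces to \eqref{Mt1 bord} via the identity $\E\big[\sum_{u\in\mathring{\mathcal T}_T}g(\zeta_u)\big]=\int_0^T\E\big[\sum_{u\in\partial\mathcal T_s}(gB)(\zeta_u^s)\big]ds$. Your route is more elementary: it bypasses the spine process $\chi$ entirely and matches the generation expansions term by term on both sides, which makes the argument shorter and self-contained for this particular statement. The paper's route, on the other hand, follows the standard many-to-one methodology of Cloez and Bansaye \emph{et al.}, and it is this spine formalism that is reused for the pair formulae of Proposition~\ref{prop: Mt1 forks}, where a purely renewal computation would become considerably heavier.
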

In order to compute rates of convergence, we will also need many-to-one formulae over pairs of individuals.  We can pick two individuals in the same lineage or over forks, {\it i.e.} over pairs of individuals that are not in the same lineage. If $u,v\in \mathcal U$, $u \wedge v$ denote their most recent common ancestor. Define
$$\mathcal{FU} = \{(u,v)\in \mathcal U^2, |u\wedge v| < |u| \wedge |v|\}\;\;\text{and}\;\;\mathcal{FT} = \mathcal {FU} \cap {\mathcal T}^2.$$
Introduce also $\bar m = \sum_{i\neq j}\sum_{k \geq i \vee j}p_k$ which is finite by Assumption~\ref{hyp: offspring}.
\begin{prop}[Many-to-one formulae over pairs] \label{prop: Mt1 forks}
For any $g \in \mathcal L_C$, we have 
\begin{align} \label{cloez forks}
\E\Big[\sum_{u,v \in \partial {\mathcal T}_T, \atop u \neq v}g(\zeta_u^T)g(\zeta_v^T)\Big] = \frac{\bar m }{m^3}\int_0^Te^{\lambda_Bs}\Big( e^{\lambda_B(T-s)} P^{T-s}_{H_B}\big(g\frac{H_B}{B}\big)(0)\Big)^2P^{s}_{H_B}H_B(0)ds,
\end{align}
\begin{align} \label{bansaye forks}
\E\Big[\sum_{(u,v)\in \mathcal{FT} \cap \mathring {\mathcal T}_T^2} g(\zeta_u)g(\zeta_v)\Big] = \frac{\bar m}{m^3} \int_0^T e^{\lambda_B s} \bigg( \int_0^{T-s} e^{\lambda_B t} P_{H_B}^{t}(g H_B)(0) dt \bigg)^2 P_{H_B}^s H_B(0)ds,
\end{align}
and 
\begin{align} \label{lineage}
\E\big[\sum_{u,v \in \mathring{\mathcal T}_T, \atop u \prec v}g(\zeta_u)g(\zeta_v)\big] = \int_0^T e^{\lambda_Bs} \Big( \int_0^{T-s}e^{\lambda_Bt} P^t_{H_B}\big(gH_B\big)(0)dt \Big) P^s_{H_B}(gH_B)(0)  ds.
\end{align}
\end{prop}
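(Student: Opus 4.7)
All three identities will be reduced to iterated applications of the one-particle many-to-one formulae of Proposition~\ref{prop: Mt1} combined with the branching property of the Bellman--Harris model. The common device is to decompose pair sums along the genealogical structure and to evaluate the resulting expectations via the auxiliary spine process $(\widetilde\chi_t)$.

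For the fork identities \eqref{cloez forks} and \eqref{bansaye forks}, I would index each pair $(u,v)\in\mathcal{FT}$ by its most recent common ancestor $w = u\wedge v$ together with the ordered pair of distinct children $(wi,wj)$ of $w$ satisfying $wi\preceq u$ and $wj\preceq v$. Conditioning on the history up to the death of $w$ and on $\nu_w$, Definition~\ref{BellmanHarris}(iii) implies that the subtrees rooted at the children of $w$ are independent Bellman--Harris processes of duration $T-d_w$. The pair contribution therefore factorises into the product of two identical single-subtree expectations; summing over the $\nu_w(\nu_w-1)$ ordered pairs of distinct children and taking expectation over $\nu_w$ produces the combinatorial prefactor $\bar m = \E[\nu_w(\nu_w-1)] = \sum_{k\geq 2}k(k-1)p_k$, finite by Assumption~\ref{hyp: offspring}. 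Each inner subtree expectation is then evaluated using Proposition~\ref{prop: Mt1}: identity \eqref{Mt1 bord} handles \eqref{cloez forks}, where both cells are alive at the subtree horizon, and identity \eqref{Mt1 int} handles \eqref{bansaye forks}, where both cells have died before that horizon. What remains is an outer expectation of the form $\E[\sum_{w\in\mathring{\mathcal T}_T}\phi(d_w)]$, which follows from the natural extension of \eqref{Mt1 int} to functions of the death time $d_w$: along the spine, the jumps of $(\widetilde\chi_t)$ encode death events and the age of $\widetilde\chi$ just before a jump coincides with the lifetime of the dying cell, so that the outer expectation equals $m^{-1}\int_0^T e^{\lambda_B s}\phi(s)P_{H_B}^s H_B(0)\,ds$. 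Collecting the three factors $m^{-1}$ (one from the outer sum, one from each of the two inner many-to-one expansions) together with $\bar m$ yields the claimed prefactor $\bar m/m^3$.

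For the lineage identity \eqref{lineage}, the argument is a direct two-step iteration. Fixing $u\in\mathring{\mathcal T}_T$ and conditioning on its history up to the death of $u$, the branching property together with $\E[\nu_u]=m$ imply that the conditional expected contribution of the strict descendants of $u$ in $\mathring{\mathcal T}_T$ is $m$ times a single-subtree sum, which by \eqref{Mt1 int} reads $\int_0^{T-d_u}e^{\lambda_B t}P_{H_B}^t(gH_B)(0)\,dt$. The pair expectation thereby reduces to a single-cell expectation over $u\in\mathring{\mathcal T}_T$ of a function of the pair $(\zeta_u,d_u)$, and a second application of the extended one-particle many-to-one formula produces the stated identity. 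The only technically delicate point throughout is this extension of Proposition~\ref{prop: Mt1} to functions of the death time $d_w$ rather than solely of the lifetime $\zeta_w$, which is however a direct adaptation of the spine argument of Bansaye \textit{et al.}~\cite{BDMT} underlying Proposition~\ref{prop: Mt1}; the second-moment factor $\bar m$ is the only point where the offspring distribution enters beyond its mean $m$, and its finiteness is guaranteed by Assumption~\ref{hyp: offspring}.
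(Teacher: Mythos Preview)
Your proposal is correct and essentially mirrors the paper's own argument: decompose each pair by its most recent common ancestor $w$ (for forks) or by the ancestor $u$ (for lineages), condition on $\mathcal F_{d_w}$ (resp.\ $\mathcal F_{d_u}$), use the branching property together with independence of $\nu_w$ to factor out $\bar m$ (resp.\ $m$), apply Proposition~\ref{prop: Mt1} to each subtree sum, and conclude via the extension of \eqref{Mt1 int} to test functions depending on $d_u$. The paper likewise flags this last extension as the only nontrivial adaptation, and for \eqref{cloez forks} simply refers to Lemma~3.9 of Cloez~\cite{cloez} rather than writing out the analogous computation.
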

The identity \eqref{cloez forks} is a particular case of Lemma 3.9 of Cloez~\cite{cloez}. In order to obtain identity \eqref{bansaye forks}, we  closely follow the method of Bansaye {\it et al.}~\cite{BDMT}. Although the setting in~\cite{BDMT} is much more general than ours, it formally only applies for exponential renewal times (corresponding to constant functions $B$) so we need to slightly accommodate their proof. The same ideas enable us to prove \eqref{lineage}. This is set out in details in the appendix.
\subsubsection*{Geometric ergodicity of the auxiliary Markov process}
Define the probability measure 
$$\mu_B(x)dx=c_B \exp(-\int_0^x H_B(y)dy)dx\;\;\text{for}\;\;x\geq 0.$$
We have the fast convergence of $P_{H_B}^T$ toward $\mu_B$ as $T \rightarrow \infty$. More precisely, 
\begin{prop} \label{prop: convergence semigroupe}  Let $\rho_B = \inf_{x}H_B(x)$.
For any $B \in \mathcal B_{b,C}$, $g \in \mathcal{L}_{C'}$,  $t \geq 0$ and $x \in (0,\infty)$, we have
\begin{equation*} 
\Big|P_{H_B}^tg(x)-\int_0^\infty g(y)\mu_B(y)dy\Big| \leq 2 \sup_y|g(y)|\exp\big(-\rho_B t\big).
\end{equation*}
\end{prop}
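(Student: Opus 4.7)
The plan is to use a coupling argument that exploits the uniform lower bound $H_B(y) \geq \rho_B$ on the intensity of jumps to the origin. I would first verify that $\mu_B$ is the invariant probability of $(P_{H_B}^t)$: from $\mu_B'(x) = -H_B(x)\mu_B(x)$ and integration by parts one obtains $\int \mathcal{A}_{H_B} g(x)\mu_B(x)dx = -g(0)\mu_B(0) + g(0)\int_0^\infty H_B(y)\mu_B(y)dy$, and an explicit computation of $\int H_B d\mu_B = c_B = \mu_B(0)$ shows this vanishes.

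The heart of the argument is a coalescent coupling. Consider two copies $(\widetilde\chi_s^{(1)})_{s\geq 0}$ and $(\widetilde\chi_s^{(2)})_{s\geq 0}$, with $\widetilde\chi_0^{(1)} = x$ and $\widetilde\chi_0^{(2)}$ distributed according to $\mu_B$, so that $\widetilde\chi_s^{(2)}$ remains $\mu_B$-distributed for every $s \geq 0$. Decompose the instantaneous intensity as $H_B(y) = \rho_B + \bigl(H_B(y) - \rho_B\bigr)$, where both terms are nonnegative. Introduce a common Poisson process $N^c$ of rate $\rho_B$, and at each of its jump times reset both processes simultaneously to $0$. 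Superimpose on each process an independent jump-to-$0$ clock with state-dependent rate $H_B(\widetilde\chi_s^{(i)}) - \rho_B$, and let both processes drift at unit speed between jumps. By construction each marginal is a Markov process with generator $\mathcal{A}_{H_B}$, so this is a genuine coupling.

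Let $\tau$ denote the first jump time of $N^c$. Then $\tau$ is exponentially distributed with parameter $\rho_B$, and at time $\tau$ both processes are reset to $0$, so $\widetilde\chi_\tau^{(1)} = \widetilde\chi_\tau^{(2)} = 0$. After time $\tau$, I continue the coupling by driving both processes with the same noise: since they coincide, they share the same jump intensity $H_B(\cdot)$, so using a single Poisson clock of rate $H_B(\cdot)$ yields a valid coupling for which $\widetilde\chi_s^{(1)} = \widetilde\chi_s^{(2)}$ for every $s \geq \tau$. Consequently,
\begin{equation*}
\Bigl|P_{H_B}^t g(x) - \int_0^\infty g(y)\mu_B(y)dy\Bigr| = \bigl|\E\bigl[g(\widetilde\chi_t^{(1)}) - g(\widetilde\chi_t^{(2)})\bigr]\bigr| \leq 2\sup_y|g(y)|\,\PP(\tau > t) = 2\sup_y|g(y)|\,e^{-\rho_B t}.
\end{equation*}

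The only mildly delicate point is the rigorous construction of the coupled process so that each marginal has the correct law; this is a routine thinning of Poisson measures in the standard Piecewise Deterministic Markov Process representation, and I do not expect any serious obstacle. The estimate is trivial at $t = 0$ and the argument is insensitive to the precise starting point $x \in (0,\infty)$.
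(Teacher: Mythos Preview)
Your proposal is correct and takes essentially the same approach as the paper: a coalescent coupling exploiting the uniform lower bound $H_B\geq\rho_B$ so that both copies are simultaneously reset to $0$ at the first ``common'' jump, which has exponential rate $\rho_B$. The paper packages the same idea via a single Poisson random measure $N(dz\,ds)$ on $[0,\infty)^2$ with each process jumping when $z\leq H_B(\cdot)$, so that any atom with $z\leq\rho_B$ triggers both resets; this is exactly your thinning decomposition $H_B=\rho_B+(H_B-\rho_B)$ written in SDE form.
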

\begin{proof}
First, one readily checks that $\int_0^{\infty} \mathcal A_{H_B}f(x)\mu_B(x) dx =0$ for any continuous $f$, and since moreover $P^t_{H_B}$ is Feller, it admits  $\mu_B(x)dx$ as an invariant probability. It is now sufficient to show 
$$
\|Q_B^{x, t}-\mu_B\|_{TV} \leq \exp(-\rho_B t)
$$
where $Q_B^{x,t}$ denotes the law of of the Markov process with infinitesimal generator $\mathcal A_{H_B}$ started from $x$ at time $t=0$ and $\|\cdot\|_{TV}$ is the total variation norm between probability measures. Let $N(ds \,dt)$ be a Poisson random measure with intensity $ds \otimes dt$ on $[0,\infty)\times [0,\infty)$. Define on the same probability space two random processes $(Y_t)_{t \geq 0}$ and $(Z_t)_{t \geq 0}$ such that
\begin{align*}
Y_t & = x+t-\int_0^t \int_0^\infty Y_{s_-}{\bf 1}_{\{z \leq H_B(Y_{s^-})\}}N(dz\, ds),\;\;t \geq 0, \\
Z_t & =  Z_0+ t -\int_0^t  \int_0^\infty Z_{s_-}{\bf 1}_{\{z \leq H_B(Z_{s^-})\}}N(dz \,ds),\;\;t \geq 0,
\end{align*}
where $Z_0$ is a random variable with distribution $\mu_B$. We have that both $(Y_t)_{t \geq 0}$ and $(Z_t)_{t \geq 0}$ are Markov processes driven by the same Poisson random measure with generator $\mathcal A_{H_B}$.  Moreover, if $N$ has a jump in $[0,t)\times[0, \inf_xH_B(x)]$, then $Y_t$ and $Z_t$ both necessarily start from $0$ after this jump and coincide further on. It follows that
$$\PP(Y_t \neq Z_t) \leq \PP\Big(\int_0^t \int_0^{\inf_x H_B(x)}N(ds\,dt)=0\Big) = \exp(-\inf_xH_B(x)t)=\exp(-\rho_Bt).$$
Observing that $Y_t$ and $Z_t$ have distribution $Q_B^{x,t}$ and $\mu_B$ respectively, we conclude thanks to the fact that $\|Q_B^{x, t}-\mu_B\|_{TV}\leq \PP(Y_t \neq Z_t)$.

\end{proof}

\subsection{Proof of Theorems~\ref{rate en T} and~\ref{rate avant T}}
In order to ease notation, when no confusion is possible, we abbreviate $\mathcal B_{b,C}$ by $\mathcal B$ and $\mathcal L_C$ by $\mathcal L$.
\begin{proof}[Proof of Theorem~\ref{rate en T}]
Writing
$$e^{\min\{\lambda_B/2,\rho_B\}T}\big(\mathcal E^T(\partial \mathcal{T}_T, g)-\partial \mathcal E_B(g)\big) = \frac{e^{\lambda_BT}}{\big|\partial{\mathcal T}_T\big|}\,e^{(\min\{\lambda_B/2,\rho_B\}-\lambda_B)T}\sum_{u \in \partial{\mathcal T}_T}\big(g(\zeta_u^T)-\partial \mathcal E_B(g)\big),$$
Theorem~\ref{rate en T} is then a consequence of the following two facts: first we claim that
\begin{equation} \label{conv proba}
e^{\lambda_BT}\big|\partial{\mathcal T}_T\big|^{-1} \rightarrow W_B\;\;\text{in probability as}\;\;T \rightarrow \infty, 
\end{equation}
uniformly in $B \in \mathcal B$, where the random variable $W_B$ satisfies  $\PP(W_B>0)=1$, and second, for $B \in \mathcal B$ and $g \in \mathcal L$, we claim that the following estimate holds:
\begin{equation} \label{second moment}
\E\Big[\Big(\sum_{u \in \partial{\mathcal T}_T}\big(g(\zeta_u^T)-\partial \mathcal E_B(g)\big)\Big)^2\Big] \lesssim e^{(2\lambda_B-\min\{\lambda_B, 2\rho_B\})T},
\end{equation}
where $\lesssim$ means up to a constant (possibly varying from line to line) that only depends on $\mathcal B$ and $\mathcal L$ and up to a multiplicative slow term of order $T$ in the case $\lambda_B=2\rho_B$.\\

\noindent {\it Step 1}. The convergence \eqref{conv proba} is a consequence of the following lemma:
\begin{lem} \label{estimate moment un}
For every $B \in \mathcal B$, there exists $\widetilde W_B$ with $\PP(\widetilde W_B>0)=1$ such that
\begin{equation} \label{convergence nombre moyen vivant en T}
\E\Big[\Big(\frac{|\partial{\mathcal T}_T|}{\E\big[|\partial{\mathcal T}_T|\big]}-\widetilde W_B\Big)^2\Big] \rightarrow 0\;\;\text{as}\;\;T\rightarrow \infty,
\end{equation}
uniformly in $B \in \mathcal B$ and
\begin{equation} \label{uniformity}
\kappa_B^{-1}e^{\lambda_BT}\E\big[|\partial{\mathcal T}_T|\big] \rightarrow 1\;\;\text{as}\;\;T\rightarrow \infty,
\end{equation}
uniformly in $B \in \mathcal B$, where $\kappa_B^{-1}=\lambda_B \tfrac{m}{m-1}\int_0^\infty\exp(-\int_0^xH_B(y)dy)dx$.
\end{lem}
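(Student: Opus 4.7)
The plan is to derive both assertions from the many-to-one formulae of Propositions~\ref{prop: Mt1} and~\ref{prop: Mt1 forks}, combined with the geometric ergodicity of Proposition~\ref{prop: convergence semigroupe}. For the first-moment asymptotic \eqref{uniformity}, I apply \eqref{Mt1 bord} with $g\equiv 1$ to get $\E[|\partial\mathcal T_T|]=m^{-1}e^{\lambda_B T}P_{H_B}^T(H_B/B)(0)$. Using the identities $f_{H_B}(y)=me^{-\lambda_B y}f_B(y)$, $H_B=f_{H_B}/(1-F_{H_B})$ and $\mu_B(y)=c_B(1-F_{H_B}(y))$, one obtains the simplification $(H_B/B)(y)\,\mu_B(y)=m c_B e^{-\lambda_B y}\exp(-\int_0^y B)$. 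Integrating in $y$ and combining with the defining relation \eqref{def malthus} of the Malthus parameter (after one integration by parts) yields $\int(H_B/B)\,d\mu_B=m\kappa_B$. Proposition~\ref{prop: convergence semigroupe} then gives the exponential rate $|P_{H_B}^T(H_B/B)(0)-m\kappa_B|\leq 2\sup(H_B/B)\,e^{-\rho_B T}$, which proves $\E[|\partial\mathcal T_T|]\sim\kappa_B e^{\lambda_B T}$ with uniform rate on $\mathcal B$, hence \eqref{uniformity}.

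For \eqref{convergence nombre moyen vivant en T}, write $W_T:=|\partial\mathcal T_T|/\E[|\partial\mathcal T_T|]$ and proceed in two steps. First, I establish $L^2$-boundedness uniformly in $(B,T)$ via \eqref{cloez forks} with $g\equiv 1$: both $P_{H_B}^{T-s}(H_B/B)(0)$ and $P_{H_B}^s H_B(0)$ converge to the finite limits $m\kappa_B$ and $\int H_B\,d\mu_B=c_B$ respectively (with uniform exponential rates, by the same argument as above), so the integrand is dominated by a constant times $e^{2\lambda_B T-\lambda_B s}$ and the integral is $O(e^{2\lambda_B T}/\lambda_B)$. Dividing by $\E[|\partial\mathcal T_T|]^2\sim\kappa_B^2 e^{2\lambda_B T}$ gives $\sup_{T,B\in\mathcal B}\E[W_T^2]<\infty$. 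Second, I upgrade this to $L^2$ convergence by a standard Cauchy argument: for $s<T$, conditioning on the $\sigma$-algebra $\mathcal F_s$ generated by the tree up to time $s$ and using the branching decomposition $|\partial\mathcal T_T|=\sum_{u\in\partial\mathcal T_s}Z_u^{T-s}$ where, conditionally on $\mathcal F_s$, the $Z_u^{T-s}$ are independent with the law of the population alive at time $T-s$ in a Bellman--Harris model started from a single particle of age $\zeta_u^s$, one invokes the many-to-one formula applied to such shifted subtrees together with the uniform second-moment bound to conclude that $(W_T)$ is Cauchy in $L^2$ with a Cauchy rate that is uniform in $B$. The limit $\widetilde W_B$ is positive almost surely since $p_0=p_1=0$ rules out extinction (so $|\partial\mathcal T_T|\geq 1$ for every $T$), and $\E[\widetilde W_B]=1$ is preserved in the $L^2$ limit.

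The main technical hurdle is to keep all estimates uniform in $B\in\mathcal B_{b,C}$. The two-sided bounds $b\leq B\leq b\max\{C,1\}$ provide a uniform sandwich for $\lambda_B$ (via the defining equation \eqref{def malthus}, using monotonicity of the integrand in $B$), as well as uniform control on $\rho_B=\inf H_B$, on $c_B^{-1}=\int_0^\infty\exp(-\int_0^\cdot H_B)$, and on $\sup(H_B/B)$ and $\sup H_B$. These bounds propagate through the coupling argument underlying Proposition~\ref{prop: convergence semigroupe} and through the second-moment computation above to produce uniform exponential rates in all intermediate steps, hence in the final convergence.
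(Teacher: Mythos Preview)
Your route differs from the paper's. The paper does not actually prove this lemma: it defers to the classical renewal-theoretic argument in Harris (Chapter~6) and invokes the uniform key renewal theorem of Tsirelson~\cite{unifkey} to obtain uniformity in $B\in\mathcal B$. You instead stay entirely within the paper's own machinery (many-to-one formulae plus the coupling ergodicity of Proposition~\ref{prop: convergence semigroupe}). For \eqref{uniformity} your argument is clean and complete: the identity $\E[|\partial\mathcal T_T|]=m^{-1}e^{\lambda_B T}P_{H_B}^T(H_B/B)(0)$ and the computation $\int(H_B/B)\,d\mu_B=m\kappa_B$ are correct, and Proposition~\ref{prop: convergence semigroupe} yields not only the convergence but an explicit uniform exponential rate $e^{-\rho_B T}$, which is more than the paper asks for. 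This is a genuine advantage over appealing to an external renewal theorem.

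For \eqref{convergence nombre moyen vivant en T}, the $L^2$-boundedness via \eqref{cloez forks} is fine, but two points need more. First, the Cauchy step is only sketched: to control $\E[(W_T-W_S)^2]$ you need a many-to-one formula for a subtree started from a particle of \emph{arbitrary} age $a$, i.e.\ $\E_a[|\partial\mathcal T_t|]=c(a)e^{\lambda_B t}P_{H_B}^t(H_B/B)(a)$ for a suitable $c(a)$, together with the fact that $P_{H_B}^t(H_B/B)(a)\to m\kappa_B$ uniformly in $a$ (which does follow from Proposition~\ref{prop: convergence semigroupe}). One then shows $\E[W_T\mid\mathcal F_S]=W_S(1+o(1))$ and combines this with convergence of $\E[W_T^2]$; this is doable but is real work you have not written out. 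Second, your positivity argument is not valid: the fact that $|\partial\mathcal T_T|\ge 1$ only gives $W_T\ge 1/\E[|\partial\mathcal T_T|]\to 0$, which says nothing about $\widetilde W_B$. What you need is the standard fixed-point argument: from the first-generation decomposition $\widetilde W_B=e^{-\lambda_B\zeta_\varnothing}\sum_{i=1}^{\nu_\varnothing}\widetilde W_B^{(i)}$ one gets $q=\PP(\widetilde W_B=0)=\sum_k p_k q^k$, and since $p_0=p_1=0$ and $\E[\widetilde W_B]=1$ the only solution in $[0,1)$ is $q=0$.
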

Lemma~\ref{estimate moment un} is well known, and follows from classical renewal arguments, see Chapter 6 in the book of Harris~\cite{Harris}. Only the uniformity in $B \in \mathcal B$ requires an extra argument, but with a uniform version of the key renewal theorem of~\cite{unifkey}, it readily follows from the proof of Harris, so we omit it. Note that \eqref{convergence nombre moyen vivant en T} and \eqref{uniformity} entail the convergence $e^{\lambda_BT}|\partial{\mathcal T}_T|^{-1}\rightarrow \kappa_B\widetilde W_B^{-1}=W_B$ in probability as $T \rightarrow \infty$ uniformly in $B \in \mathcal B$, and this entails \eqref{conv proba}.\\

\noindent {\it Step 2}. We now turn to the proof of \eqref{second moment}. Without loss of generality, we may (and will) assume that $\partial \mathcal E_B(g)=0$. We have
$$\E\big[\big(\sum_{u \in \partial{\mathcal T}_T}g(\zeta_u^T)\big)^2\big] =  \E\big[\sum_{u \in \partial{\mathcal T}_T}g(\zeta_u^T)^2\big] + \E\big[\sum_{u, v \in \partial{\mathcal T}_T, \atop u\neq v}g(\zeta_u^T)g(\zeta_v^T)\big] = I + II,$$
say. By \eqref{Mt1 bord} in Proposition~\ref{prop: Mt1}, we write
\begin{align*}
I & = \frac{e^{\lambda_BT}}{m}\E\Big[g(\widetilde \chi_T)^2B(\widetilde \chi_T)^{-1}H_B(\widetilde \chi_T)\Big] \\
& \leq \frac{e^{\lambda_BT}}{m} \int_0^\infty g(x)^2\tfrac{H_B(x)}{B(x)}\mu_B(x)dx+\tfrac{e^{\lambda_BT}}{m}\Big|P^{T}_{H_B}\big(g^2\tfrac{H_B}{B}\big)(0)-\int_0^\infty g(x)^2\tfrac{H_B(x)}{B(x)}\mu_B(x)dx\Big|.
\end{align*}
Since $g \in \mathcal L$ and $B \in \mathcal B$, we successively have
$$m^{-1}\int_0^\infty g(x)^2\tfrac{H_B(x)}{B(x)}\mu_B(x)dx \lesssim 1\;\;\text{and}\;\;g(x)^2\tfrac{H_B(x)}{B(x)} \lesssim 1.$$
Note that for $B \in \mathcal B$, we have
$$H_B(x) = \frac{B(x)}{\int_x^\infty B(y)e^{-\lambda_B(y-x)}\exp(-\int_x^y B(u)du) dy}\leq \frac{b^2 \max\{C,1\}}{\lambda_B + b\max\{C,1\}}.$$
We also have $\lambda_B \leq  \lambda_{\widetilde B}$ as soon as $B(x) \leq \widetilde B(x)$ for all $x$ (see for instance the proof of Proposition \ref{classe B plus}) so $\inf_{B \in \mathcal B} \lambda_B>0$ and the uniformity in the above estimates follows likewise.
Applying Proposition~\ref{prop: convergence semigroupe} we derive 
$$\Big|P^{T}_{H_B}\big(g^2\tfrac{H_B}{B}\big)(0)-\int_0^\infty g(x)^2\tfrac{H_B(x)}{B(x)}\mu_B(x)dx\Big| \lesssim 1,$$
and we conclude that $I \lesssim e^{\lambda_BT} \leq e^{(2\lambda_B-\min\{\lambda_B, 2\rho_B\})T}$.
By \eqref{cloez forks} of Proposition~\ref{prop: Mt1 forks} we have
$$II  = \frac{\bar{m} e^{ 2\lambda_BT}}{m^3}\int_0^Te^{-\lambda_Bs} \Big(P^{T-s}_{H_B}\big(g\tfrac{H_B}{B}\big)(0)\Big)^2P^{s}_{H_B}H_B(0)ds.
$$
Since  $B \in \mathcal B$ and $g \in \mathcal L$, the estimates $P^{s}_{H_B}H_B(0) \lesssim 1$ and $|g(x)|\tfrac{H_B(x)}{B(x)} \lesssim 1$ hold true.  
Applying Proposition~\ref{prop: convergence semigroupe} to the test function $g(x)\tfrac{H_B(x)}{B(x)}$ which has vanishing integral under $\mu_B$, we obtain
$$\big|P^{T-s}_{H_B}\big(g\tfrac{H_B}{B}\big)(0)\big| \lesssim e^{-\rho_B(T-s)}$$ 
hence
\begin{align*}
|II| &\lesssim e^{2\lambda_BT}\int_0^T  e^{-\lambda_B s}e^{-2\rho_B(T-s)}ds \lesssim 
\left\{
\begin{array}{lll}
e^{\lambda_BT} & \text{if} & 2\rho_B \geq \lambda_B \\
e^{2(\lambda_B-\rho_B)T} & \text{if} & 2\rho_B < \lambda_B, 
\end{array}
\right.
\end{align*}
up to a multiplicative slow term of order $T$ when $2\rho_B = \lambda_B$. Note also that the estimate is uniform in $B \in \mathcal B$ since $\inf_{B \in \mathcal B}\lambda_B>0$ and $\inf_{B \in \mathcal B}\rho_B >0$. We conclude $|II| \lesssim e^{(2\lambda_B-\min\{\lambda_B, 2\rho_B\})T}$. 
\end{proof}
\begin{proof}[Proof of Theorem~\ref{rate avant T}]
The proof goes along the same line but is slightly more intricate. First, we implicitly work on the event $\{\big| \mathring {\mathcal T}_T\big|\geq 1\}$  which has probability that goes to $1$ as $T \rightarrow \infty$,  uniformly in $B \in \mathcal B$. We again write
$$e^{\min\{\lambda_B/2,\rho_B\}T}\big(\mathcal E(\mathring{\mathcal{T}_T}, g)-\mathring{\mathcal E}_B(g)\big) = \frac{e^{\lambda_BT}}{\big| \mathring {\mathcal T}_T\big|}e^{(\min\{\lambda_B/2,\rho_B\}-\lambda_B)T}\sum_{u \in  \mathring {\mathcal T}_T}\big(g(\zeta_u^T)-\mathring{\mathcal E}_B(g)\big),$$
and we claim that
\begin{equation} \label{convergence proba avant T}
e^{\lambda_BT}\big|\mathring{\mathcal T}_T\big|^{-1}\rightarrow W_B'>0\;\;\text{in probability as}\;\;T \rightarrow \infty, 
\end{equation}
uniformly in $B \in \mathcal B$, where $W_B'$ satisfies $\PP(W_B'>0)=1$ and that the following estimate holds:
\begin{equation} \label{second moment avant T}
\E\Big[\Big(\sum_{u \in \mathring{\mathcal T}_T}\big(g(\zeta_u^T)-\mathring{\mathcal E}_B(g)\big)\Big)^2\Big] \lesssim e^{(2\lambda_B-\min\{\lambda_B,2\rho_B\})T},
\end{equation}
uniformly in $B \in \mathcal B$ and $g \in \mathcal L$. 
In the same way as in the proof of Theorem~\ref{rate en T}, \eqref{convergence proba avant T} is a consequence of the following classical result, which can be obtained in the same way as for Lemma~\ref{estimate moment un} and proof which we omit.
\begin{lem} \label{estimate moment un avant T} 
For every $B \in \mathcal B$, there exists $\widetilde W_B'>0$ with $\PP(\widetilde W_B'>0)=1$ such that
$$
\E\Big[\Big(\frac{|\mathring{\mathcal T}_T|}{\E\big[|\mathring{\mathcal T}_T|\big]}-\widetilde W_B'\Big)^2\Big] \rightarrow 0\;\;\text{as}\;\;T\rightarrow \infty,
$$
uniformly in $B \in \mathcal B$
and
$$
(\kappa_{B}')^{-1}e^{\lambda_BT}\E\big[|\mathring{\mathcal T}_T|\big] \rightarrow 1\;\;\text{as}\;\;T\rightarrow \infty,
$$
uniformly in $B \in \mathcal B$, where $(\kappa_{B}')^{-1}=\lambda_B m \int_0^\infty\exp(-\int_0^xH_B(y)dy)dx$.
\end{lem}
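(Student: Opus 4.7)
The plan is to mirror the (omitted) proof of Lemma~\ref{estimate moment un}, adapting the many-to-one formulae to $\mathring{\mathcal T}_T$. For the first-moment asymptotic I would apply \eqref{Mt1 int} of Proposition~\ref{prop: Mt1} with $g\equiv 1$, which yields
$$\E\big[|\mathring{\mathcal T}_T|\big] = \frac{1}{m}\int_0^T e^{\lambda_B s}\, P^s_{H_B}H_B(0)\, ds.$$
Using the identity $e^{-\int_0^x H_B(y)dy} = m\int_x^\infty e^{-\lambda_B y}f_B(y)dy$ together with the definition of $\lambda_B$, one readily checks that $\int_0^\infty H_B(x)\mu_B(x)dx = c_B = (\int_0^\infty e^{-\int_0^x H_B(y)dy}dx)^{-1}$. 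Proposition~\ref{prop: convergence semigroupe} applied to the bounded function $H_B$ then yields $|P^s_{H_B}H_B(0)-c_B| \lesssim e^{-\rho_B s}$ with a remainder uniform in $\mathcal B$; since $\inf_{\mathcal B}\lambda_B > 0$ and $\inf_{\mathcal B}\rho_B > 0$, as already noted in the proof of Theorem~\ref{rate en T}, integration delivers $\E[|\mathring{\mathcal T}_T|] = \frac{c_B}{m\lambda_B}e^{\lambda_B T}(1+o(1))$ uniformly in $B$, which is exactly the second claim of the lemma.

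For the $L^2$ convergence I would compute the second moment through the decomposition
$$\E[|\mathring{\mathcal T}_T|^2] = \E[|\mathring{\mathcal T}_T|] + 2\,\E\Big[\sum_{u\prec v,\,u,v \in \mathring{\mathcal T}_T}1\Big]+\E\Big[\sum_{(u,v)\in \mathcal{FT}\cap\mathring{\mathcal T}_T^2}1\Big]$$
and applying formulae \eqref{lineage} and \eqref{bansaye forks} of Proposition~\ref{prop: Mt1 forks} with $g\equiv 1$. Each resulting iterated integral reduces, by the same application of Proposition~\ref{prop: convergence semigroupe}, to an explicit expression involving $\lambda_B$, $m$, $\bar m$ and $c_B$ whose leading behaviour exactly matches the square of the first-moment asymptotic. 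Combined with standard Kesten--Stigum (Malthusian martingale) arguments as in Chapter~6 of Harris~\cite{Harris}, this delivers the convergence of $|\mathring{\mathcal T}_T|/\E[|\mathring{\mathcal T}_T|]$ in $L^2$ to a non-random multiple of the classical Malthusian limit $\widetilde W_B'$; positivity a.s.\ then follows from Assumption~\ref{hyp: offspring} which forbids extinction thanks to $p_0=p_1=0$.

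The main obstacle I expect is not the asymptotic computation itself but the uniformity over $\mathcal B_{b,C}$. Harris' renewal arguments give a pointwise-in-$B$ conclusion; to upgrade them I would invoke a uniform version of the key renewal theorem in the spirit of the reference~\cite{unifkey} cited after Lemma~\ref{estimate moment un}. The constants $\lambda_B$, $\rho_B$ and $c_B$ vary continuously with $B$ and remain bounded away from their degenerate values on $\mathcal B_{b,C}$, and the error estimate in Proposition~\ref{prop: convergence semigroupe} is itself uniform, so these uniform ingredients propagate through the same renewal-theoretic computation and yield both displayed convergences uniformly in $B$.
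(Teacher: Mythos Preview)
The paper omits the proof entirely, pointing to the classical renewal arguments of Harris (Chapter~6) together with the uniform key renewal theorem of~\cite{unifkey} for the uniformity in $B$. Your sketch is correct and in fact more self-contained for the second displayed convergence: instead of setting up a renewal equation for $\E[|\mathring{\mathcal T}_T|]$ and invoking an external uniform key renewal theorem, you use the paper's own machinery, namely \eqref{Mt1 int} and Proposition~\ref{prop: convergence semigroupe}, whose error bound is already uniform over $\mathcal B$. This buys you the first-moment asymptotic with uniformity directly, without the detour through~\cite{unifkey}.

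For the $L^2$ convergence, one caveat: your decomposition via \eqref{lineage} and \eqref{bansaye forks} gives the asymptotic of $\E[|\mathring{\mathcal T}_T|^2]$, hence $L^2$-boundedness of the normalized quantity, but not by itself convergence to a limiting random variable $\widetilde W_B'$; knowing that the second moment grows like a constant times the square of the first moment does not establish a Cauchy property in $L^2$. You are right that the missing ingredient is the Malthusian martingale argument from Harris, which identifies the limit and yields the actual $L^2$ convergence; the moment computation you outline is then what provides the uniform control of the variance needed to make that convergence uniform in $B$. So your overall plan is sound, but the order of dependence is slightly different from how you phrase it: the Harris/Kesten--Stigum argument is not merely ``combined with'' your moment computation, it is the step that produces the limit, and your pair formulae supply the uniform bounds.
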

It remains to prove \eqref{second moment avant T}. We again assume without loss of generality that $\mathring{\mathcal E}_B(g)=0$ and we plan to use the following decomposition:
\begin{equation} \label{eq:dec3}
\E\big[\big(\sum_{u \in \mathring{\mathcal T}_T}g(\zeta_u)\big)^2\big] = I + II + III,
\end{equation}
with
$$I = \E\big[\sum_{u \in \mathring{\mathcal T}_T}g(\zeta_u)^2\big],$$
$$II =\E\big[\sum_{(u, v) \in \mathcal{FT} \cap \mathring{\mathcal T}_T^2}g(\zeta_u)g(\zeta_v)\big] 
$$
and
$$III = 2\E\big[\sum_{u,v \in \mathring{\mathcal T}_T, \atop u \prec v}g(\zeta_u)g(\zeta_v)\big]. 
$$
\noindent {\it Step 1}. By \eqref{Mt1 int} of Proposition~\ref{prop: Mt1}, we have
$$I = \frac{1}{m}\int_0^Te^{\lambda_B s}\E\big[g(\widetilde \chi_s)^2H_B(\widetilde \chi_s)\big]ds,$$
In the same way as for the term $I$ in the proof of Theorem~\ref{rate en T}, we readily check that $g \in \mathcal L$ and $B\in \mathcal B$ guarantee that $ \E\big[g(\widetilde \chi_s)^2H_B(\widetilde \chi_s)\big] \lesssim 1$ therefore $I \lesssim e^{\lambda_BT} \leq e^{(2\lambda_B-\min\{\lambda_B,2\rho_B\})T}$.\\ 

\noindent {\it Step 2}. By \eqref{bansaye forks} of Proposition~\ref{prop: Mt1 forks}, we have 
\begin{align*}
II & = \frac{\bar m}{m^3}  \int_0^T e^{\lambda_Bs}\Big(\int_0^{T-s} e^{\lambda_B t} P_{H_B}^{t}(gH_B)(0)dt\Big)^2 P_{H_B}^s(H_B)(0)ds.
\end{align*}
We work as for the term $II$ in the proof of Theorem~\ref{rate en T}: we successively have $P_{H_B}^s(H_B)(0) \lesssim 1$ and
$\big|P_{H_B}^{t}(gH_B)(0)| \lesssim \exp(-\rho_Bt)$ by Proposition~\ref{prop: convergence semigroupe} and the fact that $gH_B$ has vanishing integral under $\mu_B$. Therefore
$$| II | \lesssim  \int_0^T e^{\lambda_Bs}\Big(\int_0^{T-s} e^{(\lambda_B-\rho_B)t}dt\Big)^2ds \lesssim 
\left\{
\begin{array}{lll}
e^{\lambda_BT} & \text{if} & 2\rho_B \geq \lambda_B \\
e^{2(\lambda_B-\rho_B)T} & \text{if} & 2\rho_B < \lambda_B 
\end{array}
\right.
$$
up to a multiplicative slow term of order $T$ when $2\rho_B = \lambda_B$.
We conclude $|II| \lesssim e^{(2\lambda_B-\min\{\lambda_B, 2\rho_B\})T}$ likewise.\\

\noindent{\it Step 3}. By \eqref{lineage} of Proposition~\ref{prop: Mt1 forks}, we have
\begin{align*}
| III | \leq \int_0^T e^{\lambda_Bs} \Big| \int_0^{T-s}e^{\lambda_Bt} P^t_{H_B}\big(gH_B\big)(0)dt \Big| P^s_{H_B}(|g|H_B)(0)  ds.
\end{align*}
In the same way as for the term $II$, we have  $|P_{H_B}^{s}(|g|H_B)(0)| \lesssim 1$ and $|P_{H_B}^{t}(gH_B)(0)| \lesssim \exp(-\rho_Bt)$. Therefore
$$|III| \lesssim \int_0^T e^{\lambda_Bs} \big(\int_0^{T-s}e^{(\lambda_B-\rho_B)t}dt\big)ds\lesssim e^{\lambda_BT} \leq e^{(2\lambda_B-\min\{\lambda_B, 2\rho_B\})T}.$$
\end{proof}
\subsection{Proof of Proposition~\ref{prop: conv lambda}}
Conditional on $\mathring{\mathcal T}_T$, the random variables $(\nu_u, u \in \mathring{\mathcal T}_T)$ are independent, with common distribution $p_k$. It follows that 
$$\E\big[(\widehat m_T - m)^2\,|\mathring{\mathcal T}_T\big] \leq |\mathring{\mathcal T}_T|^{-1}\sum_kk^2p_k.$$
Since $e^{\lambda_BT}|\mathring{\mathcal T}_T|^{-1}$ is $\mathcal B$-tight thanks to Lemma ~\ref{estimate moment un avant T}, we obtain the result for $e^{\lambda_BT/2}(\widehat m_T - m)$.  
The $\mathcal B$-tightness of $T^{-1} v_T(B)^{-1}(\widehat \lambda_T - \lambda_B)$ is a consequence of Theorem~\ref{rate en T} and~\ref{rate avant T}, together with the convergence  of the preliminary estimators $\widehat m_T$. For $M>0$,
\begin{multline*}
\mathcal{E}^T(\mathrm{Id},\partial \mathcal{T}_T) - \partial \mathcal{E}_B (\mathrm{Id}) 
= \big( \mathcal{E}^T(\min\{\mathrm{Id}, M\},\partial \mathcal{T}_T) - \partial \mathcal{E}_B (\min\{\mathrm{Id}, M\}) \big) \\+ \Big( \mathcal{E}^T \big((\cdot - M) {\bf 1}_{\{ \cdot > M \}},\partial \mathcal{T}_T\big) - \partial \mathcal{E}_B \big((\cdot - M) {\bf 1}_{\{ \cdot > M \}}\big) \Big) = I + II
\end{multline*}
say. We choose $M=M_T = 2T$ and we apply Theorem~\ref{rate en T} for the test functions $g_T(x) = \min\{x, M_T\}/M_T$ which are uniformly bounded in $T$ to get the $\mathcal{B}$-tightness of $T^{-1} v_T(B)^{-1} I$. Since $\zeta_u \leq T$ when $u \in \partial T_T$, we also have
$| II | = \partial \mathcal{E}_B \big((\cdot - 2T) {\bf 1}_{\{ \cdot > 2T \}}\big)$ and we deduce that $v_T(B)^{-1} II$ is $\mathcal{B}$-tight. We study in the same way $\mathcal{E}^T(\mathrm{Id},\mathring{\mathcal{T}}_T)$ to conclude.

\subsection{Proof of Theorem~\ref{thm: upper rate}}
The proof of Theorem~\ref{thm: upper rate} goes along the classical line of a bias-variance analysis in nonparametrics (see for instance the classical textbook~\cite{Tsybakov}). However, we have two kind of extra difficulties: first we have to get rid of the random bandwidth $\widehat h_T=\exp(-  \widehat \lambda_T  \tfrac{1}{2\beta+1} T )$ defined in \eqref{def bandwidth} (actually the most delicate part of the proof) and second, we have to get rid of the preliminary estimators $\widehat m_T$ and $\widehat \lambda_T$.\\

The point $x \in (0,\infty)$ where we estimate $B(x)$ is fixed throughout, and further omitted in the notation. We first need a slight extension of Theorem~\ref{rate avant T} -- actually of the estimate \eqref{second moment avant T} -- in order to accommodate test functions $g = g_T$ such that $g_T \rightarrow \delta_x$ weakly as $T\rightarrow \infty$.
For a function $g:[0,\infty)\rightarrow \R$ let
$$|g|_{1} = \int_0^\infty |g(y)|dy,\;\;|g|_{2}^2=\int_0^\infty g(y)^2dy\;\;\text{and}\;\;|g|_{\infty}=\sup_{y}|g(y)|$$ 
denote the usual $L^p$-norms over $[0,\infty)$ for $p=1,2, \infty$. Define also
\begin{equation}  \label{def phi}
\Phi_T(B,g) =
\left\{
\begin{array}{lll
} |g|_2^2 + \inf_{0 \leq v \leq T} \big(|g|_1^2e^{\lambda_Bv} + |g|_\infty^2e^{(2(\lambda_B-\rho_B)_+-\lambda_B)v}\big)+|g|_1|g|_\infty & \text{if} & \lambda_B \leq 2\rho_B \\ \\
|g|_2^2+|g|_\infty^2e^{(\lambda_B-2\rho_B)T}+|g|_1|g|_\infty  & \text{if} & \lambda_B > 2\rho_B.
\end{array}
\right.
\end{equation}
\begin{prop} \label{lem: vanishing test}
In the same setting as Theorem~\ref{rate avant T}, we have, for any $g \in \mathcal L$,
\begin{align} \label{gagner du h}
& \E\Big[\Big(\sum_{u \in \mathring{\mathcal T}_T}\big(g(\zeta_u^T)-\mathring{\mathcal E}_B(g)\big)\Big)^2\Big] 
\lesssim   e^{(\lambda_B-\rho_B)_+T}|g|_\infty^2+e^{\lambda_BT} \Phi_T\big(B,g\big),
\end{align}
where the symbol $\lesssim$ means here uniformly in $B \in \mathcal B$ and independently of $g$. 
\end{prop}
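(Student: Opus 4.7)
The plan is to mimic the proof of Theorem~\ref{rate avant T}, but keep careful track of how the test function $g$ enters through its three norms $|g|_1, |g|_2, |g|_\infty$ rather than only $|g|_\infty$. Assume WLOG $\mathring{\mathcal E}_B(g) = 0$, which is equivalent to $\int gH_B\, d\mu_B=0$ since $f_{H_B}$ is proportional to $H_B\mu_B$. Expanding the square gives the same decomposition $I+II+III$ (diagonal, forks, lineage pairs) as in Theorem~\ref{rate avant T}, to which I will apply the many-to-one formulae \eqref{Mt1 int}, \eqref{bansaye forks} and \eqref{lineage}.

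For the diagonal $I = m^{-1}\int_0^T e^{\lambda_Bs}P^s_{H_B}(g^2H_B)(0)\,ds$, I split $P^s_{H_B}(g^2H_B)(0)$ via Proposition~\ref{prop: convergence semigroupe} as the invariant integral $\int g^2H_B\,d\mu_B \lesssim |g|_2^2$ (since $H_B\mu_B$ is uniformly bounded in $B\in\mathcal B$) plus a geometric error of order $|g|_\infty^2 e^{-\rho_B s}$. Integrating in $s$ gives $I \lesssim e^{\lambda_BT}|g|_2^2 + |g|_\infty^2 e^{(\lambda_B-\rho_B)_+T}$ up to a slow $T$-factor at $\lambda_B=\rho_B$. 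The first summand contributes to $e^{\lambda_BT}\Phi_T$ (via the $|g|_2^2$ term), while the second is precisely the stand-alone term in the claimed bound.

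For $II$ and $III$ everything reduces to estimating $\varphi_t := P^t_{H_B}(gH_B)(0)$ and $A_s := \int_0^{T-s}e^{\lambda_B t}\varphi_t\,dt$. I will use two complementary bounds on $\varphi_t$: \emph{(a)} Proposition~\ref{prop: convergence semigroupe} combined with $\int gH_B\,d\mu_B=0$ gives $|\varphi_t| \lesssim |g|_\infty e^{-\rho_B t}$; \emph{(b)} the law of $\widetilde\chi_t$ started at $0$ decomposes as $\alpha_t\delta_t + (1-\alpha_t)p_t^{ac}$ where $\alpha_t=e^{-\int_0^t H_B(s)\,ds}\leq e^{-\rho_Bt}$ corresponds to the event of no jump, and $p_t^{ac}(y)=r(t-y)e^{-\int_0^yH_B}$ (with $r$ the renewal density associated to $f_{H_B}$) is uniformly bounded over $\mathcal B$ by a uniform renewal argument. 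This yields $|\varphi_t|\lesssim |g(t)|e^{-\rho_Bt}+|g|_1$. Splitting $A_s$ at a free $v\in[0,T]$, I apply (b) on $[0,v]$ to get $|\int_0^v e^{\lambda_B t}\varphi_t\,dt|\lesssim |g|_1 e^{\lambda_B v}$, and (a) on $[v,T-s]$ to get $|\int_v^{T-s}e^{\lambda_B t}\varphi_t\,dt|\lesssim |g|_\infty\int_v^{T-s}e^{(\lambda_B-\rho_B)t}\,dt$. Plugging these into $II = \frac{\bar m}{m^3}\int_0^T e^{\lambda_Bs}A_s^2\, P^s_{H_B}H_B(0)\,ds$, using $P^s_{H_B}H_B(0)\lesssim 1$, integrating in $s$ and optimizing over $v\in[0,T]$ yields exactly the $\inf_v$ structure of $\Phi_T$ in the regime $\lambda_B\leq 2\rho_B$, while the choice $v=T$ covers the regime $\lambda_B > 2\rho_B$. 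The term $III$ is handled analogously with one factor $A_s$ replaced by $\varphi_s$, which is estimated by (a); this is where the cross term $|g|_1|g|_\infty$ appears.

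The main obstacle is the uniformity in $B\in\mathcal B$ of bound (b): one must control $|p_t^{ac}|_\infty$ by a constant depending only on $b,C$, which requires a uniform version of elementary renewal theory. This in turn rests on the fact that $f_{H_B}$ is genuinely a probability density (by the definition of the Malthus parameter) and that $H_B\geq \rho_B>0$ is bounded below uniformly in $\mathcal B$, so that the renewal density $r$ is uniformly bounded on $\mathcal B$. The secondary difficulty is the routine but tedious bookkeeping of the free parameter $v$ across the regimes $\lambda_B \lessgtr \rho_B$ and $\lambda_B \lessgtr 2\rho_B$ to reach the precise form of $\Phi_T$ stated in~\eqref{def phi}.
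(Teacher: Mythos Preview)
Your proposal is correct and follows essentially the same route as the paper: same decomposition $I+II+III$, same treatment of $I$ by splitting off the invariant integral, same two complementary bounds on $\varphi_t=P^t_{H_B}(gH_B)(0)$ combined through a free cut-off parameter $v$, and the same handling of $III$ to produce the cross term $|g|_1|g|_\infty$. The only genuine difference is how you obtain bound~(b): the paper derives it as Lemma~\ref{lem: renouvellement} by solving the first-jump renewal equation $u=u_0+u\star f_{H_B}$ and bounding $d\E[\widetilde N_s]/ds=\E[H_B(\widetilde\chi_s)]\leq |H_B|_\infty$, whereas you compute the law of $\widetilde\chi_t$ directly as an atom at $t$ plus an absolutely continuous part with density $r(t-y)e^{-\int_0^y H_B}$ and bound the renewal density $r$. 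These are two standard, equivalent renewal-theoretic arguments; note that your uniform bound on $r$ is exactly the paper's bound $d\E[\widetilde N_s]/ds\leq |H_B|_\infty$, so the ``main obstacle'' you flag is in fact immediate and does not require any key renewal theorem. A minor discrepancy: your bound~(b) carries $e^{-\rho_B t}$ in front of $|g(t)|$ while the paper gets $e^{-\lambda_B t}$ from $\PP(\tau>t)\leq me^{-\lambda_B t}$; either suffices for the subsequent integration since the dominant contribution on $[0,v]$ is the $|g|_1$ term.
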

Let us briefly comment on Proposition~\ref{lem: vanishing test}. If $g$ is bounded and compactly supported with $\int g=1$, consider the function $g_{h_T}(y) = h_T^{-1}g\big(h_T^{-1}(x-y)\big)$  that mimics the Dirac function $\delta_x$ for $h_T \rightarrow 0$. It is noteworthy that in the left-hand side of \eqref{gagner du h}, $g_{h_T}(\zeta_u^T)^2$ is of order $h_T^{-2}$ while the right-hand side is of order $e^{\lambda_BT}h_T^{-1}$ if we pick $\omega = h_T^{-1}$ (allowed as soon as $h_T^{-1} \leq e^{\lambda_{B}T}$). We can thus expect to gain a crucial factor $h_T$ thanks to averaging over $\mathring{\mathcal T}_T$.
\begin{proof}
We carefully revisit the estimate \eqref{second moment avant T} in the proof of Theorem~\ref{rate avant T} keeping up with the same notation and assuming with no loss of generality that $\mathring{\mathcal E}_B(g)=0$. Recall decomposition \eqref{eq:dec3}.
\vip
\noindent {\it Step 1}. 
For the term $I$, we insert  $\int_0^\infty g(y)^2H_B(y)\mu_B(y)dy = mc_B\int_0^\infty g(y)^2 e^{-\lambda_B y} f_B(y)dy$ to obtain 
$I=IV+V,$
where 
$$IV \lesssim e^{\lambda_BT}\int_0^\infty  g(y)^2 e^{-\lambda_B y} f_B(y)dy$$
and 
\begin{align*}
|V| & \leq \frac{1}{m}\int_0^T e^{\lambda_Bs}\Big|P_{H_B}^s\big(g^2 H_B\big)(0)-\int_0^\infty g(y)^2H_B(y)\mu_B(y)dy\Big|ds.
\end{align*}
Clearly, $|IV| \lesssim e^{\lambda_BT}|g|_{2}^2$. By Proposition~\ref{prop: convergence semigroupe}, we further infer
$$|V| \lesssim |g|_\infty^2\int_0^T e^{\lambda_Bs}e^{-\rho_Bs}ds \lesssim |g|_\infty^2e^{(\lambda_B-\rho_B)_+T}.$$
\vip
\noindent {\it Step 2}. For the term $II$, 
using $P_{H_B}^s(H_B)(0) \lesssim 1$ we now obtain
\begin{align*} 
II 
& \lesssim e^{\lambda_BT}\int_0^T e^{-\lambda_B s} \Big(\int_0^s e^{\lambda_B t} P_{H_B}^{t}(g H_B)(0) dt\Big)^2ds.
\end{align*}
A new difficulty appears here, since the crude bound 
\begin{equation} \label{crude ergo}
| P_{H_B}^{t}(gH_B)(0)| \lesssim |g|_\infty \exp(-\rho_B t)
\end{equation}
given by Proposition~\ref{prop: convergence semigroupe} does not yield to the correct order for small value of $t$ because of the term $|g|_\infty$.  We need the following refinement (for small values of $t$), based on a renewal argument and proved in Appendix:
\begin{lem} \label{lem: renouvellement}
For every $t\geq 0$ and $g\in \mathcal L$, we have
$$\big|P^t_{H_B}\big(gH_B\big)(0)\big| \lesssim 
|g(t)|e^{-\lambda_B t}+ |g|_1$$
uniformly in $B \in \mathcal B$.
\end{lem}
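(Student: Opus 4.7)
The plan is to derive an explicit renewal representation of $P^t_{H_B}(gH_B)(0)$ and then to bound the associated renewal density uniformly in $B \in \mathcal B$. First I would recognise $P^t_{H_B}(gH_B)(0) = \E[g(\widetilde \chi_t)H_B(\widetilde \chi_t)]$ with $\widetilde \chi_0 = 0$: the process $(\widetilde \chi_t)$ is the age process of a pure-jump renewal process with inter-arrival density $f_{H_B}$. Conditioning on the last renewal before time $t$, the law of $\widetilde \chi_t$ decomposes as the atom $\bar F_{H_B}(t)\delta_t$ together with the absolutely continuous piece $U(t-y)\bar F_{H_B}(y){\bf 1}_{\{0 \leq y < t\}}dy$, where $\bar F_{H_B}(x) = \exp(-\int_0^x H_B)$ and $U(s) = \sum_{k \geq 1} f_{H_B}^{*k}(s)$ is the renewal density associated to $f_{H_B}$. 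Using the pointwise identity $H_B(y)\bar F_{H_B}(y) = f_{H_B}(y) = m e^{-\lambda_B y}f_B(y)$, integration against $gH_B$ produces
\begin{equation*}
P^t_{H_B}(gH_B)(0) \;=\; g(t) f_{H_B}(t) \;+\; \int_0^t g(y) f_{H_B}(y)\, U(t-y)\,dy.
\end{equation*}
Since $f_{H_B}(t) \leq m \|B\|_\infty e^{-\lambda_B t}$, the atom term is immediately controlled by $|g(t)|e^{-\lambda_B t}$ up to a constant depending only on $\mathcal B$, which accounts for the first summand of the target estimate.

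The crux is then the uniform bound $\sup_{s \geq 0,\, B \in \mathcal B} U(s) < \infty$, from which the continuous part is controlled by $|\int_0^t g(y) f_{H_B}(y) U(t-y)\,dy| \leq \|U\|_\infty \|f_{H_B}\|_\infty |g|_1 \lesssim |g|_1$, yielding the second summand. To establish it I would rerun the coupling argument already used to prove Proposition~\ref{prop: convergence semigroupe}. Let $(Y_t, Z_t)$ be two copies of the age process driven by a common Poisson random measure, with $Y_0 = 0$ and $Z_0$ distributed according to the invariant law $\mu_B$; the proof of that proposition gives $\PP(Y_t \neq Z_t) \leq e^{-\rho_B t}$. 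The key observation is that $U(t) = \E[H_B(Y_t)]$ is exactly the instantaneous rate of renewals at time $t$ for the process started at $0$ (this is the integrated form of the renewal equation applied to $H_B$), while $c_B := \int H_B\,d\mu_B = 1/\int_0^\infty \bar F_{H_B}(x)\,dx$ is the corresponding stationary rate. The coupling then yields
\begin{equation*}
\big|U(t)-c_B\big| \;=\; \big|\E[H_B(Y_t)-H_B(Z_t)]\big| \;\leq\; 2\|H_B\|_\infty\, e^{-\rho_B t}.
\end{equation*}
Both $c_B$ and $\|H_B\|_\infty$ are uniformly controlled over $\mathcal B_{b,C}$ (using $\rho_B \geq b > 0$ and the explicit upper bound on $H_B$ already invoked in the proof of Theorem~\ref{rate en T}), so $\sup_s U(s) \lesssim 1$ as required.

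The hard part will be precisely this uniform boundedness of the renewal density: the naive termwise estimate $\|f_{H_B}^{*k}\|_\infty \leq \|f_{H_B}\|_\infty$ diverges after summation in $k$, and the renewal function itself only grows linearly, so one genuinely needs the Doeblin-type coupling and the geometric ergodicity of the auxiliary process to secure a sup-norm bound on $U$. Note also that no cancellation or vanishing-integral assumption on $g$ is used anywhere; both pieces of the target bound come from direct pointwise estimates, in agreement with the fact that the lemma is stated for every $g \in \mathcal L$.
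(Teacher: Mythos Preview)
Your argument is correct and, once unfolded, lands on exactly the same representation as the paper: conditioning on the last renewal rather than the first jump leads to
\[
P^t_{H_B}(gH_B)(0)=g(t)f_{H_B}(t)+\int_0^t g(y)f_{H_B}(y)\,U(t-y)\,dy,
\]
which is identical to the paper's formula after noting that $H_B(t)\PP(\tau>t)=f_{H_B}(t)$ and that $d\E[\widetilde N_s]=U(s)\,ds$.

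The only real divergence is in how you bound the renewal density $U$. You write that ``one genuinely needs the Doeblin-type coupling and the geometric ergodicity'' to control $\sup_s U(s)$, and you then run the argument of Proposition~\ref{prop: convergence semigroupe} to get $|U(t)-c_B|\le 2\|H_B\|_\infty e^{-\rho_B t}$. This works, but it is more than necessary: you have already observed that $U(t)=\E[H_B(\widetilde\chi_t)]$, and from this the trivial inequality $U(t)\le\|H_B\|_\infty$ follows in one line, uniformly over $\mathcal B$. The paper takes exactly this shortcut, writing $d\E[\widetilde N_s]=\E[H_B(\widetilde\chi_s)]\,ds\le |H_B|_\infty\,ds$. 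So the ``hard part'' you flag is in fact not hard once the intensity representation of $U$ is in hand; no ergodicity is required for this particular lemma.
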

Let $v \in [0,T]$ be arbitrary. For $0 \leq s  \leq v$, by Lemma~\ref{lem: renouvellement} we obtain
\begin{equation*}
\mathcal I_s =\Big(\int_0^s e^{\lambda_B t} |P_{H_B}^{t}(gH_B)(0)| dt\Big)^2  \lesssim \Big( \int_0^s |g(t)| dt+|g|_1\int_0^s e^{\lambda_B t} dt\Big)^2  \lesssim |g|_1^2 e^{2\lambda_B s}.
\end{equation*}
For $s \geq v$, we have by \eqref{crude ergo}  
\begin{align*}
\mathcal I_s & \lesssim  \mathcal I_{v} + |g|_\infty^2\Big(\int_v^se^{(\lambda_B-\rho_B)t}dt\Big)^2 \lesssim \mathcal I_{v} + |g|_\infty^2\big(e^{2(\lambda_B-\rho_B)s}{\bf 1}_{\{\lambda_B > \rho_B\}}+(s- v)^2 {\bf 1}_{\{\lambda_B \leq \rho_B\}}{\bf 1}_{\{s \geq v\}}\big).
\end{align*}
On the one hand, 
$\int_0^{v} e^{-\lambda_B s}\mathcal I_sds  \lesssim  |g|_1^2 e^{\lambda_B v}$
and on the other hand $\int_{v}^Te^{-\lambda_Bs}\mathcal I_sds$ is less than
\begin{align*}
\; \mathcal I_{v}&\int_{v}^T e^{-\lambda_Bs}ds+ |g|_\infty^2\Big(\int_v^T e^{-\lambda_Bs}e^{2(\lambda_B-\rho_B)_+s}ds+\int_v^Te^{-\lambda_Bs}(s-v)^2ds{\bf 1}_{\{\lambda_B \leq \rho_B\}}\Big) \\  
\lesssim &
\left\{
\begin{array}{lll}
|g|_1^2e^{\lambda_Bv}+|g|_\infty^2e^{-\lambda_B v} & \text{if} & \lambda_B \leq \rho_B \\ \\
|g|_1^2e^{\lambda_Bv}+|g|_\infty^2e^{(\lambda_B-2\rho_B)v} & \text{if} & \rho_B \leq \lambda_B \leq 2\rho_B \\ \\
|g|_1^2e^{\lambda_Bv}+|g|_\infty^2e^{(\lambda_B-2\rho_B)T} & \text{if} & \lambda_B \geq 2\rho_B, \\
\end{array}
\right.
\end{align*}
whence for every $v \in [0,T]$, we derive
$$|II| \lesssim e^{\lambda_B T}\Big(|g|_1^2e^{\lambda_Bv} + |g|_\infty^2\big(e^{(-\lambda_B+2(\lambda_B-\rho_B)_+)v}{\bf 1}_{\{\lambda_B \leq 2\rho_B\}}+ e^{(\lambda_B-2\rho_B)T}{\bf 1}_{\{\lambda_B > 2\rho_B\}}\big)
\Big).$$
\vip
\noindent {\it Step 3}. 
Finally going back to Step 3 in the proof of Theorem~\ref{rate avant T} we readily obtain
\begin{align*}
|III|& \lesssim  \int_0^T e^{\lambda_B s} P_{H_B}^s\big(|g|H_B\big)(0)\int_0^{T-s}e^{\lambda_B t}\big|P_{H_B}^t\big(gH_B\big)(0)\big|dtds \\
& \lesssim \int_0^T e^{\lambda_Bs} (|g(s)|e^{-\lambda_Bs}+|g|_1)|g|_\infty\int_0^{T-s} e^{\lambda_Bt}e^{-\rho_Bt}dtds  
\end{align*}
by applying Lemma~\ref{lem: renouvellement} for the term involving $P^s_{H_B}$ and the estimate \eqref{crude ergo} for the term involving $P_{H_B}^t$, therefore $|III| \lesssim  e^{\lambda_B T}|g|_1|g|_\infty$. 
\end{proof}
\vip
Proposition~\ref{lem: vanishing test} enables us to obtain the next result which is the key ingredient to get rid of the random bandwidth $\widehat h_T$, thanks to the fact that it is concentrated around its estimated value $h_T(\beta)= e^{- \frac{1}{2\beta+1}\lambda_BT}$. To that end, define, for $C >0$
$${\mathcal C}_C = \big\{g: \R \rightarrow \R,\;\mathrm{supp}(g) \subset [0,C] \;\text{and}\;\sup_y|g(y)|\leq C \big\}.$$
Denote by $\mathcal C_C^1$ (later abbreviated by $\mathcal C^1$) the subset of $\mathcal C_C$ of functions that are moreover differentiable, with derivative uniformly bounded by $C$. For $h>0$ we set $g_h(y) = h^{-1} g\big(h^{-1}(x-y)\big)$. Finally, for $a,b\geq 0$ we set  $[a\pm b]=[(a-b)_+, a+b]$. Recall from Section \ref{convergence statistique} that 
$v_T(B) = e^{-\min\{\rho_B,\lambda_B/2\} T}$ if $\lambda_B \neq 2\rho_B$ and $T^{1/2}e^{-\lambda_BT/2}$ otherwise.
\begin{prop} \label{lem: kolmo} 
Assume that $\beta \geq 1/2$. Define $\varpi_B = \min\{\max\{1,\lambda_B/\rho_B\}, 2\}$.
 For every $\kappa>0$, 
$$v_T(B)^{-1}\sup_{h \in [h_T(\beta)(1\pm \kappa T^2 v_T(B))]} \big|\mathcal E^T\big(\mathring{\mathcal T}_T,h^{\varpi_B /2}fg_{h}) - \mathring{\mathcal E}_B(h^{\varpi_B /2}fg_{h})\big|$$
is $\mathcal B \times \mathcal L \times \mathcal C^1$-tight for the parameter $(B,f,g)$.
\end{prop}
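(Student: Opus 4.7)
The strategy is a bias-variance argument at a fixed bandwidth combined with a Sobolev-type control in $h$ that exploits the small length of $I_T := [h_T(\beta)(1\pm\kappa T^2 v_T(B))]$ together with the $\mathcal C^1$-regularity of $g$. Write $\phi_h(y) = h^{\varpi_B/2}f(y)g_h(y)$ and
$$S_h := e^{-\lambda_B T}\sum_{u \in \mathring{\mathcal T}_T}\big(\phi_h(\zeta_u^T)-\mathring{\mathcal E}_B(\phi_h)\big),$$
so that $\mathcal E^T(\mathring{\mathcal T}_T,\phi_h)-\mathring{\mathcal E}_B(\phi_h)= (e^{\lambda_B T}/|\mathring{\mathcal T}_T|)\, S_h$ on $\{|\mathring{\mathcal T}_T|\geq 1\}$. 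Since Lemma~\ref{estimate moment un avant T} makes $e^{\lambda_B T}/|\mathring{\mathcal T}_T|$ $\mathcal B$-tight, it suffices to prove that $\E[\sup_{h\in I_T}|S_h|]\lesssim v_T(B)$ uniformly in $(B,f,g)$.

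\textbf{Step 1 (second moment at fixed $h$).} For $g\in \mathcal C^1_C$ and $f\in\mathcal L_{C'}$, the identities $|\phi_h|_\infty\lesssim h^{\varpi_B/2-1}$, $|\phi_h|_1\lesssim h^{\varpi_B/2}$ and $|\phi_h|_2^2\lesssim h^{\varpi_B-1}$ are immediate from the compact support of $g$. I plug these into Proposition~\ref{lem: vanishing test} and, in the regime $\lambda_B\leq 2\rho_B$, optimise by taking $v=-\rho_B^{-1}\log h$. A case-by-case check in the three regimes $\lambda_B\leq \rho_B$, $\rho_B<\lambda_B<2\rho_B$ and $\lambda_B\geq 2\rho_B$ shows that the specific value $\varpi_B=\min\{\max\{1,\lambda_B/\rho_B\},2\}$ is precisely tuned so that all terms in the bound of Proposition~\ref{lem: vanishing test} are of order at most $e^{2\lambda_B T}v_T(B)^2$, up to the multiplicative factor $1+T$ absorbed in the $T^{1/2}$ appearing in $v_T(B)$ when $\lambda_B=2\rho_B$. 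After dividing by $e^{2\lambda_B T}$ this reads $\E[|S_h|^2]\lesssim v_T(B)^2$ uniformly in $h\in I_T$ and in $(B,f,g)\in\mathcal B\times\mathcal L\times \mathcal C^1$.

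\textbf{Step 2 (integration in $h$).} A direct computation gives
$$\partial_h \phi_h \;=\; h^{-1}\Big(\tfrac{\varpi_B}{2}\phi_h + \widetilde\phi_h\Big),\qquad \widetilde\phi_h(y)=h^{\varpi_B/2}f(y)\widetilde g_h(y),\quad \widetilde g(u):=-g(u)-u g'(u).$$
Since $g\in\mathcal C^1$, $\widetilde g\in\mathcal C_{C''}$ for some $C''>0$, so Step 1 applies verbatim to $\widetilde\phi_h$, yielding $\E[|\widetilde S_h|^2]\lesssim v_T(B)^2$ for the analogously defined $\widetilde S_h$. On the event of probability tending to $1$ where $|\mathring{\mathcal T}_T|\geq 1$, the map $h\mapsto S_h$ is differentiable with $|\partial_h S_h|\leq h^{-1}(\tfrac{\varpi_B}{2}|S_h|+|\widetilde S_h|)$, whence
$$\sup_{h\in I_T}|S_h|\;\leq\; |S_{h_T(\beta)}| + \int_{I_T} h^{-1}\big(|S_h|+|\widetilde S_h|\big)\,dh.$$
Taking expectations, using Cauchy--Schwarz and the length estimate $|I_T|\lesssim h_T(\beta) T^2 v_T(B)$, one obtains
$$\E\Big[\sup_{h\in I_T}|S_h|\Big] \;\lesssim\; v_T(B) + \frac{|I_T|}{h_T(\beta)}\,v_T(B) \;\lesssim\; v_T(B)\big(1+T^2 v_T(B)\big).$$
Because $v_T(B)$ is exponentially small, $T^2 v_T(B)=o(1)$ and the right-hand side is $O(v_T(B))$. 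Markov's inequality combined with the tightness of $e^{\lambda_B T}/|\mathring{\mathcal T}_T|$ then gives the claimed $\mathcal B\times\mathcal L\times\mathcal C^1$-tightness.

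\textbf{Main obstacle.} The bookkeeping in Step 1 is the delicate part: the three terms $|g|_2^2$, $|g|_1^2 e^{\lambda_B v}$ and $|g|_\infty^2 e^{(2(\lambda_B-\rho_B)_+-\lambda_B)v}$ in $\Phi_T$, together with the lineage term $e^{(\lambda_B-\rho_B)_+T}|g|_\infty^2$, must be balanced simultaneously, and this balance only closes if $\varpi_B$ takes precisely the value in the statement; the boundary case $\lambda_B=2\rho_B$ leaves a residual $T^{1/2}$ that is precisely what the definition of $v_T(B)$ anticipates. Uniformity in $B\in\mathcal B$ follows from the bounds $\inf_{\mathcal B}\lambda_B>0$ and $\inf_{\mathcal B}\rho_B>0$ already exploited in the proofs of Theorems~\ref{rate en T} and~\ref{rate avant T}. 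A minor secondary point is the differentiability of $h\mapsto S_h$, which is clear on $\{|\mathring{\mathcal T}_T|\geq 1\}$ since $\phi_h(y)$ and $\mathring{\mathcal E}_B(\phi_h)$ are smooth in $h$ (dominated convergence for the latter).
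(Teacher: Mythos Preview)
Your argument is correct and takes a genuinely different, more elementary route than the paper. The paper parametrises $h_s=h_T(\beta)(1-\kappa T^2v_T(B))+2s\kappa h_T(\beta)T^2v_T(B)$, $s\in[0,1]$, and applies Kolmogorov's continuity criterion: it proves the two-point increment estimate $\E[(V^T_{h_t}-V^T_{h_s})^2]\lesssim (t-s)^2$ by applying Proposition~\ref{lem: vanishing test} to the difference function $\Delta_{s,t}(h^{\varpi_B/2}fg_h)$, then checks a tedious case-by-case bound on $|\Delta_{s,t}(\cdot)|_p$ in the three regimes (their Steps~3--4), together with the fixed-$h$ bound $\sup_T\E[(V^T_{h_0})^2]<\infty$ (their Step~5). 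You bypass Kolmogorov entirely: you use the identity $\partial_h\phi_h=h^{-1}(\tfrac{\varpi_B}{2}\phi_h+\widetilde\phi_h)$ with $\widetilde g(u)=-g(u)-u g'(u)\in\mathcal C_{C''}$, observe that the fixed-$h$ second-moment bound of Proposition~\ref{lem: vanishing test} applies equally to $\widetilde S_h$, and then integrate pathwise over $I_T$. Since $|I_T|/h_T(\beta)\lesssim T^2v_T(B)\to 0$, the integral contribution is $o(v_T(B))$. This is shorter and avoids the Kolmogorov machinery, the trade-off being that the paper's route would still work if one only had H\"older-type control on increments rather than full differentiability in $h$.

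One minor slip: in Step~1 your choice $v=-\rho_B^{-1}\log h$ is correct in the regime $\rho_B<\lambda_B\leq 2\rho_B$ but fails when $\lambda_B\leq\rho_B$, where it yields the divergent term $h^{\lambda_B/\rho_B-1}$. In that sub-case one should take $v=-\lambda_B^{-1}\log h$ instead (this is exactly the paper's choice $v=T/(2\beta+1)$ in their Step~5), which gives $he^{\lambda_Bv}+h^{-1}e^{-\lambda_Bv}=2$. With this correction the case analysis closes as you describe.
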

\begin{proof}
\noindent {\it Step 1}. Define  $\overline{fg_h} = fg_h-\mathring{\mathcal E}_B(fg_{h})$. Writing
\begin{align*}
&v_T(B)^{-1}\Big(\mathcal E^T\big(\mathring{\mathcal T}_T,h^{\varpi_B /2}fg_{h}) - \mathring{\mathcal E}_B(h^{\varpi_B /2}fg_{h})\Big) \\
= &\frac{e^{\lambda_BT}}{|\mathring{\mathcal T}_T|} e^{(\min\{\rho_B,\lambda_B/2\}-\lambda_B)T}(T^{-1/2})^{{\bf 1}_{\{\lambda_B=2\rho_B\}}}\sum_{u \in \mathring{\mathcal T}_T}h^{\varpi_B/2}\overline{fg_h}(\zeta_u),
\end{align*}
we see as in the proof of Theorem~\ref{rate avant T}  that thanks to Lemma~\ref{estimate moment un avant T},  it is enough to prove the $\mathcal B$-tightness of 
$$\sup_{h \in[h_T(\beta)(1\pm \kappa T^2 v_T(B))] }|V^T_h| =  \sup_{s \in [0,1]}|V^T_{h_s}|,$$ where
$$V^T_h = e^{(\min\{\rho_B,\lambda_B/2\}-\lambda_B)T}(T^{-1/2})^{{\bf 1}_{\{\lambda_B=2\rho_B\}}}\sum_{u \in \mathring{\mathcal T}_T}h^{\varpi_B /2}\overline{fg_h}(\zeta_u),$$
and
$$h_s = h_T(\beta)\big(1-\kappa T^2 v_T(B)\big)+2s\kappa h_T(\beta)T^2 v_T(B), \;\;s \in [0,1].$$
\noindent {\it Step 2}. We claim that 
\begin{equation} \label{kolmo a prouver}
\left\{
\begin{array}{ll}
\sup_{T>0}\E\big[(V^T_{h_0})^2\big] <\infty &  \\ \\
 \E\big[\big(V^T_{h_t}-V^T_{h_s}\big)^2\big]  \leq C'(t-s)^2 & \text{for} \; s, t\in [0,1], \\ 
 \end{array}
\right.
\end{equation} 
for some constant $C'>0$ that does not depend on $T$ nor $B \in \mathcal B$. Then, by Kolmogorov continuity criterion, this implies in particular that 
$$\sup_{T >0}\sup_{B \in \mathcal B}\E\big[\sup_{s \in [0,1]}|V^T_{h_s}|\big] <\infty$$ hence the result (see for instance~\cite{kolmo unif} to track the constant and obtain a uniform version of the continuity criterion). We have
\begin{align*}
V^T_{h_t}-V^T_{h_s} 
& = e^{(\min\{\rho_B,\tfrac{\lambda_B}{2}\}-\lambda_B)T}(T^{-\tfrac{1}{2}{\bf 1}_{\{\lambda_B=2\rho_B\}}})\sum_{u \in \mathring{\mathcal T}_T}\Big(\Delta_{s,t}(h^{\varpi_B /2}fg_{h})(\zeta_u) - \mathring{\mathcal E}_B\big(\Delta_{s,t}(h^{\varpi_B /2}fg_{h})\big)\Big)
\end{align*}
where
$\Delta_{s,t}(h^{\varpi_B /2}fg_{h})(y) = h_t^{\varpi_B /2}f(y)g_{h_t}(y)-h_s^{\varpi_B /2}f(y)g_{h_s}(y).$
By Proposition~\ref{lem: vanishing test}, we derive that $\E[(V^T_{h_t}-V^T_{h_s})^2]$ is less than \\
%
$$
\left\{
\begin{array}{lll}
e^{-\lambda_BT}|\Delta_{s,t}(h^{\varpi_B/2}fg_{h})|_\infty^2+\Phi_T \big(B,\Delta_{s,t}(h^{\varpi_B /2}fg_{h})\big) & \text{if} & \lambda_B \leq \rho_B \\ \\
e^{-\rho_BT}|\Delta_{s,t}(h^{\varpi_B /2}fg_{h})|_\infty^2+\Phi_T \big(B,\Delta_{s,t}(h^{\varpi_B /2}fg_{h})\big) & \text{if} & \rho_B \leq \lambda_B \leq 2 \rho_B \\ \\
e^{-(\lambda_B-\rho_B)T}|\Delta_{s,t}(h^{\varpi_B /2}fg_{h})|_\infty^2+e^{-(\lambda_B-2\rho_B)T}\Phi_T \big(B,\Delta_{s,t}(h^{\varpi_B /2}fg_{h})\big) & \text{if} & \lambda_B \geq 2 \rho_B \\ 
\end{array}
\right.
$$
(we ignore the slow term in the limiting case $\lambda_B = 2\rho_B$) and the remainder of the proof amounts to check that each term in the estimate above has order $(t-s)^2$ uniformly in $T$ and $B \in \mathcal B$.
\vip

\noindent {\it Step 3}.  For every $y$, we have
$$\Delta_{s,t}\big(h^{\varpi_B /2}fg_{h}\big)(y) = (h_t-h_s)\partial_h\big(h^{\varpi_B /2}f(y)g_{h}(y)\big)_{|h=h^\star(y)}$$ 
for some $h^\star(y) \in [\min\{h_t,h_s\},\max\{h_t,h_s\}]$. 
Observe now that since $g \in \mathcal C^1$ and $f \in \mathcal L$, we have
$$
\partial_h\big( h^{\varpi_B /2}fg_{h}(y)\big) = (\tfrac{\varpi_B}{2}-1 )h^{\tfrac{\varpi_B}{2}-2}f(y)g\big(h^{-1}(x-y)\big)-h^{\tfrac{\varpi_B}{2}-3}(x-y)f(y)g'\big(h^{-1}(x-y)\big)
$$
therefore, for small enough $h$ (which is always the case for $T$ large enough, uniformly in $B \in \mathcal B$) and since $|x-y| \lesssim h$ thanks to the fact that $g$ is compactly supported, we obtain
\begin{equation*} 
|\partial_h \big(h^{\varpi_B/2}fg_{h}(y)\big)| \lesssim h^{\varpi_B/2-2}{\bf 1}_{[0,C]}\big(h^{-1}(x-y)\big).
\end{equation*}
Assume with no loss of generality that $s\leq t$ so that $h_s \leq h(y)^\star \leq h_t$. It follows that 
\begin{align*}
\big|\Delta_{s,t}\big(h^{\varpi_B/2}fg_{h})(y)\big| & \lesssim (h_t-h_s)h^\star(y)^{\varpi_B/2-2}{\bf 1}_{[0,C]}\big(h^\star(y)^{-1}(x-y)\big) \\
& \leq  (h_t-h_s)h_s^{\varpi_B/2-2}{\bf 1}_{[0,C]}\big(h_t^{-1}(x-y)\big). 
\end{align*}
Using that $h_t-h_s = 2(t-s)\kappa T^2h_T(\beta)v_T(B)$, we successively obtain 
\begin{align*}
& \big|\Delta_{s,t}(h^{\varpi_B/2}fg_{h})\big|_\infty^2 \lesssim (h_t-h_s)^2h_s^{\varpi_B-4}  \lesssim (t-s)^2 T^4 v_T(B)^2h_T(\beta)^{\varpi_B-2}, \\ 
& \big|\Delta_{s,t}(h^{\varpi_B/2}fg_{h})\big|_2^2  \lesssim (h_t-h_s)^2h_s^{\varpi_B-4}h_t \lesssim (t-s)^2 T^4 v_T(B)^2h_T(\beta)^{\varpi_B-1},\\ 
& \big|\Delta_{s,t}(h^{\varpi_B/2}fg_{h})\big|_1^2 \lesssim (h_t-h_s)^2h_s^{\varpi_B-4}h_t^2\lesssim (t-s)^2 T^4 v_T(B)^2h_T(\beta)^{\varpi_B},\\ 
 & \big|\Delta_{s,t}(h^{\varpi_B/2}fg_{h})\big|_1\big|\Delta_{s,t}\big(h^{\varpi_B/2}fg_{h})\big|_\infty  
 \lesssim (t-s)^2 T^4 v_T(B)^2h_T(\beta)^{\varpi_B-1}.
\end{align*}
\vip

\noindent {\it Step 4}. Recall that $h_T(\beta)=e^{-\lambda_BT/(2\beta+1)}$. When $\lambda_B \leq \rho_B$, we have $v_T(B)=e^{-\lambda_BT/2}$ and $\varpi_B=1$. By definition  of $\Phi_T$ in \eqref{def phi} together with the estimates of Steps 2 and 3, we obtain
\begin{align*}
\E\big[(V^T_{h_t}-V^T_{h_s})^2\big] 
& \lesssim e^{-\lambda_BT}|\Delta_{s,t}(h^{1/2}fg_{h})|_\infty^2+\Phi_T \big(B,\Delta_{s,t}(h^{1/2}fg_{h})\big) \\
& \lesssim (t-s)^2T^4\big(e^{\lambda_B(\tfrac{1}{2\beta+1}-2)T}+e^{-\lambda_BT}+e^{-\lambda_B(\tfrac{1}{2\beta+1}+1)T}e^{\lambda_B v}+e^{\lambda_B(\tfrac{1}{2\beta+1}-1)T}e^{-\lambda_Bv}\big) 
\end{align*}
which is of order $(t-s)^2$ uniformly in $T>0$ by picking $v=0$ for instance. When $\rho_B \leq \lambda_B \leq 2\rho_B$, we still have $v_T(B)=e^{-\lambda_BT/2}$ but now $\varpi_B= \lambda_B/\rho_B$. It follows that $\E[(V^T_{h_t}-V^T_{h_s})^2]$ is of order
\begin{align*}
 & e^{-\rho_BT}|\Delta_{s,t}(h^{\lambda_B/2\rho_B}fg_{h})|_\infty^2+\Phi_T \big(B,\Delta_{s,t}(h^{\lambda_B/2\rho_B}fg_{h})\big) \\
 \lesssim &\; (t-s)^2T^4\big(e^{\lambda_B(\tfrac{2-\lambda_B/\rho_B}{2\beta+1}-1)T}(e^{-\rho_BT}+e^{(\lambda_B-2\rho_B)v})+e^{\lambda_B(\tfrac{1-\lambda_B/\rho_B}{2\beta+1}-1)T}+e^{-\lambda_B(\tfrac{\lambda_B/\rho_B}{2\beta+1}+1)T}e^{\lambda_Bv}\big) 
\end{align*}
and this last term is again of order $(t-s)^2$ uniformly in $T>0$ by noting that $1 \leq \lambda_B/\rho_B\leq 2$ and picking $v=0$ for instance. Finally, when $2\rho_B \leq \lambda_B$, we have $v_T(B)=e^{-\rho_BT}$ and $\varpi_B=2$. This entails 
\begin{align*}
\E\big[(V^T_{h_t}-V^T_{h_s})^2\big] 
& \lesssim e^{-(\lambda_B-\rho_B)T}|\Delta_{s,t}(hfg_{h})|_\infty^2+e^{-(\lambda_B-2\rho_B)T}\Phi_T \big(B,\Delta_{s,t}(hfg_{h})\big) \\
& \lesssim (t-s)^2T^4\big(e^{-(\lambda_B+\rho_B)T}+e^{-\lambda_B(\tfrac{1}{2\beta+1}+1)T}+e^{-2\rho_BT}\big) 
\end{align*}
and these terms are all again of order $(t-s)^2$ uniformly in $T$. 
\vip

\noindent {\it Step 5}. It remains to show $\sup_{T>0}\E\big[(V^T_{h_0})^2\big] <\infty$ in order to complete the proof of 
\eqref{kolmo a prouver}. By Step 2 and the definition of $\varpi_B$, we readily have 
$$\E\big[(V_{h_0}^T)^2\big] \lesssim
\left\{
\begin{array}{lll}
e^{-\lambda_BT}|h_0^{1/2}fg_{h_0}|_\infty^2+\Phi_T(B,h_0^{1/2}fg_{h_0}) & \text{if} & \lambda_B \leq \rho_B\\ \\
e^{-\rho_BT}|h_0^{\lambda_B/2\rho_B}fg_{h_0}|_\infty^2+\Phi_T (B,h_0^{\lambda_B/2\rho_B}fg_{h_0}) & \text{if} & \rho_B \leq \lambda_B \leq 2 \rho_B \\ \\
e^{-(\lambda_B-\rho_B)T}|h_0fg_{h_0}|_\infty^2+e^{-(\lambda_B-2\rho_B)T}\Phi_T(B,h_0fg_{h_0}) & \text{if} & \lambda_B \geq 2 \rho_B. \\ 
\end{array}
\right.
$$
When $\lambda_B \leq \rho_B $, since $h_0$ is of order $h_T(\beta)$, we have
$$\E\big[(V_{h_0}^T)^2\big] \lesssim e^{-\lambda_BT}h_T(\beta)^{-1}+1+h_T(\beta)e^{\lambda_Bv}+h_{T}(\beta)^{-1}e^{-\lambda_B v}$$
for every $v \in [0,T]$, and the choice $v=\frac{1}{2\beta+1}T$ entails $\E[(V_{h_0}^T)^2] \lesssim 1$. When $\rho_B \leq \lambda_B \leq 2\rho_B$, we have
$$\E\big[(V_{h_0}^T)^2\big] \lesssim e^{-\rho_BT}h_T(\beta)^{\tfrac{\lambda_B}{\rho_B}-2}+h_T(\beta)^{\tfrac{\lambda_B}{\rho_B}-1}+h_T(\beta)^{\tfrac{\lambda_B}{\rho_B}}e^{\lambda_Bv}+h_{T}(\beta)^{\tfrac{\lambda_B}{\rho_B}-2}e^{(\lambda_B-2\rho_B)v}.$$
The first term is bounded as soon as $\beta \geq 1/2$ and the choice $v=\tfrac{\lambda_B}{\rho_B(2\beta+1)}T$ for the last two terms entails $\E[(V_{h_0}^T)^2] \lesssim 1$. Finally, when $2\rho_B \leq \lambda_B $ we have
$$\E\big[(V_{h_0}^T)^2\big] \lesssim e^{-(\lambda_B-\rho_B)T}+1$$
and this term is bounded likewise. Eventually \eqref{kolmo a prouver} is established and Proposition~\ref{lem: kolmo} is proved. 
\end{proof}
We now get rid of the preliminary estimators $\widehat m_T$ and $\widehat \lambda_T$. Remember that the target rate of convergence for $\widehat B_T(x)$ is $w_T(B)=T^{{\bf 1}_{\{\lambda_B=2\rho_B\}}} \exp\big(-\min\{\lambda_B, 2\rho_B\}\frac{\beta-(\lambda_B/\rho_B-1)_+/2}{2\beta+1}T\big)$.  
\begin{lem} \label{lem: degager est preli}
Assume that $\beta>1$. Let either $G_T(y)=g_{\widehat h_T}(y)$ with $g \in {\mathcal C}^1$ or $G_T(y) = {\bf 1}_{\{y \leq x\}}$ for $y \in [0,\infty)$. Then
\begin{equation*}
w_T(B)^{-1}\big(\mathcal E^T(\mathring{\mathcal T}_T, \widehat m_T^{-1}e^{\widehat\lambda_T\cdot}G_T)-\mathcal E^T\big(\mathring{\mathcal T}_T, m^{-1}e^{\lambda_B\cdot}G_T)\big)
\end{equation*}
is $\mathcal B$-tight for the parameter $B$.
\end{lem}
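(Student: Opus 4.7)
The plan is to isolate the errors from $\widehat m_T$ and from $\widehat \lambda_T$ via the telescopic decomposition
\begin{align*}
& \mathcal E^T(\mathring{\mathcal T}_T, \widehat m_T^{-1}e^{\widehat\lambda_T\cdot}G_T) - \mathcal E^T(\mathring{\mathcal T}_T, m^{-1}e^{\lambda_B\cdot}G_T) \\
& \qquad = (\widehat m_T^{-1} - m^{-1})\,\mathcal E^T\big(\mathring{\mathcal T}_T, e^{\widehat\lambda_T\cdot}G_T\big) + m^{-1}\,\mathcal E^T\big(\mathring{\mathcal T}_T, (e^{\widehat\lambda_T\cdot} - e^{\lambda_B\cdot})G_T\big).
\end{align*}
By Proposition~\ref{prop: conv lambda}, $|\widehat m_T - m|$ is of order $e^{-\lambda_B T/2}$ and $|\widehat \lambda_T - \lambda_B|$ of order $T v_T(B)$ in the $\mathcal B$-tightness sense. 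A direct case-by-case computation in the three regimes $\lambda_B \leq \rho_B$, $\rho_B \leq \lambda_B \leq 2\rho_B$ and $\lambda_B \geq 2\rho_B$, using the uniform bounds on $\lambda_B$ and $\rho_B$ inherited from $B \in \mathcal B_{b,C}$, shows that both these scales are $o(w_T(B))$, so the matter reduces to the $\mathcal B$-tightness of $\mathcal E^T(\mathring{\mathcal T}_T, e^{\widehat\lambda_T\cdot}G_T)$ and $\mathcal E^T(\mathring{\mathcal T}_T, |G_T|)$.

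For the exponential-difference factor appearing in the second summand I would use the mean-value estimate
$$|e^{\widehat\lambda_T y} - e^{\lambda_B y}| \leq |y|\,e^{(\widehat\lambda_T \vee \lambda_B)y}\,|\widehat\lambda_T - \lambda_B|$$
together with the observation that $G_T$ is supported in $[x-C\widehat h_T, x]$ in the kernel case and in $[0,x]$ in the indicator case, hence in a fixed compact subset of $(0,\infty)$ with probability tending to one. Consequently $y$ and $e^{\widehat\lambda_T y}$ stay uniformly bounded on $\mathrm{supp}(G_T)$, and both tightness statements reduce to that of $\mathcal E^T(\mathring{\mathcal T}_T, |G_T|)$.

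For $G_T = {\bf 1}_{\{\cdot \leq x\}}$ the reduction is immediate from Theorem~\ref{rate avant T} applied to the bounded test function ${\bf 1}_{\{\cdot \leq x\}}$. For $G_T = g_{\widehat h_T}$ I would invoke Proposition~\ref{lem: kolmo}: writing $\widehat h_T/h_T(\beta) = \exp\big(-(\widehat \lambda_T - \lambda_B)T/(2\beta+1)\big)$ and applying Proposition~\ref{prop: conv lambda}, one sees that $\widehat h_T$ lies in $[h_T(\beta)(1\pm \kappa T^2 v_T(B))]$ with probability going to one, uniformly in $B \in \mathcal B$. On this event Proposition~\ref{lem: kolmo} (applied with $f\equiv 1$ and the kernel $g$) gives
$$\mathcal E^T\big(\mathring{\mathcal T}_T, g_{\widehat h_T}\big) = \mathring{\mathcal E}_B\big(g_{\widehat h_T}\big) + O_{\mathbb P}\big(v_T(B)\,h_T(\beta)^{-\varpi_B/2}\big),$$
while $\mathring{\mathcal E}_B(g_{\widehat h_T}) = \int g_{\widehat h_T}(y)f_{H_B}(y)dy$ is bounded by continuity of $f_{H_B}$ at $x$ and $\int g = 1$; the stochastic remainder is $o_{\mathbb P}(1)$ precisely under the assumption $\beta > 1$. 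The main obstacle is this last point: because the bandwidth $\widehat h_T$ and the exponential weight $e^{\widehat\lambda_T\cdot}$ are both driven by the same estimator $\widehat \lambda_T$, one must verify that substituting the estimated values by the true ones does not destroy the variance control of Proposition~\ref{lem: kolmo}, which is why the explicit concentration rate $T^2 v_T(B)$ of $\widehat h_T$ around $h_T(\beta)$ plays a crucial role.
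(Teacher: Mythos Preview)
Your proposal is correct and follows essentially the same route as the paper: the same telescopic split into the $\widehat m_T$-error and the $\widehat\lambda_T$-error, the mean-value estimate for $e^{\widehat\lambda_T y}-e^{\lambda_B y}$, the support argument to bound the weights, the reduction to the tightness of $\mathcal E^T(\mathring{\mathcal T}_T,|g_{\widehat h_T}|)$, and the use of Proposition~\ref{lem: kolmo} on the high-probability event $\{\widehat h_T\in[h_T(\beta)(1\pm\kappa T^2v_T(B))]\}$ together with Proposition~\ref{prop: conv lambda} to control the complement. The one inaccuracy is your remark that the stochastic remainder $O_{\mathbb P}(v_T(B)h_T(\beta)^{-\varpi_B/2})=O_{\mathbb P}(w_T(B))$ is $o_{\mathbb P}(1)$ ``precisely under $\beta>1$'': in fact $w_T(B)\to 0$ iff $\beta>(\lambda_B/\rho_B-1)_+/2$, which is automatic when $\lambda_B\le\rho_B$, needs only $\beta>1/2$ when $\rho_B\le\lambda_B\le 2\rho_B$, and may require more than $\beta>1$ when $\lambda_B>2\rho_B$; the paper sidesteps this by bounding $Th_T(\beta)^{\varpi_B/2}\mathcal E^T(\mathring{\mathcal T}_T,|g_{\widehat h_T}|)$ directly (its term $IV$ is $O_{\mathbb P}(Tv_T(B))$, its term $III$ is $O(Th_T(\beta)^{\varpi_B/2})$, and both vanish unconditionally).
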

\begin{proof}
For $u \in \mathring{\mathcal T}_T$ and its lifetime $\zeta_u$, define
$$\gamma_T(u)=w_T(B)^{-1}\big(\widehat m_T^{-1}e^{\widehat\lambda_T\zeta_u}-m^{-1}e^{\lambda_B\zeta_u}\big)G_{T}(\zeta_u).$$
Lemma~\ref{lem: degager est preli} amounts to show that $|\mathring{\mathcal T}_T|^{-1} \sum_{u \in \mathring{\mathcal T}_T}\gamma_T(u)$ is $\mathcal B$-tight. Set $h_T(\beta) = \exp(-\lambda_B \tfrac{1}{2\beta+1}T)$ and note that
$$w_T(B)^{-1}=(T^{-1/2})^{{\bf 1}_{\{\lambda_B=2\rho_B\}}}e^{\min\{\rho_B,\lambda_B/2\}T}h_T(\beta)^{\varpi_B/2}=v_T(B)^{-1}h_T(\beta)^{\varpi_B/2},$$
where $\varpi_B = \min\{\max\{1,\lambda_B/\rho_B\}, 2\}$. We first treat the case $G_T(y)=g_{\widehat h_T}(y)$.\\

\noindent {\it Step 1.} By Proposition~\ref{prop: conv lambda}, we have
$$\widehat \lambda_T = \lambda_B + Tv_T(B)r_T\;\;\text{and}\;\;\widehat m_T^{-1} = m^{-1} + e^{-\lambda_BT/2} r'_T,$$
where both $r_T$ and $r'_T$ are $\mathcal B$-tight. We then have the decomposition
\begin{align*}
\gamma_T(u) & = w_T(B)^{-1}\widehat m_T^{-1}(e^{\widehat \lambda_T\zeta_u}-e^{\lambda_B \zeta_u})g_{\widehat h_T}(\zeta_u)+w_T(B)^{-1}(\widehat m^{-1}_T-m^{-1})e^{\lambda_B\zeta_u}g_{\widehat h_T}(\zeta_u) \\
 & = Th_T(\beta)^{\varpi_B/2}\widehat m_T^{-1}r_T \zeta_ue^{\vartheta_T\zeta_u}g_{\widehat h_T}(\zeta_u)+w_T(B)^{-1}e^{-\lambda_BT/2}e^{\lambda_B\zeta_u}r'_Tg_{\widehat h_T}(\zeta_u) \\
& = I + II,
 \end{align*}
say, with $\vartheta_T \in [\min\{\lambda_B,\widehat \lambda_T\}, \max\{\lambda_B,\widehat \lambda_T\}]$.
%
%
Since $g \in \mathcal C^1 \subset \mathcal C$ and $\widehat m^{-1}_T$, $\vartheta_T$ and $\widehat h_T$ are $\mathcal B$-tight, we can write
$$|I| \leq T h_T(\beta)^{\varpi_B/2} \widehat m^{-1}_T r_T (C\widehat h_T +x) e^{\vartheta_T (C \widehat h_T + x)} |g_{\widehat h_T}(\zeta_u)|  = T h_T(\beta)^{\varpi_B/2} |g_{\widehat h_T}(\zeta_u)|  \widetilde r_T$$
and 
$$|II| \leq h_T(\beta)^{\varpi_B/2}e^{\lambda_B (C\widehat h_T + x)}r'_T|g_{\widehat h_T}(\zeta_u)|  
= h_T(\beta)^{\varpi_B/2} |g_{\widehat h_T}(\zeta_u)| \widetilde r'_T$$
where $\widetilde r_T$ and $\widetilde r'_T$ are $B \in \mathcal B$-tight. \\

\noindent{\it Step 2}.  We are left to proving the tightness of $T h_T(\beta)^{\varpi_B/2}|g_{\widehat h_T}(\zeta_u)|$ when averaging over $\mathring{\mathcal T}_T$ that is to say the tightness of $T h_T(\beta)^{\varpi_B/2}\mathcal E^T(\mathring{\mathcal T}_T, |g_{\widehat h_T}|)$. We plan to use Proposition~\ref{lem: kolmo}.  For $\kappa >0$, on the event
$${\mathcal A}_{T,\kappa} = \big\{\widehat h_T \in \mathcal I_{T,\kappa}\big\},\;\;\mathcal I_{T, \kappa} = \big[h_T(\beta)(1\pm \kappa T^2 v_T(B))\big],$$
we have
$$T h_T(\beta)^{\varpi_B/2}\mathcal E^T(\mathring{\mathcal T}_T, |g_{\widehat h_T}|) \leq III + IV,$$
with
$$III= Th_T(\beta)^{\varpi_B/2}\sup_{h \in \mathcal I_{T,\kappa}}\mathring{\mathcal E}_B(|g_{h}|)$$
and
\begin{align*}
IV & = Th_T(\beta)^{\varpi_B/2} \big(h_T(\beta)(1-\kappa T^2 v_T(B))\big)^{-\varpi_B/2} \sup_{h \in \mathcal I_{T,\kappa}}\big|\mathcal E^T\big(\mathring{\mathcal T}_T,h^{\varpi_B/2}|g_{h}|) - \mathring{\mathcal E}_B(h^{\varpi_B/2}|g_{h}|)\big| \\
& \leq T\sup_{h \in \mathcal I_{T,\kappa}}\big|\mathcal E^T\big(\mathring{\mathcal T}_T,h^{\varpi_B/2}|g_{h}|) - \mathring{\mathcal E}_B\big(h^{\varpi_B/2}|g_{h}|\big)\big|.
\end{align*}
Concerning the main term $III$, we write
\begin{align*}
\mathring{\mathcal E}_B(|g_{h}|) & = m\int_0^\infty h^{-1}|g\big(h^{-1}(x-y)\big)|e^{-\lambda_By}f_B(y)dy 
& \leq m\sup_y\big(e^{-\lambda_By}f_B(y)\big) \int_0^\infty |g(y)|dy  \lesssim 1
\end{align*}
since $B \in \mathcal B$, so we have a bound that does not depend on $h$ and we readily conclude $III \lesssim 1$ on $\mathcal A_{T,\kappa}$. For the remainder term $IV$, we apply Proposition~\ref{lem: kolmo} and obtain the ${\mathcal B}$-tightness of $IV$ (that actually goes to $0$ at a fast rate) on $\mathcal A_{T, \kappa}$.\\

\noindent {\it Step 3}. It remains to control the probability of $\mathcal A_{T,\kappa}$. By Proposition~\ref{prop: conv lambda}, we have
$\widehat \lambda_T = \lambda_B + Tv_T(B) r_T$, where $r_T$ is $\mathcal B$-tight.  It follows that
\begin{align*}
\PP(\mathcal A_{T,\kappa}^c) & = \PP\big(|\widehat h_T-h_T(\beta)| \geq  \kappa h_T(\beta)T^2v_T(B) \big) \\
& = \PP\big(\big|1-e^{-(\widehat \lambda_T-\lambda_B)\frac{1}{2\beta+1}T}\big| \geq  \kappa T^2v_T(B)\big) \\
& = \PP\big(|\tfrac{1}{2\beta+1}r_Te^{-\vartheta_T \frac{1}{2\beta+1} T}| \geq \kappa\big)  
\end{align*}
where both $|\vartheta_T| \leq |\widehat \lambda_T-\lambda_B|$ and $r_T$ are tight, and this term can be made arbitrarily small by taking $\kappa$ large enough.\\

The case $G_T(y) = {\bf 1}_{\{y \leq x\}}$ is obtained in the same way and is actually much simpler, since there is no factor $\widehat h_T^{-1}$ in the Step 2 which is therefore straightforward and there is also no need for a Step 3. We omit the details.
\end{proof}
\begin{proof}[Proof of Theorem~\ref{thm: upper rate}] We are ready to prove the main result of the paper. The key ingredient is Proposition~\ref{lem: kolmo}.\\

\noindent {\it Step 1.}  In view of Lemma~\ref{lem: degager est preli} with test function $g=K$, it is now sufficient to prove Theorem~\ref{thm: upper rate} replacing $\widehat B_T(x)$ by $\widetilde B_T(x)$, where
$$\widetilde B_T(x) = \frac{\mathcal E^T\big(\mathring{\mathcal T}_T, m^{-1}e^{\lambda_B\cdot}K_{\widehat h_T}(x-\cdot)\big)}{1-\mathcal E^T(\mathring{\mathcal T}_T, m^{-1}e^{\lambda_B\cdot}{\bf 1}_{\{\cdot \leq x\}})}.$$
Since $(x,y) \leadsto x/(1-y)$ is Lipschitz continuous on compact sets that are bounded away from $\{y=1\}$, this 
simply amounts to show the $\mathcal B$-tightness of 
\begin{equation} \label{den final}
w_T(B)^{-1}\Big(\mathcal E^T(\mathring{\mathcal T}_T, m^{-1}e^{\lambda_B\cdot}{\bf 1}_{\{\cdot \leq x\}})- \mathring{\mathcal E}_B(m^{-1}e^{\lambda_B\cdot}{\bf 1}_{\{\cdot \leq x\}}\big)\Big)
\end{equation}
and
\begin{equation} \label{numfinal}
w_T(B)^{-1}\big(\mathcal E^T\big(\mathring{\mathcal T}_T, m^{-1}e^{\lambda_B\cdot}K_{\widehat h_T}(x-\cdot)\big)-f_B(x)\big),
\end{equation}
where $w_T(B)^{-1}=(T^{-1/2})^{{\bf 1}_{\{\lambda_B=2\rho_B\}}}e^{\min\{\rho_B,\lambda_B/2\}T}h_T(\beta)^{\varpi_B/2}=v_T(B)^{-1}h_T(\beta)^{\varpi_B/2}$. We readily obtain the $\mathcal B$-tightness of \eqref{den final} by applying Theorem~\ref{rate avant T} with test function $g(y)=m^{-1}e^{\lambda_B y}{\bf 1}_{\{y \leq x\}}$ since $v_T(B) \ll w_T(B)$ (we even have convergence to $0$).\\

\noindent {\it Step 2}. We turn to the main term \eqref{numfinal}. 
For $h>0$, introduce the notation
$$K_hf_B(x)= \mathring{\mathcal E}_B(m^{-1}e^{\lambda_B\cdot}K_{h}) = \int_0^\infty K_{h}(x-y)f_B(y)dy.$$
For $\kappa>0$ on the event
${\mathcal A}_{T,\kappa} = \big\{\widehat h_T \in \mathcal I_{T,\kappa}\big\}$ with $\mathcal I_{T,\kappa} = \big[h_T(\beta)(1\pm \kappa T^2 v_T(B))\big]$
introducing the 	approximation term $K_hf_B(x)$, we obtain a bias-variance bound that reads
\begin{align*}
\big|\mathcal E^T(\mathring{\mathcal T}_T, m^{-1}e^{\lambda_B\cdot}K_{\widehat h_T})-f_B(x)\big| \leq I+II,
\end{align*}
with
$$I = \sup_{h \in \mathcal I_{T,\kappa}}\big|K_hf_B(x)-f_B(x)\big|$$
and
$$II = \sup_{h \in \mathcal I_{T,\kappa}}\big|\mathcal E^T(\mathring{\mathcal T}_T, m^{-1}e^{\lambda_B\cdot}K_{h})- \mathring{\mathcal E}_B(m^{-1}e^{\lambda_B\cdot}K_{h})\big|.$$
The term $I$ is treated by the following classical argument in nonparametric estimation: since $B \in \mathcal H_\mathcal D^\beta(L)$ we also have $f_B \in \mathcal H_\mathcal D^\beta(L')$ for another constant $L'$ that only depends on $\mathcal D, L$ and $\beta$. Write $\beta=\lfloor \beta \rfloor+\{\beta\}$ with $\lfloor \beta \rfloor$ a non-negative integer, $\{\beta\}>0$. By a Taylor expansion up to order $\lfloor \beta \rfloor $ (recall that the number $n_0$ of vanishing moments of $K$ in Assumption~\ref{kernel} satisfies $n_0>\beta$), we obtain
$$
I \lesssim \sup_{h \in \mathcal I_{T,\kappa}}h^\beta = \big( h_T(\beta)(1 + \kappa T^2 v_T(B)\big)^{\beta} \lesssim w_T(B)
$$
see for instance, Proposition 1.2 in Tsybakov~\cite{Tsybakov}. This term has the right order whenever $\lambda_B \leq \rho_B$ and is negligible otherwise.\\

\noindent {\it Step 3}. We further bound the term $II$ on ${\mathcal A}_{T, \kappa}$ as follows:
\begin{equation*} 
|II| \leq \big(h_T(\beta)(1-\kappa T^2v_T(B))\big)^{-\varpi_B/2} \sup_{h \in \mathcal I_{T,\kappa}}\big|\mathcal E^T(\mathring{\mathcal T}_T, h^{\varpi_B/2}m^{-1}e^{\lambda_B\cdot}K_{h})- \mathring{\mathcal E}_B(h^{\varpi_B/2}m^{-1}e^{\lambda_B\cdot}K_{h})\big|.
\end{equation*}
By assumption, we have $\beta \geq 1/2$, so by Proposition~\ref{lem: kolmo} applied to $f(y)=m^{-1}e^{\lambda_By}{\bf 1}_{\{y \leq x+C\}} \in \mathcal L_{C+x}$ and $g=K \in \mathcal C_{C+x}$ we conclude that $v_T(B)^{-1}h_T(\beta)^{\varpi_B/2}|II|$ is $\mathcal B$-tight. The fact that $v_T(B)^{-1}h_T(\beta)^{\varpi_B/2}=w_T(B)^{-1}$ enables us to conclude.\\

\noindent {\it Step 4}. It remains to control the probability of ${\mathcal A}_{T, \kappa}$. This is done exactly in the same way as for Step 3 in the proof of Lemma~\ref{lem: degager est preli}.
\end{proof}

\subsection{Proof of Theorem~\ref{thm: borne inf}}

We will prove actually a slightly stronger result, by restricting the supremum in $B$ over a neighbourhood of an arbitrary function $B_0$, provided $B_0$ is an element of the set $\mathbb B_{b,m}$ defined in \eqref{def top smooth} and slightly smoother in $\mathcal H_\mathcal D^\beta$ norm (and not identically equal to the maximal element of $\mathbb B_{b,m}$). (Remember also that $\mathbb B_{b,m} \subset \mathcal B^+_{b,m/(m-1)}$ by Proposition \ref{classe B plus}.) 

\vip

Remember that the evolution of the Bellman-Harris model can be described by a piecewise deterministic Markov process 
$$
X(t)=\big(X_1(t), X_2(t),\ldots\big),\quad t\geq 0
$$
with values in $\mathcal S=\bigcup_{k \geq 1}[0,\infty)^k $ and where the $X_i(t)$ denote the (ordered) ages of the living particles at time $t$.
Following L\"ocherbach~\cite{eva1}, we set $\mathbb{D}([0,\infty),\mathcal S)$ for the Skorokhod space of c\`adl\`ag functions $\varphi: [0,\infty) \rightarrow \mathcal S$ and introduce the subset $\Omega \subset \mathbb{D}([0,\infty),\mathcal S)$ of functions $\varphi$ such that:
\begin{enumerate}
\item[(i)] There is an increasing sequence of jump times $T_0 = 0 < T_1 < T_2 < \cdots$ such that the restriction $\varphi_{\big|[T_k,T_{k+1})}$ is continuous with values in $[0,\infty)^{l_{k,\varphi}}$ for some $l_{k,\varphi}\geq 0$ and every $k \geq 0$.
\item[(ii)]  We have $\ell\big( \varphi(T_k) \big) \neq \ell\big( \varphi(T_{k+1}) \big)$ for every $k\geq 0$, where we set $\ell(x)= \sum_{k \geq 0} k {\bf 1}_{\{x \in [0,\infty)^k\}}$ for $x \in \mathcal S$.
\end{enumerate}
We endow $\Omega$ with its Borel sigma-field $\mathcal F$, its canonical process $X_t(\varphi) = (\varphi_1(t), \varphi_2(t),\ldots)$ and its canonical filtration $(\mathcal{F}_t)_{t\geq 0}$ (modified in order to be right-continuous).
By Proposition 3.3 of L\"ocherbach~\cite{eva1}, there is a unique probability measure $\mathbb P_B$ on $(\Omega, \mathcal F, (\mathcal F_t)_{t \geq 0})$ such that $X$ is strongly Markov under $\mathbb P_B$ with $\PP_B(X(0)=0)=1$ ({\it i.e.} we start with one common ancestor with age $0$ at time $0$) and such that the random continuous time rooted tree associated to $X$ via 
$$\sum_{i \geq 1}{\bf 1}_{\{X_i(t)>0\}}\delta_{X_i(t)} = \sum_{u \in \mathcal T}{\bf 1}_{\{t \in [b_u,d_u)\}}\delta_{t-b_u}$$
is a Harris-Bellman process according to Definition~\ref{BellmanHarris}. The strategy for proving the lower bound is a classical two point information inequality: we nevertheless need to be careful since the target lower bound rate $e^{-\lambda_B\frac{\beta}{2\beta+1}T}$ is parameter dependent in a non-trivial way.

\vip

\noindent {\it Step 1.} Let $\delta >0$. Fix $B_0 \in \mathbb B_{b,m} \cap \mathcal H_\mathcal D^\beta(L-\delta)$ and $x \in \mathcal D$. Then, for large enough $T$, setting $h_T(B)=e^{-\lambda_B\frac{1}{2\beta+1}T}$, we construct a perturbation $B_T$ of $B_0$ defined by
$$B_T(y) = B_0(y) + ah_T(B_0)^{\beta+1} K_{h_T(B_0)}\big(y-x\big), \quad y \in [0,\infty),$$
for some nonnegative smooth kernel $K$ with compact support such that $K(0)=1$ and for some $a=a_{\delta, K} >0$ chosen in such a way that $B_T \in \mathbb B_{b,m} \cap \mathcal H^\beta_{\mathcal D}(L)$ for every $T \geq 0$. Such a choice is always possible (if $B_0 \neq \max\{C,1\}$ identically in a neighbourhood of $x$, which we may and will assume from now on) thanks to the assumption $\|B_0\|_{{\mathcal H}^\beta_{\mathcal D}}\leq L-\delta$; it suffices then to impose $\|ah_T^{\beta+1} K_{h_T}(\cdot-x)\|_{{\mathcal H}^\beta_{\mathcal D}} \leq \delta$ which is easily obtained by picking $a_{\delta, K}$ sufficiently small. 

Also, by construction, we have $B_0(y) \leq B_T(y)$ for every $y\geq0$ hence $\lambda_{B_0} \leq \lambda_{B_T}$, compare the proof of Proposition~\ref{prop: convergence semigroupe} (ii) and at $y=x$, the lower estimate $|B_0(x)-B_T(x)| = a_{\delta, K}h_T^{\beta}(B_0)$ holds, and this quantity is of order $e^{-\lambda_{B_0}\frac{\beta}{2\beta+1}T}$.

\vip

\noindent {\it Step 2.} Abusing notation slightly, we further write $\PP_B$ for ${\PP_B}_{| \mathcal F_T}$, {\it i.e.} the measure in restriction to the $\sigma$-field generated by the observation $(X(t))_{0 \leq t \leq T}$. Since $B_0,B_T \in \mathbb B_{b,m} \cap \mathcal H^\beta_{\mathcal D}(L)$, for an arbitrary estimator $\widehat B_T(x)$ and any constant $C'>0$ the maximal risk is bounded below by
\begin{align*}
& \max_{B \in \{B_0, B_T\}}\PP_B\big(e^{\lambda_B\frac{\beta}{2\beta+1}T}|\widehat B_T(x)-B(x)| \geq C'\big) \\
\geq  &\tfrac{1}{2}\left(\PP_{B_0}\big(e^{\lambda_{B_0}\frac{\beta}{2\beta+1}T}|\widehat B_T(x)-B_0(x)|\geq C'\big)+\PP_{B_T}\big(e^{\lambda_{B_T}\frac{\beta}{2\beta+1}T}|\widehat B_T(x)-B_T(x)|\geq C'\big)\right)\\
\geq &\, \tfrac{1}{2}\,\E_{B_0}\left[{\bf 1}_{\big\{e^{\lambda_{B_0}\frac{\beta}{2\beta+1}T}|\widehat B_T(x)-B_0(x)|\geq C'\big\}}+{\bf 1}_{\big\{e^{\lambda_{B_T}\frac{\beta}{2\beta+1}T}|\widehat B_T(x)-B_T(x)|\geq C'\big\}}\right]-\tfrac{1}{2}\|\PP_{B_0}-\PP_{B_T}\|_{TV}.
\end{align*}
By triangle inequality, we have
\begin{align*}
& e^{\lambda_{B_0}\frac{\beta}{2\beta+1}T}|\widehat B_T(x)-B_0(x)|+e^{\lambda_{B_T}\frac{\beta}{2\beta+1}T}|\widehat B_T(x)-B_T(x)| \\
\geq & e^{\min\{\lambda_{B_0}, \lambda_{B_T}\}\tfrac{\beta}{2\beta+1}T}|B_0(x)-B_T(x)| \geq a_{K,\delta}
\end{align*} 
by Step 1, so if we pick $C' < a_{K, \delta}/2$, one of the two indicators within the expectation above must be equal to one with full $\PP_{B_0}$-probability. In that case
$$\max_{B \in \{B_0, B_T\}}\PP_B\big(e^{\lambda_B\frac{\beta}{2\beta+1}T}|\widehat B_T(x)-B(x)| \geq C'\big) \geq \tfrac{1}{2}(1-\|\PP_{B_0}-\PP_{B_T}\|_{TV})$$
and Theorem~\ref{thm: borne inf} is thus proved if 
$\limsup_{T \rightarrow \infty}\|\PP_{B_0}-\PP_{B_T}\|_{TV}<1$.

\noindent {\it Step 3.} By Pinsker's inequality, we have
$\|\PP_{B_0}-\PP_{B_T}\|_{TV} \leq \frac{\sqrt{2}}{2}\,\Big(\E_{B_0}\Big[\log\frac{d\PP_{B_0}}{d\PP_{B_T}}\Big]\Big)^{1/2}
$. By Theorem~3.5 in~\cite{eva1}, the measures $\PP_{B_0}$ and $\PP_{B_T}$ are equivalent on $\mathcal F_T$ and we have
$$
\log\Big(\frac{d\mathbb{P}_{B_T}}{d\mathbb{P}_{B_0}}\Big) = \sum_{u \in  \mathring{\mathcal{T}}_T} \log\bigg(\frac{B_T}{B_0}(\zeta_u)\bigg) - \int_0^T \sum_{u \in \partial \mathcal{T}_s} (B_T-B_0)(\zeta_u^s) \, ds,
$$
where $\zeta_u^t$ denotes the age of the cell $u$ at time $t \in I_u=[b_u,d_u)$. Using $-\log(1+x) \leq x^2 - x$ if $x \geq -1/2$ and setting $\varepsilon_T(y) = a_{K,\delta}h_T(B_0)^{\beta+1} K_{h_T(B_0)}\big(y-x\big)$, we further infer
\begin{align*}
\| \PP_{B_0} - \PP_{B_T} \|_{TV}^2 
& \leq  \frac{1}{2} \big(  \E_{B_0}\big[  \sum_{u \in  \mathring{\mathcal{T}}_T} \frac{\varepsilon_T^2}{B_0^2}(\zeta_u) \big] - \E_{B_0}\big[  \sum_{u \in  \mathring{\mathcal{T}}_T} \frac{\varepsilon_T}{B_0}(\zeta_u) \big]  +  \int_0^T  \E_{B_0}\big[  \sum_{u \in \partial \mathcal{T}_s} \varepsilon_T(\zeta_u^s) \big] ds \big)  \\
& = \frac{1}{2m} \int_0^T e^{\lambda_{B_0} s}\, \E_{B_0}\Big[ \frac{\varepsilon_T^2}{B_0^2}(\widetilde \chi_s) H_{B_0}(\widetilde \chi_s) \Big] ds
\end{align*}
by  \eqref{Mt1 bord} and \eqref{Mt1 int} in Proposition~\ref{prop: Mt1} and the fact that the last two terms cancel. We now use the same kind of estimates as in the proof of Proposition~\ref{lem: vanishing test}, Step 1 with test function $g = \varepsilon_T/B_0$ to finally get
\begin{align*}
\| \PP_{B_0} - \PP_{B_T} \|_{TV}^2 & \lesssim e^{\lambda_{B_0} T} \big| B_0^{-1}\varepsilon_T\big|_2^2 + \big| B_0^{-1}\varepsilon_T \big|_{\infty}^2
\lesssim a_{K,\delta}^2
\end{align*}
and this term can be made arbitrarily small by picking $a_{K,\delta}$ small enough.

\subsection{Proof of Proposition \ref{classe B plus}} Pick  $B \in \mathbb B_{b,m}$. We need to prove that $\lambda_B \leq \rho_B = \inf_x H_B(x)$. 
By representation \eqref{hazard formula}, we have 
\begin{align*}
H_B(x) & =\frac{me^{-\lambda_B x}f_B(x)}{1-m\int_0^x e^{-\lambda_By}f_B(y)dy} \\
& = \frac{me^{-\lambda_B x}B(x)e^{-\int_0^x B(y)dy}}{1-m\int_0^x e^{-\lambda_By}B(y)e^{-\int_0^y B(u)du}dy}.
\end{align*}
Set 
$$G_B(x) = me^{-\lambda_B x}B(x)e^{-\int_0^x B(y)dy}- \lambda_B\big(1-m\int_0^x e^{-\lambda_By}B(y)e^{-\int_0^y B(u)du}dy\big).$$
The statement  $\lambda_B \leq \rho_B$  is equivalent to proving that $\inf_{x \geq 0}G_B(x) \geq 0$.  We first claim that 
$$B(x) \leq \widetilde B(x)\;\;\text{for every}\;\;x\in (0,\infty)\;\;\text{implies}\;\;\lambda_B \leq \lambda_{\widetilde B}.$$
Indeed, in that case, one can construct on the same probability space two random variables $\tau_B$ with density $f_B$ and $\tau_{\widetilde B}$ with density $f_{\widetilde B}$ such that $\tau_B \geq \tau_{\widetilde B}$. It follows that $\phi_B(\lambda)=\E[e^{-\lambda \tau_B}] \leq \phi_{\widetilde B}(\lambda) = \E[e^{-\lambda \tau_{\widetilde B}}]$ for every $\lambda \geq 0$. Also, $\phi_B$ and $\phi_{\widetilde B}$ are both non-increasing, vanishing at infinity, and $\phi_B(0)=\phi_{\widetilde B}(0)=1 > \tfrac{1}{m}$. Consequently, the values $\lambda_B$ and $\lambda_{\widetilde B}$
such that $\phi_B(\lambda_B)=\phi_{\widetilde B}(\lambda_{\widetilde B}) = \tfrac{1}{m}$ necessarily satisfy $\lambda_B \leq \lambda_{\widetilde B}$ hence the claim.
Now, for constant functions $B(x)= \alpha$, we clearly have $\lambda_B=(m-1)\alpha$ and this enables us to infer
$$\lambda_B \leq (m-1) \sup_xB(x).$$
Remember now that $B \in \mathbb B_{b,m}$ implies $b \leq B(x) \leq \tfrac{m}{m-1}b$ for every $x \geq 0$.  Therefore
\begin{equation} \label{controle lambda}
\lambda_B \leq (m-1) \frac{m}{m-1} b = m b \leq m B(0)
\end{equation}
and  $G_B(0)=mB(0) - \lambda_B \geq 0$ follows. Moreover, one readily checks that
$$G'_B(x) = m e^{-\lambda_B x} e^{-\int_0^xB(y)dy}\big(B'(x)-B(x)^2\big) \leq 0$$
since $B'(x)-B(x)^2\leq 0$ as soon as $B \in \mathbb B_{b,m}$. So $G_B$ is non-increasing, $G_B(0)\geq 0$ and its infimum is thus attained for $x \rightarrow \infty$. Since $G_B(\infty)=0$, we conclude $\inf_{x \geq 0}G_B(x) \geq 0$.

We finally briefly indicate how to show that $\mathcal B_{b,C}^-$ is non-trivial when $C>mb/(m-1)$. To that end, pick $0<x_0 \leq x_1$, $mb/(m-1) < c \leq C$ and let 
$B(x)=b$ for $x \leq x_0$, $B(x)=c$ for $x \geq x_1$ and any smooth continuation between $x_0$ and $x_1$ bounded above by $C$ and below by $b$. Then, having $b,c$ such that $2m(m+2)b/(m-1)<c$ and suitable choices for $x_0$ and $x_1$ implies $\rho_B < \lambda_B/2$. Having $2mb/(m-1)>c$ and suitable choices for $x_0,x_1$ implies $\rho_B < \lambda_B \leq 2\rho_B$. The computations, based on the same kind of estimates, are rather tedious but not difficult. We omit the details.

\section{\textsc{Appendix}} \label{sec: appendix}

\subsection{Heuristics for the convergences to the limits \eqref{def limite int} and \eqref{def limite bord}}

\subsubsection*{Information from $\mathcal E^T(\partial {\mathcal T}_T, g)$}  \label{heuristic}
 Heuristically, we postulate for large $T$ the approximation
$$\mathcal E^T(\partial {\mathcal T}_T,g) \sim 
\frac{1}{\E[|\partial {\mathcal T}_T|]}\E\Big[\sum_{u \in \partial {\mathcal T}_T}g(\zeta_u^{T})\Big].$$
Then, a classical result  based on renewal theory (see Theorem 17.1 pp 142-143 of \cite{Harris}) gives the estimate
\begin{equation} \label{approx temp}
\E\big[|\partial{\mathcal T}_T|\big] \sim \kappa_{B}e^{\lambda_{B}T},
\end{equation}
where $\lambda_{B}>0$ is the Malthus parameter defined in \eqref{def malthus} and $\kappa_{B}>0$ is an explicitly computable constant (that also depends on $m$, see~\cite{Harris} and also Lemma~\ref{estimate moment un} below). As for the numerator,
call $\chi_t$ the age of a particle at time $t$ along a branch of the tree picked at random uniformly at each branching event. The process $(\chi_t)_{t\geq 0}$ is  Markov process with values in $[0,\infty)$ with infinitesimal generator 
\begin{equation} \label{def generator}
\mathcal A_B g(x) = g'(x)+B(x)\big(g(0)-g(x)\big)
\end{equation}
densely defined on continuous functions vanishing at infinity. Assume for simplicity that each cell $u \in \mathcal U$ has exactly $m$ children at each division. 
It is then relatively straightforward to obtain the identity
\begin{equation} \label{firstMtO}
\E\Big[\sum_{u \in \partial {\mathcal T}_T}g(\zeta_u^{T})\Big] = \E\big[m^{N_T}g(\chi_T)\big],
\end{equation}
where $N_t = \sum_{s \leq t}{\bf 1}_{\{\chi_s-\chi_{s_-} < 0\}}$ is the counting process associated to $(\chi_t)_{t \geq 0}$, see Proposition~\ref{prop: Mt1} in a general setting. Putting together \eqref{approx temp} and \eqref{firstMtO}, we thus expect
$$\mathcal E^T(\partial {\mathcal T}_T,g) \sim \kappa_{B}^{-1}e^{-\lambda_{B}T}\E\big[m^{N_T}g(\chi_T)\big],$$
and we anticipate that the term $e^{-\lambda_B T}$ should somehow be compensated by the term $m^{N_T}$ within the expectation. To that end, following Cloez~\cite{cloez} (and also in Bansaye {\it et al.}~\cite{BDMT} when $B$ is constant) one introduces an auxiliary ``biased" Markov process $(\widetilde \chi_t)_{t\geq 0}$, with generator $\mathcal A_{H_B}$ for a biasing function $H_B(x)$ 
characterised by
\begin{equation} \label{characterisation H_B}
f_{H_B}(x) 
= me^{-\lambda_Bx}f_B(x),\;x \geq 0,
\end{equation}
where 
$f_B(x)=B(x)\exp(-\int_0^x B(y)dy)$ denotes the density associated to the division rate $B$, as follows from \eqref{hazard formula} or \eqref{def B}. This implies
$$
H_B(x)=\frac{me^{-\lambda_Bx}f_B(x)}{1-m\int_0^xe^{-\lambda_By}f_B(y)ds}.
$$
Furthemore, this choice (and this choice only, see Proposition~\ref{prop: Mt1}) enables us to obtain 
\begin{equation} \label{first HB}
e^{-\lambda_{B}T}\E\big[m^{N_T}g(\chi_T)\big] = m^{-1}\E\Big[g(\widetilde \chi_T)B(\widetilde \chi_T)^{-1} H_B(\widetilde \chi_T)\Big]
\end{equation}
with $\widetilde \chi_0=0$ under $\PP$. Moreover $(\widetilde \chi_t)_{t \geq 0}$ is geometrically ergodic, with invariant probability $c_B\exp(-\int_0^xH_B(y)dy)dx$ (see Proposition~\ref{prop: convergence semigroupe}). We further anticipate  
\begin{align*}
\E\Big[g(\widetilde \chi_T) B(\widetilde \chi_T)^{-1} H_B(\widetilde \chi_T)\Big] & \sim c_B\int_0^\infty g(x)B(x)^{-1}H_B(x)e^{-\int_0^xH_B(y)dy}dx \\
& = mc_B\int_0^\infty g(x)e^{-\lambda_B x}B(x)^{-1}f_B(x)dx 
\end{align*}
assuming everthing is well-defined, since $H_B(x)\exp(-\int_0^xH_B(y)dy)=f_{H_B}(x)=me^{-\lambda_B}f_B(x)$ by \eqref{characterisation H_B}.
Finally, we have $\kappa_B^{-1}c_B = \lambda_B\frac{m}{m-1}$ by Lemma~\ref{estimate moment un} which enables us to conclude
$$
\mathcal E^T(\partial {\mathcal T}_T,g) \sim \partial \mathcal E_B(g),
$$ 
where
$$
 \partial \mathcal E_B(g) = \lambda_B\frac{m}{m-1} \int_0^\infty g(x)e^{-\lambda_Bx} e^{-\int_0^x B(y)dy} dx. 
$$
Unfortunately, the statistical information extracted from $\mathcal E^T(\partial {\mathcal T}_T, g)$ does not enable us to obtain classical optimal rates of convergence, since the form of $\partial \mathcal E_B(g)$ involves an antiderivative of $B$ leading to so-called ill-posedness. This is discussed at length in Section~\ref{discussion}. We thus investigate in a second step the statistical information we can get from $\mathring {\mathcal T}_T$.

\subsubsection*{Information from $\mathcal E^T(\mathring {\mathcal T}_T, g)$} The situation is a bit different if we allow for data in  $\mathring {\mathcal T}_T$. 
Note first that $\zeta_u^T=\zeta_u$ on the event $u \in \mathring{\mathcal{T}}_T$.
We also have in that case a many-to-one formula that now reads 
\begin{equation} \label{many-to-one cloez}
\E\Big[\sum_{u \in \mathring{\mathcal T}_T}g(\zeta_u^T)\Big]  = \E\Big[\sum_{u \in \mathring{\mathcal T}_T}g(\zeta_u)\Big] = m^{-1}\int_0^T e^{\lambda_B s}\E\big[g(\widetilde \chi_s)H_B(\widetilde \chi_s)\big]ds,
\end{equation}
where $(\widetilde \chi_t)_{t \geq 0}$ is the one-dimensional auxiliary Markov process 
with generator $\mathcal{A}_{H_B}$, see \eqref{def generator}, where $H_B$ is characterised by \eqref{characterisation H_B} above. Assuming again ergodicity, we approximate the right-hand side of \eqref{many-to-one cloez} and obtain
\begin{align*} 
\E\Big[\sum_{u \in \mathring{\mathcal T}_T}g(\zeta_u)\Big]  & \sim c_Bm^{-1} \frac{e^{\lambda_BT}}{\lambda_B} \int_0^\infty g(x)H_B(x)e^{-\int_0^x H_B(u)du}dx \\
 & = c_B \frac{e^{\lambda_BT}}{\lambda_B} \int_0^\infty g(x)e^{-\lambda_B x}f_{B}(x)dx \nonumber 
\end{align*}
since $H_B(x)\exp(-\int_0^x H_B(y)dy)=f_{H_B}(x)=me^{-\lambda_Bx}f_B(x)$ by \eqref{characterisation H_B}.
We again have an approximation of the type  \eqref{approx temp} with another constant $\kappa_B'$, see Lemma~\ref{estimate moment un avant T}  and we eventually expect
$$
\mathcal E^T(\mathring{\mathcal T}_T, g)\sim \mathring{\mathcal E}_B\big(g\big), 
$$
where
$$\mathring{\mathcal E}_B\big(g\big)  = \frac{c_B}{\lambda_B\kappa_B'}\int_0^\infty g(x)e^{-\lambda_B x}f_{B}(x)dx =  m\int_0^\infty g(x)e^{-\lambda_B x}f_{B}(x)dx$$
as $T \rightarrow \infty$, where the last equality stems from the identity $c_B=\lambda_B\kappa'_Bm$ that can be readily derived by picking $g=1$ and using \eqref{characterisation H_B} together with the fact that $f_{H_B}$ is a density function.

\subsection{Proof of Proposition~\ref{prop: Mt1}}
We start with a continuous time rooted tree which is a Bellman Harris process in the sense of Definition~\ref{BellmanHarris}, so we have random variables $(\zeta_u, \nu_u, u \in \mathcal U)$ satisfying properties (i), (ii) and (iii) of the definition. For $u \in \mathcal U$, and $t\geq 0$, let 
$
\Lambda_t^u = \sum_{v \prec u(t)} \log(\nu_v), t \geq 0
$
denote the process that encodes the birth times and the numbers of children of the ancestors of $u$. 
Let $\vartheta = (\vartheta_k)_{k \geq 0}$ with  $\vartheta_k \in \mathcal U$  be such that 
$|\vartheta_k|=k$ for $k \geq 1$ (with $\vartheta_0=\varnothing$) and $\vartheta_k \preceq \vartheta_l$ for $k \leq l$. We associate to $\vartheta$ a counting process $(N_t)_{t \geq0}$ via the relationship
$$
b_{\vartheta_{N_t}} \leq t < d_{\vartheta_{N_t}}, \quad t \geq 0.
$$
This enables us to further obtain a ``tagged process of age" such that
$\chi_t = \zeta^t_{\vartheta_{N_t}}$ for $t \in I_{\vartheta_{N_t}}$ and
also a process $(\Lambda_t)_{t \geq 0}$ that encodes the genealogy of the tagged branch
$$
\Lambda_t = \sum_{k = 1}^{N_t} \log(\nu_{\vartheta_k}), \quad t \geq 0.
$$
\noindent {\it Step 1.} Let us pick $\vartheta$ at random along the genealogical tree ${\mathcal T}$. This means that if ${\mathcal H}_n$ denotes the sigma-field generated by $(\zeta_u, \nu_u, u \in \mathcal T, |u|\leq n)$, then on the event $\{ t \in I_u \}$ (\textit{i.e.} the particle $u$ is living at time $t$), we have (or rather, we set)
$$
{\mathbb P}\big(\vartheta_{N_t} = u \big| {\mathcal H}_{|u|} \big) = \prod_{v \prec u} \frac{1}{\nu_{v}} = e^{-\Lambda^u_t}. 
$$
It is not difficult to see that $(\chi_t)_{t \geq 0}$ is a Markov process with generator $\mathcal A_B$. 
By definition of $(\chi_t)_{t \geq 0}$ and $(\Lambda_t)_{t \geq 0}$, it follows that ${\mathbb E}[e^{\Lambda_T}g(\chi_T)] $ can be rewritten as
\begin{align*}
\sum_{u \in {\mathcal U}} {\mathbb E} [e^{\Lambda_T}g(\chi_T) {\bf 1}_{\{T \in I_u, u = \vartheta_{N_T}\}}] & = \sum_{u \in {\mathcal U}} {\mathbb E} [e^{\Lambda_T^u}g(\zeta_u^T) {\bf 1}_{\{T \in I_u, u = \vartheta_{N_T}\}}]  = \sum_{u \in {\mathcal U}} {\mathbb E} [g(\zeta_u^T) {\bf 1}_{\{T \in I_u\}}],
\end{align*}
where the last equality is obtained by conditioning with respect to ${\mathcal H}_{|u|}$. 

\vip

\noindent {\it Step 2}. For $j \geq 1$, let 
$
\tau_j = \inf \{ t \geq 0, N_t \geq j \} - \inf \{ t \geq 0, N_t \geq j-1\}
$
denote the durations between the jumps of $(\chi_t)_{t \geq 0}$, so that
$$
e^{\Lambda_T} g(\chi_T) = \sum_{k = 0}^\infty e^{\sum_{j = 1}^k \log(\nu_{\vartheta_j})} g(T - \sum_{j = 1}^k \tau_j) {\bf 1}_{\big\{ \sum_{j = 1}^k \tau_j \leq T < \sum_{j = 1}^{k+1} \tau_j\big\}}.
$$
By properties (i)-(iii) of Definition~\ref{BellmanHarris}, the $\tau_i$ are independent with common distribution $f_B(x)dx$, and independent of the $\nu_{\vartheta_k}$ that are independent with common distribution $(p_k)_{k \geq 1}$.
We thus infer that $ \E[e^{\Lambda_T}g(\chi_T)]$ is equal to 
\begin{align*}
 \sum_{k = 0}^{\infty}\sum_{h_j \geq 1, j \leq k} e^{\sum_{j=1}^k \log(h_j)} \prod_{j = 1}^k p_{h_j}\int_{[0,\infty)^{k+1}} g(T - \sum_{j = 1}^k t_j) {\bf 1}_{\{ \sum_{j = 1}^k t_j \leq T < \sum_{j = 1}^{k+1} t_j\}} \prod_{j = 1}^{k+1} f_B(t_j) dt_1\ldots dt_{k+1}.
\end{align*}
We set ${\overline F}_B(x) = 1- \int_0^x f_B(y)dy$ and $q_k = m^{-1} k p_k$, so that $(q_k)_{k\geq 1}$ defines a probability distribution. Using $f_{H_B}(x) = m e^{-\lambda_B x} f_B(x)$, we can rewrite the preceding formula so that 
\begin{multline*}
e^{-\lambda_B T}  \E\big[e^{\Lambda_T}g(\chi_T)\big] = 
\sum_{k = 0}^{\infty} \sum_{h_j \geq 1, j \leq k} \prod_{j = 1}^k q_{h_j}  \int_{[0,\infty)^{k}} g(T - \sum_{j = 1}^k t_j) {\bf 1}_{\{ T - \sum_{j = 1}^k   t_j\geq 0 \}} e^{-\lambda_B (T - \sum_{j = 1}^k t_j)}  \\
\times {\overline F}_B(T-\sum_{j = 1}^{k} t_j)   \prod_{j = 1}^{k} f_{H_B}(t_j) dt_1\ldots dt_{k}.
\end{multline*}

\vip

\noindent {\it Step 3}.
Putting $W_B(x) = m e^{-\lambda_B x}  {\overline F}_B(x)/{\overline F}_{H_B}(x)$, we finally obtain the representation
$$
e^{-\lambda_B T}  \E\big[e^{\Lambda_T}g(\chi_T)\big] = \frac{1}{m} \E\Big[g(\widetilde \chi_T) W_B(\widetilde \chi_T)\Big],
$$
where $(\widetilde \chi_t)_{t \geq 0}$ is a Markov process with generator $\mathcal A_{H_B}$ that can be constructed in the same way as  $(\chi_t)_{t \geq 0}$, substituting $f_B$ by $f_{H_B}$.  Straightforward computations give $W_B(x) = \frac{H_B(x)}{B(x)}$. Putting together all the three steps,  we have proved
$$\sum_{u \in {\mathcal U}} {\mathbb E} \big[g(\zeta_u^T) {\bf 1}_{\{T \in I_u}\big]= \E\big[e^{\Lambda_T}g(\chi_T)\big]  = \frac{e^{\lambda_BT}}{m}\E\Big[g(\widetilde \chi_T)\frac{H_B(\widetilde \chi_T)}{B(\widetilde \chi_T)}\Big].
$$
Noticing that $\sum_{u \in {\mathcal U}} {\mathbb E} [g(\zeta_u^T) {\bf 1}_{\{T \in I_u}]$ is nothing but $\E\big[\sum_{u \in \partial \mathcal T_T}g(\zeta_u^T)\big]$ establishes \eqref{Mt1 bord}.

\vip
 
\noindent {\it Step 4}. By definition of the set $\mathring {\mathcal T}_T$,
\begin{align*}
\E\Big[\sum_{u \in \mathring {\mathcal T}_T}g(\zeta_u)\Big]  & =  \sum_{u \in {\mathcal U}} \E\Big[g(\zeta_u) {\bf 1}_{\{b_u+\zeta_u\leq T\}} {\bf 1}_{\{u \in {\mathcal T}\}} \Big] .
\end{align*}
We denote by $\mathcal F_t$ the sigma-field generated by $(\zeta_u^s, u \in \partial \mathcal T_s, s \leq t)$ and we note that $d_u{\bf 1}_{\{u \in \mathcal T\}}$ is a stopping time for the filtration $(\mathcal F_t)_{t \geq 0}$. Conditioning w.r.t $\mathcal F_{b_u}$, using that the $\zeta_u$ are independent of  $\mathcal F_{b_u}$, we successively obtain
\begin{align*}
\E\Big[\sum_{u \in \mathring {\mathcal T}_T}g(\zeta_u)\Big]  & =  \sum_{u \in {\mathcal U}} \E\Big[{\bf 1}_{\{u \in {\mathcal T}\}} \int_0^{\infty} g(x) {\bf 1}_{\{b_u+x\leq T\}} B(x) e^{-\int_0^x B(y) dy}  dx \Big] \\
& =  \sum_{u \in {\mathcal U}} \E\Big[ {\bf 1}_{\{u \in {\mathcal T}\}}\int_0^{\infty} \Big( \int_0^y g(x) B(x) {\bf 1}_{\{b_u+x\leq T\}} dx \Big) B(y) e^{-\int_0^y B(z) dz}  dy\Big] \\
& =  \sum_{u \in {\mathcal U}} \E\Big[ {\bf 1}_{\{u \in {\mathcal T}\}} \int_0^{\zeta_u} g(x) B(x) {\bf 1}_{\{b_u+x\leq T\}}dx\Big] \\
 & =  \sum_{u \in {\mathcal U}} \E\Big[{\bf 1}_{\{u \in {\mathcal T}\}}  \int_{b_u}^{d_u} g(\zeta_u^s ) B(\zeta_u^s ) {\bf 1}_{\{s\leq T\}} ds \Big].
\end{align*}
using that  $\zeta_u^s = s - b_u$ for $s \in I_u$ in order to obtain the last equality.
Finally, observing that $\{ s\in I_u\} = \{u \in \partial {\mathcal T}_s\}$, we finally infer
\begin{align*}
\E\Big[\sum_{u \in \mathring {\mathcal T}_T}g(\zeta_u)\Big]  & =  \int_0^\infty \E\Big[  \sum_{u \in \partial {\mathcal T}_s} g(\zeta_u^s ) B(\zeta_u^s )\Big] {\bf 1}_{\{s\leq T\}} ds. 
\end{align*}
Using \eqref{Mt1 bord} completes the proof of \eqref{Mt1 int}. 

\subsection{Proof of \eqref{bansaye forks} of Proposition~\ref{prop: Mt1 forks} }
Whenever $(u,v) \in \mathcal{FU}$ there exist $w,\tilde{u}$ and $\tilde{v} \in \mathcal{U}$ together with  integers $i\neq j$, such that $u = wi\tilde{u}$ and $v = wj\tilde{v}$. Conditioning w.r.t $\mathcal{F}_{d_w}$, using the branching property between descendants of $w$ and the strong Markov property at time $d_w$, we have
\begin{align*}
\E\Big[\sum_{(u,v)\in \mathcal{FT} \cap \mathring {\mathcal T}_T^2} g(\zeta_u)g(\zeta_v)\Big] & = \sum_{(u,v) \in  \mathcal{FU}} \E\Big[ g(\zeta_u) {\bf 1}_{\{d_u < T\}} {\bf 1}_{\{u \in \mathcal{T}\}} g(\zeta_v) {\bf 1}_{\{d_v < T\}} {\bf 1}_{\{v \in \mathcal{T}\}} \Big] \\
& = \sum_{w\in \mathcal{U}} \sum_{i \neq j} \E\Big[ \E\big[ \sum_{\tilde{u}\in \mathcal{U}} g(\zeta_{wi\tilde{u}}) {\bf 1}_{\{d_{wi\tilde{u}} < T\}} {\bf 1}_{\{wi\tilde{u} \in \mathcal{T}\}} \big| \mathcal{F}_{d_w} \big]  \\
& \qquad \qquad \qquad \times \E \big[ \sum_{\tilde{v}\in \mathcal{U}} g(\zeta_{wj\tilde{v}}) {\bf 1}_{\{d_{wj\tilde{v}} < T\}} {\bf 1}_{\{wj\tilde{v} \in \mathcal{T}\}} \big| \mathcal{F}_{d_w} \big] \Big] \\
& =  \sum_{w\in \mathcal{U}} \sum_{i \neq j} \E\Big[  {\bf 1}_{\{wi \in \mathcal{T}, wj \in \mathcal{T}\}} \big( \E\big[ \sum_{u \in \mathcal{T}} g(\zeta_u) {\bf 1}_{\{d_u < T-t\}} \big] _{|t = d_w}\big)^2 {\bf 1}_{\{ d_w < T \}} \Big].
\end{align*}
Notice that $\{ wi \in \mathcal{T}, wj \in \mathcal{T} \} = \{ w \in \mathcal{T}\} \cap \{ i \leq \nu_w, j \leq \nu_w\}$, and $\nu_w$ is independent of $d_w$ and has distribution $(p_k)_{k \geq 1}$. We conclude by using \eqref{Mt1 int} of Proposition~\ref{prop: Mt1} (slightly generalized for test functions that depend on $d_u$ and $\zeta_u$).
Let us now turn to \eqref{lineage}.
For $u,v \in \mathcal T$ with $u \prec v$, we have $uiw=v$ for some $w \in \mathcal T$ and some integer $i$. It follows that
\begin{align*}
\E\big[\sum_{u,v \in \mathring{\mathcal T}_T, \atop u \prec v}g(\zeta_u)g(\zeta_v)\big] & = \sum_{u \in \mathcal U} \sum_{i} \E\Big[g(\zeta_u){\bf 1}_{\{d_u < T\}} {\bf 1}_{\{u \in \mathcal{T}\}} \E\big[\sum_{w\in \mathcal U}g(\zeta_{uiw}){\bf 1}_{\{d_{uiw} < T\}}{\bf 1}_{\{uiw \in \mathcal{T}\}}\,\big| \mathcal F_{d_u}\big]\Big] \\
& = \sum_{u \in \mathcal U} \sum_{i} \E\Big[g(\zeta_u){\bf 1}_{\{0 \leq d_u < T\}} {\bf 1}_{\{ui \in \mathcal{T}\}} \E\big[\sum_{w\in \mathcal T}g(\zeta_{w}){\bf 1}_{\{d_{w} < T-s\}}\big]_{|s = d_u} {\bf 1}_{\{ d_u < T \}}  \Big] 
\end{align*}
conditioning with respect to $\mathcal F_{d_u}$ on $\{d_u < T\}$ and applying the branching property.
Next, we have
$$\E\big[\sum_{w\in \mathcal T}g(\zeta_{w}){\bf 1}_{\{d_{w} < T-s\}}\big] = \E\big[\sum_{w\in \mathring{\mathcal T}_{T-s}}g(\zeta_{w})\big]=\frac{1}{m}\int_0^{T-s}e^{\lambda_Bt} P^t_{H_B}\big(gH_B\big)(0)dt$$
by \eqref{Mt1 int} of Proposition~\ref{prop: Mt1}. Since $\{ ui \in \mathcal T \}  = \{ i \leq \nu_u \}$, and $\nu_u$ is independent of $\zeta_u$ and $d_u$ and has distribution with expectation $m$, we obtain
\begin{align*}
\E\big[\sum_{u,v \in \mathring{\mathcal T}_T, \atop u \prec v}g(\zeta_u)g(\zeta_v)\big] & = \E\Big[\sum_{u \in \mathring{{\mathcal T}}_T} g(\zeta_u)  \int_0^{T-d_u}e^{\lambda_Bt} P^t_{H_B}\big(gH_B\big)(0)dt \Big] 
\end{align*}
and we conclude by using once more \eqref{Mt1 int} of Proposition~\ref{prop: Mt1} (slightly generalized for test functions that depend on $d_u$ and $\zeta_u$).
\subsection{Proof of Lemma~\ref{lem: renouvellement}}
Let $\tau$ denote the first jump time of the process $(\widetilde \chi_t)_{t \geq 0}$.  Conditioning on $\{\tau >t\}$ and applying the strong Markov property yields
\begin{align*}
P^t_{H_B}\big(gH_B\big)(0) &
 = g(t)H_B(t) \PP(\tau > t) + \int_0^t P_{H_B}^{t-u}\big(gH_B\big)(0)f_{H_B}(u)du.
\end{align*}
The function $t \leadsto u(t)=P^t_{H_B}\big(gH_B\big)(0)$ satisfies a renewal equation of the form $u=u_0+ u \star f_{H_B}$, with locally bounded initial condition $u_0=gH_B\PP(\tau > \cdot)$ and renewal distribution $f_{H_B}(y)dy$. Its unique solution is given by 
$$P^t_{H_B}\big(gH_B\big)(0)  = g(t)H_B(t) \PP(\tau > t)+\int_0^t g(t-s)H_B(t-s) \PP(\tau > t-s)d\E[\widetilde N_s],$$
where $\widetilde N_t = \sum_{s \leq t} {\bf 1}_{\{\widetilde \chi_s-\widetilde \chi_{s^-}<0\}}$ is the counting process associated to $(\widetilde \chi_t)_{t \geq 0}$. By construction, we have $\E[\widetilde N_t]=\E\big[\int_0^t H_B(\widetilde \chi_{s})ds\big]$ and $\PP(\tau>t) = \int_t^\infty f_{H_B}(y)dy = m\int_t^\infty e^{-\lambda_By}f_B(y)dy \leq me^{-\lambda_Bt}$, therefore 
$$|P^t_{H_B}\big(gH_B\big)(0)| \leq |g(t)|e^{-\lambda_B t} m|H_B|_\infty+ |H_B|_\infty^2\int_0^t |g(u)|du$$
and we obtain the desired estimate thanks to the fact that $H_B$ is uniformly bounded over $\mathcal B$. 

\vip

\noindent {\bf Acknowledgements.} We are grateful to V. Bansaye. M. Doumic for helpful discussion and comments. The illuminating coupling argument for proving Proposition~\ref{prop: convergence semigroupe} was indicated to us by N. Fournier. The suggestions of two referees helped to considerably improve a former version of this work. Part of this work was completed while M.H. was visiting Humboldt-Universit\" at zu Berlin. The research of M.H. is partly supported by the Agence Nationale de la Recherche, (Blanc SIMI 1 2011 project {\small CALIBRATION}).

\end{document}